\documentclass[aps,pra,showpacs,twoside,twocolumn,10pt]{revtex4-2}
\usepackage[colorlinks=true, citecolor=red, urlcolor=blue ]{hyperref}
\usepackage{times,epsfig,amssymb,amsfonts,amsmath,bm,subfigure,mathtools,amsthm,braket,soul,enumitem,color,physics,graphics,graphicx}
\usepackage[normalem]{ulem}
\usepackage{comment}
\usepackage{epstopdf}
\newtheorem{theorem}{Theorem}
\newtheorem{corollary}{Corollary}
\newtheorem{definition}{Definition}

\newtheorem{example}{Example}
\newtheorem{lemma}{Lemma}

\newtheorem{observation}{Observation}

\newcommand{\ph}[1]{{\color{blue}#1}}

\begin{document}

\title{
Entanglement-enhanced fluctuation-free daemonic ergotropy with random measurements
}

\author{Saptarshi Roy$^{1,2}$, Pritam Halder$^{3,4}$, Tamal Guha$^5$, Hyunseok Jeong$^2$}
\affiliation{\(^1\)Center for Quantum Engineering, Research, and Education, TCG CREST, Bidhan Nagar, Kolkata - 700091, India \\
\(^2\)NextQuantum Innovation Research Center, Department of Physics and Astronomy, Seoul National University, Seoul 08826, South Korea \\
\(^3\)Networked Quantum Devices Unit, Okinawa Institute of Science and Technology Graduate University, Okinawa, Japan \\
\(^4\)Harish-Chandra Research Institute, A CI of Homi Bhabha National Institute, Chhatnag Road, Jhunsi, Prayagraj - 211019, India
\\ \(^5\)Mathematical Institute, Slovak Academy of Sciences, \v{S}tef\'anikova 49,
814 73 Bratislava, Slovakia}

\begin{abstract}
Optimized daemonic ergotropy can make entanglement thermodynamically dispensable in measurement-assisted work extraction: for a fixed system marginal, quantum-classical states can reproduce the maximal work obtainable by optimizing the auxiliary measurement. We show that this equivalence is broken when the auxiliary measurement is randomized. For qudit-qubit quantum-classical states under Haar-random projective measurements, we derive upper bounds on the averaged daemonic gain and prove a gain-fluctuation trade-off, showing that any positive randomized gain necessarily entails measurement-induced fluctuations. In sharp contrast, a family of two-qubit entangled pure states attains the algebraic maximum of the gain allowed by a system marginal while remaining fluctuation-free for every measurement basis, yielding at least twice the gain achievable by any quantum-classical state with the same marginal. We further 
demonstrate that such conclusion holds when general two-qubit separable states are considered positioning randomized gain as a sufficient criterion for entanglement certification. Finally, we show that the entanglement advantage persists under partially randomized measurements sampled from a polar cap around the optimal basis. These results establish randomized daemonic ergotropy as a thermodynamic probe of entanglement and exhibit the connection between measurement-induced work fluctuations on the type of correlations: quantum entanglement vs classical.
\end{abstract}	

\maketitle

\section{Introduction}
\label{sec:intro}
{Investigation of the connection between quantum correlations and work extraction is one of the central themes of quantum thermodynamics \cite{Acin2015, Huber2015, Binder2018, Goold2016,Sapienza2019}. Considerable effort has been devoted to understanding whether quantum correlations can provide an advantage over classical correlations in work extraction from composite systems. In this context, it was shown that when ergotropy \cite{Allahverdyan2004} is taken as the mechanism of work extraction, the difference between globally extractable work and locally extractable work -- referred to as the ergotropic gap \cite{PerarnauLlobet2015,Huber2015, Alimuddin2019} -- bears signatures of entanglement. Such investigations have been extended to multipartite systems \cite{Huber2015, Mir2022} as well as to continuous-variable systems \cite{Friis2018, Mir2022, Mir2026, mrinmoy2026}. However, a limitation of such an investigation is the requirement to implement global operations.

On the other hand, ergotropic work can also be extracted by measuring one party of a bipartite system composed of the system $(S)$ and an auxiliary $(A)$. The ergotropy obtained from the post-measurement states of the system defines the framework of \emph{daemonic ergotropy} \cite{Francica2017}. This has emerged as an active area of research in recent years \cite{Francica2017,Manzano2018,Bernards2019}. A natural question in this context is: how do correlations influence the amount of extractable work? While early studies suggested that entanglement provides an advantage in work extraction assisted by measurements \cite{Francica2017}, it was later shown that such advantages can be entirely reproduced using classical correlations alone \cite{Manzano2018, Bernards2019, simonov2025activation}. 
Physically, this leads to a striking conclusion: for work extraction assisted by local measurements on one subsystem of a bipartite state, classical correlations already suffice to achieve the maximal daemonic ergotropy. Stronger forms of quantum correlations, such as entanglement, do not provide any additional advantage over classical correlations, in contrast to their typical role in numerous quantum information tasks.

\begin{figure}
    \centering
    \includegraphics[width=1.1\linewidth]{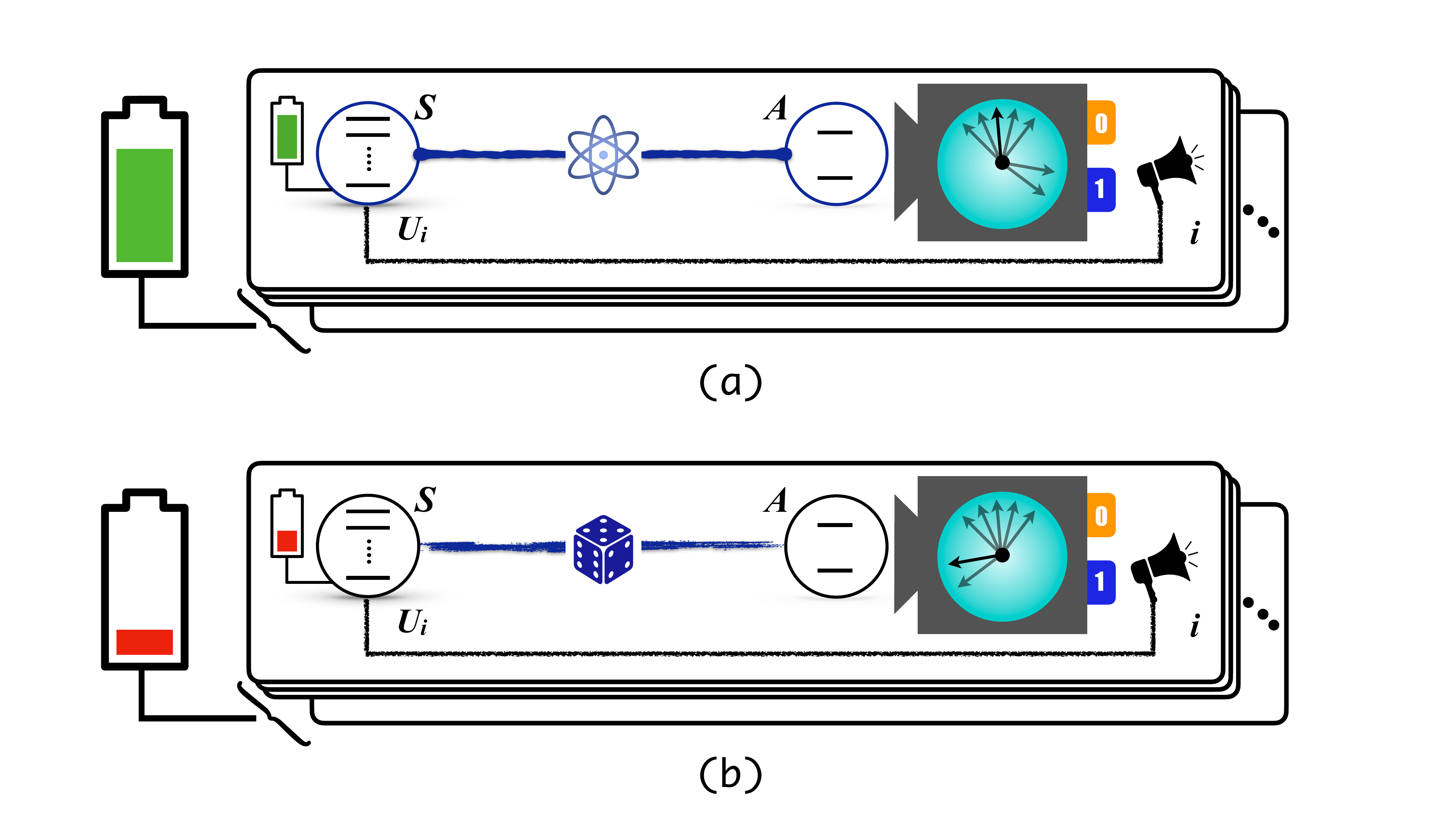}
    \caption{
    Schematic of the randomized daemonic ergotropy (RDE) extraction process. Random measurements on the auxiliary system $A$ assist ergotropy extraction from the system $S$, and the extracted ergotropy is averaged over the measurement ensemble. In contrast to the conventional setting with optimized assisting measurements, randomized measurements reveal a clear separation between the performance of $(a)$ entangled states and $(b)$ classically correlated states. Entanglement can provide both a larger average ergotropic gain and reduced measurement-induced fluctuations.}
    \label{fig:schematic}
\end{figure}

However, this equivalence between entanglement and classical correlations arises only when optimal measurements are performed on the auxiliary subsystem. Relying on such state-dependent optimal measurement scheme imposes significant operational constraint and represent a highly idealized scenario. In fact, the randomized measurements can arise naturally in realistic quantum-information protocols, where noise and imperfections affect the implementation of quantum operations and measurements.  Consequently, the performance of various genres of quantum information processing tasks including Bell inequality violations \cite{nv1, nv2, nv3}, quantum benchmarking \cite{Emerson2005,Knill2008}, state characterization \cite{Huang2020,Elben2019}, entanglement detection \cite{Tran_2015, Elben2020, Imai_2021} and quantification \cite{Eisfeld_2026}, quantum key distribution \cite{Laing_2010,Liu_2019} and estimation of quantum Fisher information \cite{Rath_2021} have recently been reassessed under randomized measurement. From an operational perspective, randomized measurements allow one to extract relevant properties of quantum states without requiring a shared reference frame or precise calibration. A comprehensive overview of the role of random measurements can be found in the recent review \cite{CieliskiReview_2024}. Crucially, in this framework, due to inherent randomness, examining statistical fluctuations becomes essential. While fluctuations in thermal devices such as quantum battery \cite{Imai_2023, Bakhshinezhad_2024, Sarkar_2025}, quantum heat engines and refrigerators \cite{Campisi2015, Segal_2018}, thermal transistors \cite{Das_2026} have been studied extensively studied via thermodynamic uncertainty relation \cite{Barato_2015,Horowitz_2020}, fluctuation-theorems \cite{Jarzynski_2004,Campisi_2011,Campisi_2014} etc., its impact on random measurement-assisted work extraction can evaluate robustness of the protocol.

In this work, we investigate the extraction of daemonic ergotropy when random projective measurements are performed on the auxiliary, a framework that we term as \textit{randomized daemonic ergotropy} (RDE), see Fig. \ref{fig:schematic}. We begin by analyzing the performance of quantum-classical (qc) states. This choice is motivated by the fact that qc states were shown to be optimal for daemonic ergotropy extraction under optimal (fine-tuned) measurements \cite{Francica2017, Bernards2019}. Thus, qc states provide a natural benchmark against which potential quantum advantages due to entanglement can be assessed when random measurements are performed. 

We first bound the maximal RDE gain achievable by qc states, and show that the fluctuations are lower-bounded by a quantity proportional to the daemonic gain.  This implies that qc states cannot support fluctuation-free daemonic gain. Remarkably, it leads to a fundamental no-go theorem, highlighting an intrinsic trade-off between gain and fluctuations in measurement-assisted work extraction for qc states.
Then we demonstrate that, unlike in the case of optimal measurements, in the case of randomized projective measurements, stronger quantum correlations like entanglement do indeed offer distinctive advantages over quantum-classical states, both in terms of the daemonic gain and fluctuations. In particular, we establish that entangled states can support both higher and fluctuation-free daemonic gains. 

Notably, we consider the case of general separable states and calibrate their performance. We demonstrate that no separable state can perform any better than the optimally chosen quantum classical states. This, in turn, enables us to provide a detector of entanglement in terms of the randomized demonic ergotropic gain. Additionally, we also show that there exist certain cases in which, although the randomized daemonic gain of any two states might be equal, an entangled state might support a lower value of fluctuations than the optimal separable states. This further highlights the importance of looking at fluctuations while investigating RDE. Finally, we show that in the scenario of partially randomized measurements around the optimal direction, the fluctuation-free gain by entangled states over qc states persists.

The article is organized in the following way. In Sec.~\ref{sec:rde}, we define the framework of RDE. We then proceed to evaluate the gain and fluctuation of RDE with qc states in Sec.~\ref{sec:optimalqc} as a benchmark. In Sec.~\ref{sec:rde_ent}, we demonstrate the role of entanglement in our framework alongwith a classification of states attaining fluctuation-free gains. In Sec.~\ref{sec:detector}, we provide a RDE-based sufficient criteria to detect two-qubit bipartite entangled states. In Sec.~\ref{sec:partial_rde}, similar analysis of daemonic ergotropy is performed with partially randomized measurement. Finally, we conclude in Sec.~\ref{sec:con}.}

\section{Randomized Daemonic Ergotropy}
\label{sec:rde}
{Consider an arbitrary quantum state $\rho^{SA}$ acting on a bipartite Hilbert space, $\mathbb C^{d_S}\otimes \mathbb C^{d_A}$ of local dimensions $d_k\forall k=S,A$ with $S$ and $A$ denoted as the system and the auxiliary, respectively, throughout this paper. Let $H$ be a reference Hamiltonian associated with $S$. The maximal amount of unassisted extractable work from $\rho^S=\Tr_A(\rho^{SA})$ under a cyclic unitary process, referred to as ergotropy \cite{Allahverdyan2004}, is given by ${\tt Erg}(\rho^S) = \text{Tr} (\rho^S H) - \min_{U} \text{Tr}(U\rho^S U^\dagger H)$ where the minimization is performed over all possible unitary transformations on $S$.} Now, the daemonic ergotropy obtained from the system $S$ of $\rho^{SA}$  by performing a projective measurement $\Pi = \{\Pi_i\}$ 
    on $A$ can be written as $E_D^{\rho^{SA}}(\Pi) := \sum_{i}{\tt Erg}(\rho^{SA}_{\Pi_i}),$ 
    where $\rho^{SA}_{\Pi_i} =$ Tr$_A\big( (\mathbb{I}\otimes \Pi_i) \rho^{SA} (\mathbb{I}\otimes \Pi_i) \big)$ is the un-normailized post-measurement state when $\Pi_i$ clicks.


\begin{definition}[Randomized Daemonic Ergotropy and gain]
The randomized daemonic ergotropy (RDE) $E_D$ \textcolor{black}{of the state \(\rho_{SA}\)} is the Haar averaged value of $E_D(\Pi)$, averaged over all possible projective measurements
\begin{eqnarray}
    E_D^{\rho^{SA}} := \int d \Pi ~\mu(\Pi) ~E^{\rho^{SA}}_D(\Pi),
    \label{eq:rde}
\end{eqnarray}
where $\mu(\Pi)$ denotes a probability density function over the space of projective measurements. 
The related quantity of interest, the randomized daemonic ergotropy gain, which is the amount of extra work that can be extracted from $S${, on average,} due to the random measurements performed at $A$, is given by
\begin{eqnarray}
    \Delta E_D^{\rho^{SA}} := E_D^{\rho^{SA}} - {\tt Erg}(\rho^S),
    \label{eq:dgain}
\end{eqnarray}
where $ {\tt Erg}(\rho^S)$ is the locally extractable ergotropy from the marginal state $\rho^S := \emph{Tr}_A (\rho^{SA})$ of the system $S$.
    \label{def:rde}
\end{definition}
\begin{lemma}
The randomized daemonic gain of any state $\rho^{SA}$ is bounded above by the difference of the energy and ergotropy of the marginal state: $\Delta E^{\rho^{SA}}_D \leq \Delta E_D^{\max}(\rho_S)$, where $\Delta E_D^{\max}(\rho_S) :=  \emph{Tr} (\rho^SH) - {\tt Erg}(\rho^S).$
    \label{lemma:absmax}
\end{lemma}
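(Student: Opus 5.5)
The plan is to bound the daemonic ergotropy $E_D^{\rho^{SA}}(\Pi)$ pointwise, for every fixed projective measurement $\Pi=\{\Pi_i\}$, by the energy $\mathrm{Tr}(\rho^S H)$ of the marginal. The Haar average in Eq.~\eqref{eq:rde} then inherits the same bound, because $\mu$ is a probability density. Subtracting ${\tt Erg}(\rho^S)$ as in Eq.~\eqref{eq:dgain} gives the claim immediately.

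The pointwise bound follows from the definition of ergotropy applied to each unnormalized conditional state $\rho^{SA}_{\Pi_i}$. For each $i$ we have
\begin{equation}
{\tt Erg}(\rho^{SA}_{\Pi_i}) = \mathrm{Tr}(\rho^{SA}_{\Pi_i}H) - \min_{U_i}\mathrm{Tr}(U_i\rho^{SA}_{\Pi_i}U_i^\dagger H) .
\end{equation}
We may assume without loss of generality that the ground-state energy of $H$ is zero, so that $H\geq 0$; ergotropy is invariant under the shift $H\to H+c\,\mathbb{I}$. Since $\rho^{SA}_{\Pi_i}\geq 0$, the minimized term is then nonnegative, and so ${\tt Erg}(\rho^{SA}_{\Pi_i})\leq \mathrm{Tr}(\rho^{SA}_{\Pi_i}H)$.

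Next we sum over $i$. The key identity is $\sum_i \rho^{SA}_{\Pi_i} = \rho^S$, which follows from cyclicity of the partial trace over $A$ together with $\sum_i\Pi_i^2=\sum_i\Pi_i=\mathbb{I}$:
\begin{equation}
\sum_i \mathrm{Tr}_A\big((\mathbb{I}\otimes\Pi_i)\rho^{SA}(\mathbb{I}\otimes\Pi_i)\big)=\mathrm{Tr}_A\Big(\rho^{SA}\big(\mathbb{I}\otimes\textstyle\sum_i\Pi_i\big)\Big)=\rho^S .
\end{equation}
Hence $E_D^{\rho^{SA}}(\Pi)\leq \mathrm{Tr}(\rho^S H)$ for every $\Pi$. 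Integrating against $\mu(\Pi)$ gives $E_D^{\rho^{SA}}\leq \mathrm{Tr}(\rho^S H)$, and subtracting ${\tt Erg}(\rho^S)$ yields $\Delta E_D^{\rho^{SA}}\leq \Delta E_D^{\max}(\rho_S)$.

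The only delicate point is the energy convention. The statement is literally true only when $\mathrm{Tr}(\rho^SH)$ is measured relative to the ground energy, i.e.\ with $H\geq 0$. I would state this normalization explicitly, noting that otherwise the right-hand side should read $\mathrm{Tr}(\rho^SH)-E_0-{\tt Erg}(\rho^S)$, where $E_0$ is the smallest eigenvalue of $H$. With this convention fixed, the remaining steps are routine.
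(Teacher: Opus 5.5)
Your proof is correct and follows the same route as the paper. Both bound the daemonic ergotropy pointwise, $E_D^{\rho^{SA}}(\Pi)\leq \mathrm{Tr}(\rho^S H)$ for every measurement, average against the normalized $\mu$, and subtract ${\tt Erg}(\rho^S)$. Where the paper simply asserts the pointwise bound, you supply the justification: each branch satisfies ${\tt Erg}(\rho^{SA}_{\Pi_i})\leq \mathrm{Tr}(\rho^{SA}_{\Pi_i}H)$ when $H\geq 0$, and $\sum_i \rho^{SA}_{\Pi_i}=\rho^S$. Your caveat that the lemma needs the ground energy of $H$ set to zero is accurate, and it matches the paper's choice $H=|1\rangle\langle 1|$.
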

\begin{proof}
    For any measurement $\Pi$, we have $  
        E_D^{\rho^{SA}}(\Pi) \leq  \text{Tr} (\rho^SH).$
    Now, combining the above bound with Eq. \eqref{eq:rde}, the randomized daemonic ergotropy can be bounded.
    \begin{eqnarray}
        E_D^{\rho^{SA}} \leq \text{Tr} (\rho^SH) \int d \Pi ~\mu(\Pi) = \text{Tr} (\rho^SH).
    \end{eqnarray}
  The bound on the randomized gain thereafter follows from Eq. \eqref{eq:dgain}.  This completes the proof.
\end{proof}

\begin{definition}[Fluctuations in work extraction]
The fluctuation in RDE extraction is defined by the standard deviation of $\Delta E_D^{\rho^{SA}}(\Pi)$ as 
\begin{eqnarray}
    \mathcal{F}_D^{\rho^{SA}} = \sqrt{V_D^{\rho^{SA}}-\big(\Delta E_D^{\rho^{SA}}\big)^2}, 
    \label{eq:fluctuations}
\end{eqnarray}
where $V_D^{\rho^{SA}} := \int d\Pi~ \mu(\Pi) ~\big[\Delta E^{\rho^{SA}}_D(\Pi)\big]^2$.
\end{definition}

\section{Optimal vs. random measurements: \\ 
\hspace{0.75cm}the quantum-classical benchmark}
\label{sec:optimalqc}

Before presenting the results for randomized measurements, it is useful to recall what happens in the conventional daemonic protocol, in which one optimizes over all measurements \cite{Francica2017}. This establishes that the optimal states to be quantum-classical (qc).
A qc state has the form
\begin{equation}
    \xi^{SA}_{\mathrm{qc}}=\sum_j \sigma_j^S\otimes \ketbra{j}{j},
    \label{eq:qcform}
\end{equation}
where $\{\ket{j}_A\}$ is an orthonormal basis and the $\sigma_j^S$ are unnormalised positive operators. The key fact proved in Ref.~\cite{Bernards2019} is that for a given marginal state $\rho^S$ of the system, the maximal daemonic ergotropy is achievable by a qc-state as in Eq. \eqref{eq:qcform} with the following condition $\sum_j \sigma_j = \rho^S$. Moreover, for
such states the optimal measurement {on the auxiliary} is simply the projective measurement in the classical basis $\{\ket j \}$. 

In the randomized setting, we need to reassess the benchmark offered by qc states. 
For our analysis, {we} restrict the {auxiliary} dimension $d_A=2$, and  choose the probability distribution in Eq. \eqref{eq:rde}, $\mu(\Pi)$, to be the Haar measure, corresponding to the measurement directions sampled uniformly {over the Bloch sphere}.

\begin{theorem}[Upper bound of randomized daemonic gain for qc states]  
\label{th:maxergqc}
The randomized daemonic gain of any qudit-qubit quantum-classical state $\xi^{SA}$ with an arbitrary marginal state $\xi^S$ is upper bounded as 
    \begin{eqnarray}
     \nonumber\Delta  E^{\xi^{SA}}_D \leq \frac12 \Delta E_D^{\max}(\xi^S)= \frac{1}{2} \big[~\emph{Tr}(\xi^SH) - {\tt Erg}(\xi^S)\big],
    \end{eqnarray}
    where $H$ is the Hamiltonian of the system $S$.
\end{theorem}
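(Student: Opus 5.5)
The approach is to compute the post-measurement system states of a qc state explicitly for an arbitrary measurement direction on the Bloch sphere. I then bound the daemonic gain direction by direction using concavity of the passive energy, and average the resulting bound over the sphere.

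Write the qudit-qubit qc state as $\xi^{SA}=\sigma_0\otimes\ketbra{0}{0}+\sigma_1\otimes\ketbra{1}{1}$ with $\sigma_0+\sigma_1=\xi^S$. A projective measurement on $A$ is labelled by a unit vector $\vec n$ with polar angle $\theta$ relative to the classical basis, giving $\Pi_\pm=(\mathbb I\pm\vec n\cdot\vec\sigma)/2$. Setting $z=\cos\theta$, the unnormalized conditional states are
\begin{equation}
\rho_\pm=\tfrac{1\pm z}{2}\,\sigma_0+\tfrac{1\mp z}{2}\,\sigma_1 .
\end{equation}
For $z\ge 0$ these can be rewritten as $\rho_+=\tfrac{1-z}{2}\xi^S+z\,\sigma_0$ and $\rho_-=\tfrac{1-z}{2}\xi^S+z\,\sigma_1$. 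For $z<0$ the same holds with $|z|$ in place of $z$ and the roles of $\sigma_0$ and $\sigma_1$ exchanged.

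Next, introduce the passive energy of a (possibly unnormalized) positive operator, $P(X):=\min_U \mathrm{Tr}(UXU^\dagger H)$. Then ${\tt Erg}(X)=\mathrm{Tr}(XH)-P(X)$, and by linearity of the energy,
\begin{equation}
\Delta E_D^{\xi^{SA}}(\Pi)=P(\xi^S)-P(\rho_+)-P(\rho_-).
\end{equation}
$P$ is a minimum of linear functionals, so it is concave. It is also positively homogeneous of degree one. Together these make it superadditive: $P(A+B)\ge P(A)+P(B)$ for $A,B\ge 0$. I will also use $P(X)\ge 0$. This holds under the energy convention implicit in Lemma \ref{lemma:absmax} (ground energy of $H$ set to zero); otherwise the bound in Lemma \ref{lemma:absmax} would not be the natural maximum. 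Applying superadditivity to the decompositions above gives
\begin{equation}
P(\rho_+)+P(\rho_-)\ge (1-|z|)P(\xi^S)+|z|\big[P(\sigma_0)+P(\sigma_1)\big]\ge (1-|z|)P(\xi^S),
\end{equation}
and hence $\Delta E_D^{\xi^{SA}}(\Pi)\le |\cos\theta|\,P(\xi^S)$ for every measurement direction.

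Finally, average over the Haar (uniform Bloch-sphere) measure. Since $\int |\cos\theta|\,\sin\theta\, d\theta\, d\phi/4\pi=1/2$, we obtain $\Delta E_D^{\xi^{SA}}\le \tfrac12 P(\xi^S)=\tfrac12\big[\mathrm{Tr}(\xi^S H)-{\tt Erg}(\xi^S)\big]=\tfrac12\Delta E_D^{\max}(\xi^S)$, which is the claimed bound.

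The main obstacle is the middle step: finding the decomposition of $\rho_\pm$ into a common component proportional to the marginal plus a "purely classical" remainder. Plain concavity with weights $p,1-p$ only reproduces the optimized-measurement bound $P(\xi^S)-P(\sigma_0)-P(\sigma_1)$. The factor $1/2$ comes from recognising that the common part $(1-|z|)\xi^S$ contributes no gain at all. The only care needed beyond that is the sign convention for $H$ that guarantees $P\ge 0$.
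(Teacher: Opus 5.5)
Your argument is correct and is essentially the paper's proof. You use the same decomposition $\rho_\pm=\sin^2\frac{\theta}{2}\,\xi^S+\cos\theta\,\sigma_{0/1}$. Your superadditivity of the passive energy $P$ is exactly the subadditivity of ergotropy (convexity plus homogeneity) that the paper invokes. Your condition $P(\sigma_i)\ge 0$ plays the role of ${\tt Erg}(\sigma_i)\le \text{Tr}(\sigma_i H)$. Together these give the same pointwise bound $\Delta E_D^{\xi^{SA}}(\theta)\le|\cos\theta|\,[\text{Tr}(\xi^SH)-{\tt Erg}(\xi^S)]$ before averaging.

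Your explicit remark that the argument needs the ground energy of $H$ set to zero (so that $P\ge 0$) usefully states an assumption the paper leaves implicit.
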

\begin{proof}
 From Eq.~\eqref{eq:qcform}, a qudit-qubit qc state can be written as
\begin{eqnarray}
        \xi^{SA} = p\rho \otimes \ketbra{0}{0} + (1-p) \sigma &\otimes& \ketbra{0}{0}, \nonumber \\
      \text{where } ~ p\rho + (1-p)\sigma &=& \xi^S.
      \label{eq:defxi}
    \end{eqnarray}
In $d_A=2$, the projective measurements can be completely parameterized by two parameters, where any projective measurement
$\Pi$ corresponds to a measurement in some basis $\{\ket\psi,\ket{\psi^\perp}\}$ with $\ket{\psi} = \cos \frac{\theta}{2}\ket0+e^{i\phi}\sin \frac{\theta}{2}\ket1$ and 
$\ket{\psi^\perp} = \sin \frac{\theta}{2}\ket0 - e^{i\phi}\cos \frac{\theta}{2}\ket1.$
Here $0 \leq \theta \leq \frac\pi2$ and $0  \leq \phi \leq 2\pi$. Integrating $\theta$ up to $\pi/2$ is sufficient to capture all the orthogonal measurement pairs. 
Using the above parameterization, and the fact that we consider the measurements to be uniform, the RDE for any qudit-qubit state $\rho^{SA}$, from Def. \ref{def:rde} can be expressed as 
\begin{eqnarray}
    E_D^{\rho^{SA}} = \frac{1}{2\pi}\int_{0}^{2\pi} d\phi \int_0^{\frac\pi2} d\theta  ~\sin \theta ~E^{\rho^{SA}}_D(\theta,\phi).
    \label{eq:rdequbits}
\end{eqnarray}
When the auxiliary is measured in the $\{\ket\psi,\ket{\psi^\perp}\}$ basis, the un-normalized {post-measurement state of the system becomes } $\xi_{\psi (\psi^\perp)}^{SA} =$ $\Tr_A\big( (\mathbb{I}\otimes \ketbra{\psi (\psi^\perp)}{\psi (\psi^\perp)}) \xi^{SA} \big)$. Now $\xi_{\psi (\psi^\perp)}^{SA}$ depend solely on one of the measurement parameters $\theta$. By using the convexity of ergotropy \cite{Bernards2019}, we provide an upper bound of $ E_D^{\xi^{SA}}(\theta) = {\tt Erg}(\xi_\psi^{SA}) + {\tt Erg}(\xi_{\xi_{\psi^\perp}}^{SA})$ as
\begin{eqnarray}
    E^{\xi^{SA}}_D(\theta) \leq 2 \sin^2 \frac{\theta}{2} {\tt Erg}(\xi^S) &+& \cos \theta ~\text{Tr}(\xi^S H).
\end{eqnarray}
This upper bound constitutes the most important step of the proof. Now, from Eq. \eqref{eq:rdequbits}, we get $ E_D^{\xi^{SA}} \leq \frac12 \big[ \text{Tr} (\xi^SH) + {\tt Erg}(\xi^S) \big]$. Subtracting ${\tt Erg}(\xi^S)$ gives us the required bound on $\Delta E_D^{\xi^{SA}}$. A detailed proof is deferred to Appendix~\ref{app:prooft1}. 
\end{proof} 

\noindent When $\xi^S$ is a passive state, we get $\Delta E^{\xi^{SA}}_D = E^{\xi^{SA}}_D  \leq \frac12 \text{Tr} (\xi^SH).$ It follows directly from Eq. \eqref{eq:dgain} and from the fact that, for passive states, the ergotropy vanishes.



\subsection{NO fluctuation-free work extraction for qc states}
{\color{black}With an established bound for the gain, we now turn our attention to the fluctuation associated with the gain. In particular, }we investigate how much fluctuation the random measurements induce in the extracted work when the system and {auxiliary} possess qudit-qubit qc states. First, we present a no-go theorem which prohibits any fluctuation-free daemonic gain.

\begin{theorem}[Fluctuation lower bound for qc states]
The fluctuation in work extraction under randomized projective measurements of any qudit--qubit quantum-classical state $\xi^{SA}$ satisfies
the following bound
\begin{eqnarray}
   \mathcal F_D^{\xi^{SA}} 
   \geq
    \frac{1}{\sqrt{3}} \Delta E_D^{\xi^{SA}},
\end{eqnarray}
where $E_D^{\xi^{SA}}$ is the randomized daemonic gain of $\xi^{SA}$.
\label{th:fluctuationbound}
\end{theorem}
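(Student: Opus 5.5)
The plan is to reduce the fluctuation bound to a one-dimensional inequality for a convex function of $u=\cos\theta$, and then prove that inequality with Chebyshev's integral inequality.

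\textbf{Step 1: parametrize the gain.} Write the qc state as in Eq.~\eqref{eq:defxi}, with the classical basis $\{\ket{0},\ket{1}\}$ and unnormalized blocks $A:=p\rho$ and $B:=(1-p)\sigma$, so that $A+B=\xi^S$. For the measurement $\{\ket\psi,\ket{\psi^\perp}\}$ the post-measurement states depend only on $\theta$:
\begin{equation*}
\xi^{SA}_{\psi}=\tfrac{1+u}{2}A+\tfrac{1-u}{2}B,\qquad \xi^{SA}_{\psi^\perp}=\tfrac{1-u}{2}A+\tfrac{1+u}{2}B .
\end{equation*}
Define the per-measurement gain $g(u):=E_D^{\xi^{SA}}(\theta)-{\tt Erg}(\xi^S)$. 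Under the Haar measure in Eq.~\eqref{eq:rdequbits}, the weight $\sin\theta\,d\theta$ on $[0,\pi/2]$ makes $u$ uniform on $[0,1]$. Hence $\Delta E_D^{\xi^{SA}}=\int_0^1 g$ and $V_D^{\xi^{SA}}=\int_0^1 g^2$. The claimed bound $\mathcal F_D^2\ge \tfrac13(\Delta E_D)^2$ is therefore equivalent to
\begin{equation*}
\int_0^1 g(u)^2\,du\;\ge\;\tfrac43\Big(\int_0^1 g(u)\,du\Big)^2 .
\end{equation*}

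\textbf{Step 2: structural properties of $g$.} I will use three of them.
\begin{enumerate}
\item[(i)] $g$ is convex in $u$. Ergotropy is convex on unnormalized states (it is convex and positively homogeneous), each post-measurement state is affine in $u$, and a sum of convex functions is convex.
\item[(ii)] $g(0)=0$. At $u=0$ both outcomes give $\xi^S/2$, and homogeneity yields $E_D=2\,{\tt Erg}(\xi^S/2)={\tt Erg}(\xi^S)$.
\item[(iii)] $g\ge0$. Homogeneity plus convexity give subadditivity, ${\tt Erg}(X)+{\tt Erg}(Y)\ge{\tt Erg}(X+Y)$, and the two post-measurement states sum to $\xi^S$.
\end{enumerate}
From (i) and (ii), $h(u):=g(u)/u$ is nondecreasing on $(0,1]$, and by (iii) it is nonnegative.

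\textbf{Step 3: the inequality, which I expect to be the crux.} The difficulty is to see that the linear profile $g\propto u$ is extremal: for it the ratio is exactly $\tfrac{1/3}{1/4}=\tfrac43$. To prove this in general, introduce the probability density $w(u)=2u$ on $[0,1]$ and write $\mathbb E_w$ for expectation with respect to it. Then
\begin{equation*}
\int_0^1 g=\tfrac12\,\mathbb E_w[h],\qquad \int_0^1 g^2=\tfrac12\,\mathbb E_w[u\,h^2].
\end{equation*}
Both $u$ and $h^2$ are nondecreasing, so Chebyshev's integral inequality gives
\begin{equation*}
\mathbb E_w[u\,h^2]\ge \mathbb E_w[u]\,\mathbb E_w[h^2]=\tfrac23\,\mathbb E_w[h^2].
\end{equation*}
Jensen's inequality then gives $\mathbb E_w[h^2]\ge(\mathbb E_w h)^2$. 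Combining,
\begin{equation*}
\int_0^1 g^2\;\ge\;\tfrac13(\mathbb E_w h)^2=\tfrac43\Big(\int_0^1 g\Big)^2,
\end{equation*}
which is the required inequality. The remaining care is purely technical: checking integrability of $h$ near $u=0$, which is fine because $g$ is convex, hence Lipschitz on $[0,1]$.

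This also yields the no-go statement. Equality forces $h$ to be constant, so any positive gain $\Delta E_D^{\xi^{SA}}>0$ necessarily comes with $\mathcal F_D^{\xi^{SA}}>0$.
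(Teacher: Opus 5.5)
Your proof is correct. The reduction is the same as the paper's: $u=\cos\theta$ is uniform on $[0,1]$, and $g\ge 0$, $g(0)=0$ and convexity hold. Your subadditivity argument in fact justifies $g\ge0$, which the paper only asserts.

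Where you differ is the one-dimensional inequality $\int_0^1 g^2\ge\frac43\bigl(\int_0^1 g\bigr)^2$. The paper introduces the linear comparison function $2mu$ with the same mean $m$. It argues that the convex function $g-2mu$ changes sign at most once, from negative to positive. Pairing it with the nondecreasing function $g+2mu$ then gives $\int_0^1 (g-2mu)(g+2mu)\,du\ge 0$, i.e.\ $\int_0^1 g^2\ge\int_0^1 (2mu)^2\,du=\frac43 m^2$. You instead factor $g=u\cdot(g/u)$ and apply Chebyshev's inequality under the weight $2u$, followed by Jensen.

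Both arguments rest on the same fact: $g(u)/u$ is nonnegative and nondecreasing. Your version makes this the only input and avoids the sign-change bookkeeping. It also shows directly that equality requires $g/u$ to be constant, i.e.\ a linear profile, as realized by the state of Example~\ref{ex:1}.

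It also carries over unchanged to the polar-cap bound of Theorem~\ref{th:boundpolarcap}. With weight proportional to $u$ on $[c,1]$, $c=\cos\theta_0$, the same two steps give $\frac{1}{1-c}\int_c^1 g^2\,du\ge\frac{4(1+c+c^2)}{3(1+c)^2}m^2$, exactly the paper's constant. You would not need the paper's separate check that $g(c)\le\ell(c)$. In exchange, the paper's route exhibits the extremal linear profile explicitly as the comparison function.

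Two small points. First, a convex function need not be Lipschitz up to an endpoint, so "convex, hence Lipschitz" is not the right justification. Integrability of $g/u$ follows instead from the monotonicity you already have: $0\le g(u)/u\le g(1)$. Second, the no-go conclusion ($\Delta E_D>0\Rightarrow\mathcal F_D>0$) follows directly from the bound $\mathcal F_D\ge\Delta E_D/\sqrt3$, not from the equality analysis. The equality case (linear $g$) itself has strictly positive fluctuation.
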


\noindent \textit{Sketch of the proof.}
 For a qudit--qubit qc state, any projective measurement on the auxiliary can be parametrized by a single parameter \(u=\cos\theta\in[0,1]\), which is uniformly distributed under the Haar measure. The corresponding daemonic gain can therefore be written as a function \(g(u):= \Delta E_D^{\xi^{SA}}(u) \). One observes that \(g(0)=0\), since an unbiased measurement yields identical post-measurement branches, and that \(g(u)\) is convex in \(u\), which is a direct consequence of ergotropy being convex on a mixture of states. Convexity together with the boundary condition \(g(0)=0\) implies that, for a fixed mean \(\int_0^1 g(u)\,du=\Delta E_D^{\xi^{SA}}\), the second moment satisfies \(\int_0^1 g(u)^2 du ~\geq\)  \(\frac{4}{3}(\Delta E_D^{\xi^{SA}})^2\), and hence \((\mathcal F_D^{\xi^{SA}})^2 \ge \frac{1}{3}(\Delta E_D^{\xi^{SA}})^2\).  A detailed proof is provided in Appendix~\ref{app:proofoft2}.

\noindent {\color{black} Physically, Theorem \ref{th:fluctuationbound} has a profound implication in the premise of RDE -- for any qc state with a finite gain, the minimum fluctuation it can support under ergotropic extraction grows with the gain it offers. Moreover, }this leads to the no-go condition that for any qc state $\xi^{SA}$ with \(\Delta E_D^{\xi^{SA}} > 0\), we have finite fluctuations \(\mathcal F_D^{\xi^{SA}} > 0\), thereby restricting any fluctuation-free randomized daemonic gain. Consequently, correlations of the qc state cannot support fluctuation-free universal daemonic work extraction: whenever random measurements yield a nonzero average gain, they necessarily induce a finite spread in the extractable work.

\begin{example}[State saturating the gain and fluctuation bounds]
Consider the qubit--qubit {qc} state
\begin{equation}
    \xi^{SA}_*
    =
    \frac12\,|0\rangle\!\langle0|_S\otimes |0\rangle\!\langle0|_A
    +
    \frac12\,|1\rangle\!\langle1|_S\otimes |1\rangle\!\langle1|_A,
\end{equation}
with maximally mixed system marginal $ \xi^S_*=\frac{\mathbb I}{2}$ and system Hamiltonian $ H = \ketbra{1}{1}.$
Note that for the considered Hamiltonian, \({\tt Erg}(\xi^S_*)=0\) and \(\Tr(\xi^S_* H)=1/2\). This state saturates both the bounds on gain and fluctuation in Theorems~\ref{th:maxergqc}  and \ref{th:fluctuationbound}.
\label{ex:1}
\end{example}

\noindent For projective measurement as described in the proof of Th. \ref{th:maxergqc}, the daemonic gain is given by $\Delta E_D^{\xi^{SA}_*}(\theta)=\frac{1}{2} \cos \theta.$
Averaging over Haar-random projective measurements, Eq. \eqref{eq:rdequbits}  yields $\Delta E_D^{\xi^{SA}_*}= \frac14 
    =
    \frac12\,\Tr(\xi^S_* H),$
showing that this state saturates the upper bound on the randomized daemonic gain for qc states with maximally mixed marginal. Moreover, 
\small
\begin{eqnarray}
    \Big(\mathcal F_D^{\xi^{SA}_*} \Big)^2
    =
 \frac14   \int_0^{\frac{\pi}{2}} \cos^2 \theta \sin \theta~ d\theta
    -
    \left(\Delta E_D^{\xi^{SA}_*}\right)^2 =
    \frac13\left(\Delta E_D^{\xi^{SA}_*}\right)^2. \nonumber
\end{eqnarray}
\normalsize
Therefore, it also saturates the fluctuation lower bound in the {Theorem}~\ref{th:fluctuationbound}. Hence, this state simultaneously realizes the maximal randomized daemonic gain and the minimal fluctuation compatible with that gain for qc states.


\section{Role of entanglement: \\
More gain with vanishing fluctuation}
\label{sec:rde_ent}

    

The natural question whether any correlation classes can overcome the bound provided by the qc states and how close they can come to the absolute ceiling. Interestingly, we observe a clear \textit{gap} of the extractable work when $S$ and $A$ are entangled to the case where they only share classical correlations.
We demonstrate this gap in the setting of two qubits and consider the system Hamiltonian $H = \ketbra{1}{1}$.
\begin{lemma}
    The randomized daemonic gain and fluctuations for the state $  \ket{\Psi^\alpha}=\sqrt{\alpha}\ket{01}-\sqrt{1-\alpha}\ket{10},$ where $0\leq \alpha\leq 1,$
is computed to be $\Delta E_D^{\Psi^\alpha} = \min \{\alpha,1 - \alpha\}, ~~\mathcal{F}_D^{\Psi^\alpha} = 0.$
    \label{lemma:psialpha}
\end{lemma}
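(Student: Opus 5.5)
The plan is to compute everything explicitly, using the fact that pure-state measurements on one qubit steer the other qubit into pure states. For $\ket{\Psi^\alpha}$ the system marginal is $\rho^S = \alpha\ketbra{0}{0} + (1-\alpha)\ketbra{1}{1}$. With $H=\ketbra{1}{1}$ this gives $\Tr(\rho^S H)=1-\alpha$. Ergotropy is then computed by comparing with the passive state: if $\alpha\geq 1/2$ the marginal is already passive, so ${\tt Erg}(\rho^S)=0$; if $\alpha<1/2$ the passive state puts weight $\max\{\alpha,1-\alpha\}$ on $\ket0$, so ${\tt Erg}(\rho^S)=(1-\alpha)-\alpha = 1-2\alpha$. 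In both cases $\Tr(\rho^SH)-{\tt Erg}(\rho^S)=\min\{\alpha,1-\alpha\}$, which is exactly $\Delta E_D^{\max}(\rho^S)$ of Lemma~\ref{lemma:absmax}.

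Next, I will measure the auxiliary in the basis $\{\ket\psi,\ket{\psi^\perp}\}$ parametrized as in the proof of Theorem~\ref{th:maxergqc}. The unnormalized conditional system states are $\Tr_A[(\mathbb I\otimes\ketbra{\psi}{\psi})\ketbra{\Psi^\alpha}{\Psi^\alpha}] = \ketbra{\tilde s_\psi}{\tilde s_\psi}$ with $\ket{\tilde s_\psi}=(\mathbb I\otimes\bra{\psi})\ket{\Psi^\alpha}$, which is rank one, and similarly for $\psi^\perp$. The key observation is that an unnormalized pure state $q\ketbra{s}{s}$ has ergotropy $q\bra{s}H\ket{s}$, because it can be rotated unitarily to $\ket0$, which carries zero energy. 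Hence for every $(\theta,\phi)$,
$$E_D^{\Psi^\alpha}(\theta,\phi)=\bra{\tilde s_\psi}H\ket{\tilde s_\psi}+\bra{\tilde s_{\psi^\perp}}H\ket{\tilde s_{\psi^\perp}}=\Tr\big(\rho^S H\big)=1-\alpha,$$
where the middle equality holds because the two branches sum to $\rho^S$ and energy is linear. Thus the bound $E_D(\Pi)\le\Tr(\rho^SH)$ used in Lemma~\ref{lemma:absmax} is saturated for every measurement, not merely on average.

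It follows that $\Delta E_D^{\Psi^\alpha}(\theta,\phi)=(1-\alpha)-{\tt Erg}(\rho^S)=\min\{\alpha,1-\alpha\}$ is independent of the measurement. Averaging with Eq.~\eqref{eq:rdequbits}, whose measure is normalized, gives $\Delta E_D^{\Psi^\alpha}=\min\{\alpha,1-\alpha\}$ and $V_D=(\Delta E_D^{\Psi^\alpha})^2$. Therefore $\mathcal F_D^{\Psi^\alpha}=0$ by Eq.~\eqref{eq:fluctuations}.

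The only delicate point is the pure-state ergotropy claim. For the unnormalized branch $q\ketbra{s}{s}$, the energy is $q\bra{s}H\ket{s}$, and the minimal passive energy is $q\cdot 0$, since the ground energy of $H$ is $0$. This is immediate, so no real obstacle is expected. The boundary cases $\alpha\in\{0,1\}$, where one branch may vanish, are handled by the same formula.
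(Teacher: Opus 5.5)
Your proof is correct and follows essentially the paper's route. Both arguments use that every measurement branch of $\Psi^\alpha$ is pure, so its ergotropy equals its energy; hence $E_D^{\Psi^\alpha}(\Pi)=1-\alpha$ for every basis, and subtracting ${\tt Erg}(\rho^S_\alpha)$ gives $\min\{\alpha,1-\alpha\}$ with zero fluctuation. The paper writes the branch ergotropies explicitly as $(1-\alpha)\cos^2\frac{\theta}{2}$ and $(1-\alpha)\sin^2\frac{\theta}{2}$, whereas you get their sum directly from linearity of energy, which also shows the conclusion holds for any pure state and any rank-one measurement when the ground energy is zero.
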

\begin{proof}
    After measuring $\Psi^\alpha=\ket{\Psi^\alpha}\bra{\Psi^\alpha}$ in the basis $\Pi$, the un-normalized output states, $\Psi^\alpha_{\psi(\psi^\perp)}$  
     have ${\tt Erg}(\Psi_\psi^\alpha)=\cos^2\frac{\theta}{2}(1-\alpha)$ and ${\tt Erg}(\Psi_{\psi^\perp}^\alpha)=\sin^2\frac{\theta}{2}(1-\alpha)$ respectively. 
    Therefore, for any measurement basis $\Pi$, we have
    \begin{eqnarray}
      E^{\Psi^\alpha}_D(\Pi) =  {\tt Erg}(\Psi_\psi^\alpha)+ {\tt Erg}(\Psi_{\psi^\perp}^\alpha) = 1-\alpha. 
      \label{eq:ergpsiapi}
    \end{eqnarray}
    This right away implies that the randomized daemonic ergotropy of $\Psi^\alpha$ is given by $E^{\Psi^\alpha}_D  = 1-\alpha. $
Note that the {amount of} ergotropy that can be extracted from the marginal system state $\rho^S_\alpha = \text{Tr}_A \Psi^\alpha = \alpha\ketbra 0+(1-\alpha)\ketbra 1$ is given by 
    \begin{equation}
        {\tt Erg}(\rho^S_\alpha) = \Bigg\{
        \begin{array}{cc}
           0   & \alpha \geq 1/2, \\ 
            1-2\alpha  & \alpha < 1/2.
        \end{array}
        \label{eq:Epsialpha}
    \end{equation}
    Finally, by combining the above two results we have $\Delta E_D^{\Psi^\alpha} = \min\{\alpha,1-\alpha\}.$ The vanishing of $\mathcal{F}_D^{\Psi^\alpha}$ directly follows from Eq. \eqref{eq:ergpsiapi}.
    This completes the proof.
\end{proof}


We now contrast the above results with the ceilings for randomized {daemonic} gain of states with identical system marginals. We find that they saturate the absolute ceiling obtained in Lemma. \ref{lemma:absmax}.
\begin{corollary}
      For the given system marginal $\rho^S_\alpha = \alpha \ketbra{0}{0}+ (1-\alpha)\ketbra{1}{1}$, $\Psi^\alpha = \ketbra{\Psi^\alpha}{\Psi^\alpha}$ achieves the largest possible daemonic gain{, saturating the upper bound in Lemma~\ref{lemma:absmax}}.
\end{corollary}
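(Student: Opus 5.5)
The corollary asks that the gain $\Delta E_D^{\Psi^\alpha}=\min\{\alpha,1-\alpha\}$ from Lemma~\ref{lemma:psialpha} coincide with the ceiling $\Delta E_D^{\max}(\rho^S_\alpha)=\Tr(\rho^S_\alpha H)-{\tt Erg}(\rho^S_\alpha)$ from Lemma~\ref{lemma:absmax}. The plan is to compute this ceiling explicitly for the marginal $\rho^S_\alpha=\alpha\ketbra{0}{0}+(1-\alpha)\ketbra{1}{1}$ with $H=\ketbra{1}{1}$, and compare it with Lemma~\ref{lemma:psialpha}. Since Lemma~\ref{lemma:absmax} already shows that no state with this marginal can exceed the ceiling, equality with the gain of $\Psi^\alpha$ proves optimality.

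First, compute the mean energy: $\Tr(\rho^S_\alpha H)=1-\alpha$.

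Next, take the ergotropy from Eq.~\eqref{eq:Epsialpha}. For completeness, recall why it has this form. The state is diagonal in the energy eigenbasis. Its passive counterpart places the larger eigenvalue $\max\{\alpha,1-\alpha\}$ on the ground state $\ket{0}$. The passive energy is therefore $\min\{\alpha,1-\alpha\}$, and so
\[
{\tt Erg}(\rho^S_\alpha)=(1-\alpha)-\min\{\alpha,1-\alpha\}.
\]
This equals $0$ for $\alpha\ge 1/2$ and $1-2\alpha$ for $\alpha<1/2$, in agreement with Eq.~\eqref{eq:Epsialpha}.

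Then split into the two cases:
\begin{itemize}
\item For $\alpha\ge1/2$: $\Delta E_D^{\max}=1-\alpha$.
\item For $\alpha<1/2$: $\Delta E_D^{\max}=(1-\alpha)-(1-2\alpha)=\alpha$.
\end{itemize}
In both cases $\Delta E_D^{\max}(\rho^S_\alpha)=\min\{\alpha,1-\alpha\}$, which is exactly $\Delta E_D^{\Psi^\alpha}$ from Lemma~\ref{lemma:psialpha}. Combined with the universal bound of Lemma~\ref{lemma:absmax}, this shows that $\Psi^\alpha$ attains the largest randomized daemonic gain compatible with the marginal $\rho^S_\alpha$.

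There is no real obstacle here. The only point needing care is to verify that the marginal of $\Psi^\alpha$ is indeed $\rho^S_\alpha$, i.e.\ with weight $\alpha$ on $\ket{0}_S$. This is immediate from the Schmidt form $\sqrt{\alpha}\ket{01}-\sqrt{1-\alpha}\ket{10}$. With that confirmed, the result can also be phrased more intuitively: every branch after the measurement is a pure state, so each branch's ergotropy equals its energy, and the total equals $\Tr(\rho^S H)$. The bound in the proof of Lemma~\ref{lemma:absmax}, $E_D(\Pi)\le\Tr(\rho^S H)$, is therefore saturated for every $\Pi$.
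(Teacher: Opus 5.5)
Your proposal is correct and follows essentially the same route as the paper: you evaluate $\Tr(\rho^S_\alpha H)=1-\alpha$ and ${\tt Erg}(\rho^S_\alpha)$ from Eq.~\eqref{eq:Epsialpha}, show that the ceiling of Lemma~\ref{lemma:absmax} reduces to $\min\{\alpha,1-\alpha\}$, and match it with the gain from Lemma~\ref{lemma:psialpha}. Your closing remark is also correct and gives a slightly more transparent reason why the bound is saturated for every $\Pi$: each post-measurement branch is pure, and the ground energy of $H=\ketbra{1}{1}$ is zero, so each branch's ergotropy equals its energy.
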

\begin{proof}
    From Lemma~\ref{lemma:absmax} and for the system marginal $\rho^S_\alpha = \alpha \ketbra{0}{0}+ (1-\alpha)\ketbra{1}{1}$, we get $\text{Tr}(\rho^S_\alpha H) = 1-\alpha.$ Thereby, from Eq.~\eqref{eq:Epsialpha}, the upper bound reduces to $\min \{\alpha,1-\alpha\}$.
   This completes the proof.
\end{proof}
\noindent The above results establish that $\Psi^\alpha$ is the optimal state in terms of both the achievability of the maximal randomized daemonic gain for the given marginal state and the vanishing of fluctuations. {At this juncture, given the fixed marginal $\rho_\alpha^S$, the query is -- what advantage, if any, is conferred by the presence of entangling correlations  between $S$ and $A$ in the context of RDE? Below, we address this question quantitatively by demonstrating the superiority of quantum entanglement over classical correlations.} 
\begin{theorem}
    The randomized daemonic gain $\Delta E_D^{\Psi^\alpha}$ for an entangled pure state $\rho^{SA} =\Psi^\alpha$ with system marginal $\rho^S_\alpha$ is strictly greater than the gain offered by any qc-state with the identical system marginal, and satisfies $\Delta E_D^{\Psi^\alpha} \geq 2  \Delta E_D^{{\xi^{SA}}}.$
\end{theorem}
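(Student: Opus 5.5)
The plan is to compare Lemma~\ref{lemma:psialpha} directly with Theorem~\ref{th:maxergqc} for the same marginal. Any qc state $\xi^{SA}$ with system marginal $\xi^S=\rho^S_\alpha=\alpha\ketbra{0}{0}+(1-\alpha)\ketbra{1}{1}$ falls under Theorem~\ref{th:maxergqc}, so
\begin{eqnarray}
\Delta E_D^{\xi^{SA}} \leq \frac12\big[\text{Tr}(\rho^S_\alpha H)-{\tt Erg}(\rho^S_\alpha)\big].
\end{eqnarray}
Evaluating the right-hand side for $H=\ketbra{1}{1}$ requires $\text{Tr}(\rho^S_\alpha H)=1-\alpha$ together with Eq.~\eqref{eq:Epsialpha}. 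For $\alpha\geq 1/2$ the bracket equals $1-\alpha$. For $\alpha<1/2$ it equals $(1-\alpha)-(1-2\alpha)=\alpha$. In both cases the bracket is $\min\{\alpha,1-\alpha\}$, which is exactly $\Delta E_D^{\max}(\rho^S_\alpha)$, as already noted in the corollary. Hence $\Delta E_D^{\xi^{SA}}\leq \frac12\min\{\alpha,1-\alpha\}$.

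Lemma~\ref{lemma:psialpha} gives $\Delta E_D^{\Psi^\alpha}=\min\{\alpha,1-\alpha\}$. Combining the two statements yields $\Delta E_D^{\Psi^\alpha}\geq 2\Delta E_D^{\xi^{SA}}$ for every qc state with the same marginal. This inequality holds for all $\alpha\in[0,1]$.

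For strictness, take $0<\alpha<1$. Then $\min\{\alpha,1-\alpha\}>0$, and therefore
\begin{eqnarray}
\Delta E_D^{\xi^{SA}}\leq \tfrac12\Delta E_D^{\Psi^\alpha}<\Delta E_D^{\Psi^\alpha}.
\end{eqnarray}
At the endpoints $\alpha\in\{0,1\}$ the state $\Psi^\alpha$ is a product state and both gains vanish. The strict claim should therefore be understood in the entangled regime $0<\alpha<1$, where $\Psi^\alpha$ is genuinely entangled. This caveat needs to be stated explicitly in the proof.

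The main obstacle is already handled by Theorem~\ref{th:maxergqc}, whose convexity bound carries the actual work. The only step that needs care here is checking the case split in the ergotropy of $\rho^S_\alpha$, so that the qc bound reduces uniformly to half of $\min\{\alpha,1-\alpha\}$.
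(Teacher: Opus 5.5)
Your proof is correct and follows the paper's own argument: you apply the qc bound of Theorem~\ref{th:maxergqc} to the marginal $\rho^S_\alpha$, reduce the bracket to $\min\{\alpha,1-\alpha\}$ via Eq.~\eqref{eq:Epsialpha}, and compare with Lemma~\ref{lemma:psialpha}. Your explicit treatment of strictness is a correct refinement that the paper's proof leaves implicit: it needs $0<\alpha<1$, since at $\alpha\in\{0,1\}$ the state is a product state and both gains vanish.
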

\begin{proof}
In Theorem~\ref{th:maxergqc}, we have shown that for qc-states $\xi^{SA}$ with system marginal $\xi^S = \rho^S_\alpha$, the daemonic gain is upper bounded by $  \Delta E_D^{\xi^{SA}} \leq \frac12 \big[ \text{Tr} (\rho^S_\alpha H) - {\tt Erg}(\rho^S_\alpha) \big].$
    For the considered system marginal, and using Lemma~\ref{lemma:psialpha}, we have 
    \begin{eqnarray}
        \Delta E_D^{\xi^{SA}} \leq \frac12 \min \{\alpha,1-\alpha\} = \frac{1}{2}\Delta E_D^{\Psi^\alpha}.
    \end{eqnarray}
\end{proof}

Conceptually, the above result implies that for work extraction under random projective measurements, entanglement offers more than double the randomized {daemonic} gain than any quantum classical state with identical system marginal. This demonstrates that, in contrast to the standard daemonic ergotropy scenario, where optimization over measurements renders entanglement operationally equivalent to classical correlations \cite{Bernards2019} -- \emph{randomization breaks this equivalence}. In our framework of work extraction under randomized measurements, entanglement offers a clear advantage over classical correlations. Additionally, as pointed out in Lemma \ref{lemma:psialpha}, the gain offered by $\Psi^\alpha$ is fluctuation-free.

\subsection{Classification of states with fluctuation-free gain}
An important question is whether there is a general classification of states which supports fluctuation-free randomized gain. Below, we provide a complete characterization.
\begin{theorem}[Two-qubit states with positive gain and zero fluctuations]
\label{th:classification}
Let \(\rho^{SA}\) be a two-qubit state written in Bloch form as $\rho^{SA}
    =
    \frac{1}{4}
    \Big[
        \mathbb I\otimes\mathbb I
        +
        \mathbf r\cdot\boldsymbol\sigma\otimes\mathbb I
        +
        \mathbb I\otimes \mathbf a\cdot\boldsymbol\sigma
        +
        \sum_{i,j=1}^3 T_{ij}\,\sigma_i\otimes\sigma_j
    \Big]. $
The necessary and sufficient condition for $\rho^{SA}$ to have \(\Delta E_D^{\rho^{SA}}>0\) and \(\mathcal F_D^{\rho^{SA}}=0\) is given by
\begin{eqnarray}
  TT^{\mathsf T} =\beta^2 I_3 + \mathbf r\,\mathbf r^{\mathsf T}, 
  \label{eq:conditionTmatrix}
\end{eqnarray}
for some constant $\beta>0$.
\label{th:fluc_free}
\end{theorem}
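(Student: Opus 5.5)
The plan is to obtain a closed form for the measurement-resolved gain $\Delta E_D^{\rho^{SA}}(\Pi)$ of an arbitrary two-qubit state and then read off when it is constant on the Bloch sphere. For $H=\ketbra{1}{1}$, a projective measurement on $A$ along a unit vector $\mathbf n$ produces the unnormalized system states
\begin{equation*}
\rho_\pm=\tfrac14\big[(1\pm\mathbf a\cdot\mathbf n)\,\mathbb I+(\mathbf r\pm T\mathbf n)\cdot\boldsymbol\sigma\big].
\end{equation*}
For an unnormalized qubit $\tfrac12(t\,\mathbb I+\mathbf v\cdot\boldsymbol\sigma)$, the energy is $(t-v_z)/2$ and the passive energy is $(t-|\mathbf v|)/2$, so its ergotropy is $(|\mathbf v|-v_z)/2$. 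Applying this to $\rho_\pm$ gives ${\tt Erg}(\rho_\pm)=\tfrac14\big(|\mathbf r\pm T\mathbf n|-r_z\mp (T\mathbf n)_z\big)$. In the sum the $z$-terms linear in $T\mathbf n$ cancel, and subtracting ${\tt Erg}(\rho^S)=(|\mathbf r|-r_z)/2$ gives
\begin{equation*}
\Delta E_D^{\rho^{SA}}(\mathbf n)=\tfrac14\big(|\mathbf r+T\mathbf n|+|\mathbf r-T\mathbf n|\big)-\tfrac12|\mathbf r| .
\end{equation*}
This quantity is nonnegative by the triangle inequality. Since it is continuous in $\mathbf n$ and the Haar measure has full support, $\mathcal F_D^{\rho^{SA}}=0$ holds if and only if $f(\mathbf n):=|\mathbf r+T\mathbf n|+|\mathbf r-T\mathbf n|$ equals a constant $c$ for every $\mathbf n$. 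Positive gain then means $c>2|\mathbf r|$.

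The condition is geometric. The set $\{T\mathbf n:|\mathbf n|=1\}$ must lie on the surface $\mathcal S_c=\{\mathbf w:|\mathbf w+\mathbf r|+|\mathbf w-\mathbf r|=c\}$. When $c>2|\mathbf r|$ this is a prolate spheroid with foci $\pm\mathbf r$. Its semi-major axis is $c/2$ along $\hat{\mathbf r}$, and its semi-minor axes are $\beta:=\sqrt{c^2/4-|\mathbf r|^2}>0$. In quadratic form, $\mathcal S_c=\{\mathbf w:\mathbf w^{\mathsf T}A\mathbf w=1\}$ with $A^{-1}=\beta^2 I_3+\mathbf r\mathbf r^{\mathsf T}$, since $\beta^2+|\mathbf r|^2=c^2/4$ along $\hat{\mathbf r}$.

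Necessity splits by the rank of $T$. If $T$ is singular, the set $\{T\mathbf n:|\mathbf n|=1\}$ is a filled planar ellipse, a segment, or a point. In each case it contains its own center $\mathbf 0$ (the image of the sphere under a degenerate linear map is convex and symmetric). However, $f(\mathbf 0)=2|\mathbf r|<c$, which contradicts $\mathbf 0$ lying on $\mathcal S_c$. Hence $T$ must be invertible. The image of the unit sphere is then the ellipsoid $\{\mathbf w:\mathbf w^{\mathsf T}(TT^{\mathsf T})^{-1}\mathbf w=1\}$, which must be contained in the ellipsoid $\mathcal S_c$. Two centered nondegenerate ellipsoid surfaces with one contained in the other must coincide. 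To see this, note that along every ray from the origin each surface meets the ray exactly once, so their radial functions agree. Equating the quadratic forms gives $TT^{\mathsf T}=\beta^2I_3+\mathbf r\mathbf r^{\mathsf T}$ with $\beta>0$.

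Sufficiency runs the argument backwards. If Eq.~\eqref{eq:conditionTmatrix} holds with $\beta>0$, then $T$ is invertible and $\{T\mathbf n\}$ is exactly $\mathcal S_c$ with $c=2\sqrt{\beta^2+|\mathbf r|^2}$. Therefore $f\equiv c$, the fluctuation vanishes, and the gain is $\Delta E_D^{\rho^{SA}}=\tfrac12\big(\sqrt{\beta^2+|\mathbf r|^2}-|\mathbf r|\big)>0$. As a consistency check, $\Psi^\alpha$ of Lemma~\ref{lemma:psialpha} should satisfy this condition. The main obstacle is the rigidity step: handling the degenerate-rank case cleanly, and justifying that constancy of $f$ on the sphere forces equality of the two quadrics rather than a mere inclusion. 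The radial-function argument above is what I would use for the latter.
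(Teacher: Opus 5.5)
Your proof is correct, and its first half (the closed form $\Delta E_D^{\rho^{SA}}(\mathbf n)=\tfrac14 f(\mathbf n)-\tfrac12\|\mathbf r\|$, and the reduction of $\mathcal F_D^{\rho^{SA}}=0$ to constancy of $f$ on the sphere) is exactly the paper's. The rigidity step is done differently.

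\emph{Necessity.} The paper squares the focal relation twice to get the spheroid equation and substitutes $\mathbf x=T\mathbf n$. It then uses that a symmetric quadratic form constant on the unit sphere is a multiple of the identity, which gives $T^{\mathsf T}(c^2I_3-\mathbf r\mathbf r^{\mathsf T})T=c^2(c^2-\|\mathbf r\|^2)I_3$, with $2c$ the constant value of $f$. Positive-definiteness of the right-hand side gives invertibility of $T$ for free, and Sherman--Morrison converts this into $TT^{\mathsf T}$. You write the spheroid's shape matrix directly as $A=(\beta^2I_3+\mathbf r\mathbf r^{\mathsf T})^{-1}$, which spares the inversion. The cost is a separate singular-rank case and a nested-ellipsoid rigidity lemma.

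Both extra steps can be collapsed within your own setup. Since $T\mathbf n\in\mathcal S_c$ for all unit $\mathbf n$, you have $\mathbf n^{\mathsf T}T^{\mathsf T}AT\mathbf n=1$ on the sphere, hence $T^{\mathsf T}AT=I_3$; this yields invertibility and $TT^{\mathsf T}=A^{-1}$ at once. If you keep the case split, a unit vector in $\ker T$ already gives $\Delta E_D^{\rho^{SA}}(\mathbf n)=0$, contradicting a constant positive gain, so there is no need to describe the image of the sphere.

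\emph{Sufficiency.} The paper uses the polar decomposition $T=(TT^{\mathsf T})^{1/2}O$ and rotates so that $\mathbf r\parallel\hat z$. It then verifies $\|\mathbf r\pm T\mathbf n\|=\sqrt{\beta^2+\|\mathbf r\|^2}\pm\|\mathbf r\|n_z$ by direct computation. This is a coordinate proof of the focal property of the spheroid that you invoke as classical.

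Your route is more geometric and makes the picture transparent: the image ellipsoid of $T$ equals the spheroid with foci $\pm\mathbf r$. The paper's route is more self-contained and handles degenerate $T$ automatically.
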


\noindent A detailed proof of the same is provided in the Appendix~\ref{app:proof t4}.
The above condition in Eq. \eqref{eq:conditionTmatrix} is satisfied for a wide variety of states, but we give the following example to highlight an important point.
\begin{example}[Depolarized pure states]
    Consider a two-parameter family of depolarized pure states
    \begin{equation}
    \rho^{SA}_{p,\alpha}
    =
    p|\psi^\alpha\rangle\!\langle\psi^\alpha|
    +
    \frac{1-p}{4}\mathbb I,
    \label{eq:dep-pure}
\end{equation}
where $ |\psi^\alpha\rangle =\sqrt{\alpha}|01\rangle - \sqrt{1-\alpha}|10\rangle.$ 
A quick calculation reveals the correlation matrix $T = \emph{diag}\{-2p\sqrt{\alpha(1-\alpha)},-2p\sqrt{\alpha(1-\alpha)},-p \}$, and $\mathbf r = \big(0,0, p(2\alpha-1)\big)^T$. These quantities satisfy Eq. \eqref{eq:conditionTmatrix} via $ TT^{\mathsf T} =4p^2 \alpha(1-\alpha) I_3 + \mathbf r\,\mathbf r^{\mathsf T}$, where we identify $\beta = 2p\sqrt{\alpha(1-\alpha)}$. Hence, due to Theorem~\ref{th:fluc_free}, these states should provide a finite fluctuation-free gain. This is also verified by evaluating the gain and fluctuation as
\begin{equation}
    \Delta E_D^{\rho^{SA}_{p,\alpha}}
    =
    p\min\{\alpha,1-\alpha\},
    \qquad\mathcal F_D^{\rho^{SA}_{p,\alpha}}=0.
\end{equation}
\end{example}
\noindent This  follows from Eq.~\eqref{eq:ergpsiapi} and by noting that ${\tt Erg}(q\sigma+(1-q)\mathbb{I}/2) = q ~{\tt Erg}(\sigma).$ 

Unlike the case of pure entangled states, the gains offered by the depolarized pure states in Eq.~\eqref{eq:dep-pure} are lower and in fact suppressed by a factor of the mixing probability $p$.
In particular, for $\alpha = 1/2$, we recover the well-known Werner states \cite{Werner_1989} for which the system marginal is maximally mixed, i.e.,
\begin{eqnarray}
 \rho_{p,\alpha=\frac12}^{SA} :=   \rho_{_W}(p) = p \ketbra{\psi^-}{\psi^-} + \frac{1-p}{4} \mathbb{I},
 \label{eq:wernerstate}
\end{eqnarray}
where $\ketbra{\psi^-}{\psi^-}:=\Psi^{\alpha=\frac12}$. The optimal quantum-classical states with a maximally mixed marginal (see Ex.~\ref{ex:1}), yield a randomized gain of $\Delta E_D^{\xi_*^{SA}} = 1/4$ and $\mathcal{F}_D^{\xi_*^{SA}} = \frac{1}{4\sqrt{3}}$. Therefore, for $p > \frac12$, the Werner states provide strictly better randomized gain compared to the optimal qc states, $\Delta E_D^{\rho_{_W}(p)} = \frac p2 > \frac{1}{4}$. In addition, for Werner states, the gain is fluctuation-free. Consequently, even for $p = \frac12,$ where the gains offered by the corresponding Werner state and the optimal qc state match, the Werner states can be established as being better owing to their vanishing fluctuations during work extraction. This again highlights the importance of considering fluctuations in our analysis, which helps us to filter out better states for work extraction from a set of states with identical randomized gain values.



\section{Certification of Entanglement using RDE}
\label{sec:detector}
In this section, we show how the randomized daemonic gain can be used as a detector of entanglement. To this end, we need to benchmark the gains offered by separable states of $2\times 2$ systems with a given system marginal.
Formally, to address this, we need to solve the following constrained optimization problem.
\begin{align}
\Delta E_D^{\mathrm{sep},\max}(\rho^S)
:=
\max_{\substack{
\rho^{SA} \in {\tt SEP} \\
\Tr_A[\rho^{SA}] = \rho^S
}}
\;\Delta E_D(\rho^{SA}) \,,
\label{eq:sepmax}
\end{align}
where $\Delta E_D^{\mathrm{sep},\max}(\rho^S)$ is the maximal gain offered by separable states with a fixed system marginal $\rho^S$. The above optimization problem becomes analytically tractable owing to the fact that both the objective function $\Delta E_D(\rho^{SA})$ and the feasible set of separable states with fixed marginal $\rho^S$ are convex. Therefore, the maximizer of $\Delta E_D(\rho^{SA})$ is an extreme point of the feasible set. This key observation enabled us to arrive at the following theorem.
\begin{theorem}
    The maximal daemonic gain obtained for any separable state with system marginal $\rho^S$ is given by 
    \begin{eqnarray}
        \Delta E_D^{\mathrm{sep},\max}(\rho^S) = \Delta E_D(\rho^{SA}_*),
    \end{eqnarray}
    where the form of $\rho^{SA}_*$ is given by
    \begin{eqnarray}
    \rho^{SA}_*     =
    \frac12\,|\psi_+\rangle\!\langle\psi_+|
    \otimes |0\rangle\!\langle0|
    +
    \frac12\,|\psi_-\rangle\!\langle\psi_-|
    \otimes |1\rangle\!\langle1|,
    \label{eq:optimalqcstate}
    \end{eqnarray}
    with $\frac12\,|\psi_+\rangle\!\langle\psi_+|
    +
    \frac12\,|\psi_-\rangle\!\langle\psi_-| = \rho^S.$ The corresponding value of the maximal gain $\Delta E_D^{\mathrm{sep},\max}(\rho^S)$ is given by
    \begin{eqnarray}
  \frac{1}{2}
\left[
\frac{1}{2}
+
\frac{s^2}{2\sqrt{1-s^2}}
\operatorname{arcsinh}
\left(
\frac{\sqrt{1-s^2}}{s}
\right)
-s
\right],
\label{eq:mrde}
    \end{eqnarray}
    where $\rho^S
=
\frac{1}{2}
\left(
\mathbb I_S+\mathbf{r}\cdot\boldsymbol{\sigma}
\right),
~s:=\|\mathbf{r}\|\in[0,1].$
\label{th:separable}
\end{theorem}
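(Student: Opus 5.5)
The plan is to reduce everything to a Bloch-vector optimization, treating $\Delta E_D(\rho^{SA})$ as a convex function of the correlation data at fixed marginal $\rho^S=\tfrac12(\mathbb I+\mathbf r\cdot\boldsymbol\sigma)$.

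\textbf{Closed form for qubit ergotropy.} For the Hamiltonian $H=\ketbra{1}{1}$ used throughout, an unnormalized qubit operator $X=\tfrac12(t\,\mathbb I+\mathbf x\cdot\boldsymbol\sigma)$ has
\[
{\tt Erg}(X)=\tfrac12\big(\|\mathbf x\|-x_z\big).
\]
Measure $A$ along a unit vector $\mathbf n$. The two branches then carry Bloch vectors $\mathbf x_\pm=\tfrac12(\mathbf r\pm T\mathbf n)$, and their $z$-components add up to $r_z$. It follows that
\[
\Delta E_D(\mathbf n)=\tfrac14\big(\|\mathbf r+T\mathbf n\|+\|\mathbf r-T\mathbf n\|\big)-\tfrac12\|\mathbf r\|,
\]
and $\Delta E_D$ is its uniform average over the hemisphere. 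This expression is convex in $T$ for fixed $\mathbf r$. That convexity is the concrete form of the extreme-point remark preceding the theorem: since the objective is convex on the convex feasible set, it is enough to maximize over extreme points.

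\textbf{Reduction to rank-one correlation matrices.} Write a separable state as $\sum_k p_k\,\tfrac14(\mathbb I+\mathbf a_k\cdot\boldsymbol\sigma)\otimes(\mathbb I+\mathbf b_k\cdot\boldsymbol\sigma)$ with $\|\mathbf a_k\|,\|\mathbf b_k\|\le1$. This gives
\[
\mathbf r\pm T\mathbf n=\sum_k p_k\,\mathbf a_k\,(1\pm \mathbf b_k\cdot\mathbf n),\qquad \sum_k p_k\mathbf a_k=\mathbf r.
\]
The claim I would establish is that the maximizer of the averaged objective over such configurations has all $\mathbf b_k=\pm\mathbf e$ for a single direction $\mathbf e$. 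Concretely, I would show that, for fixed $\{p_k,\mathbf a_k\}$, the map $\{\mathbf b_k\}\mapsto\Delta E_D$ is convex in each $\mathbf b_k$, so each $\mathbf b_k$ can be pushed to the unit sphere. I would then argue, via a rearrangement or alignment argument under the rotation-invariant Haar average, that aligning all $\mathbf b_k$ along a common axis does not decrease the average. After this, the state is quantum-classical: $T=\mathbf m\,\mathbf e^{\mathsf T}$ with $\mathbf m=\sum_k \pm p_k\mathbf a_k$.

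A further convexity step merges the components into two groups, $\mathbf r\pm\mathbf m=\mathbf a_\pm$ with weights $\tfrac12$, and pushes $\mathbf a_\pm$ to the boundary $\|\mathbf a_\pm\|=1$ (pure $\psi_\pm$). This yields exactly the form $\rho^{SA}_*$ of Eq.~\eqref{eq:optimalqcstate}. I expect this reduction from general separable to qc to be the main obstacle. If the alignment argument fails, I would instead bound each branch norm by the triangle inequality, $\|\mathbf r\pm T\mathbf n\|\le\sum_k p_k\|\mathbf a_k\|\,|1\pm\mathbf b_k\cdot\mathbf n|$, and compare directly with the qc value.

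\textbf{Optimizing the two-pure-state qc form.} With $u=\mathbf e\cdot\mathbf n$ uniform on $[0,1]$ (as in Theorem~\ref{th:fluctuationbound}), the objective becomes
\[
\int_0^1\Big[\tfrac14\big(\|\mathbf r+u\mathbf m\|+\|\mathbf r-u\mathbf m\|\big)-\tfrac12 s\Big]\,du .
\]
The constraints $\|\mathbf r\pm\mathbf m\|=1$ force $\mathbf m\perp\mathbf r$ and $\|\mathbf m\|=\sqrt{1-s^2}$. Under these constraints the integrand reduces to $\tfrac12\sqrt{s^2+(1-s^2)u^2}-\tfrac s2$, which is pointwise maximal among admissible $\mathbf m$ by the parallelogram-type identity for the two norms.

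Finally, the elementary integral
\[
\int_0^1\sqrt{s^2+c^2u^2}\,du=\tfrac12\sqrt{s^2+c^2}+\tfrac{s^2}{2c}\,\operatorname{arcsinh}(c/s),\qquad c=\sqrt{1-s^2},
\]
gives Eq.~\eqref{eq:mrde} directly. As a consistency check, at $s=0$ this reproduces the value $1/4$ obtained for the state in Example~\ref{ex:1}.
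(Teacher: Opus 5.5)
You have the reduction right: the gain becomes $\tfrac14\big(\|\mathbf r+T\mathbf n\|+\|\mathbf r-T\mathbf n\|\big)-\tfrac12 s$, it is convex in $T$ at fixed $\mathbf r$, and your evaluation on the qc state, including the $\operatorname{arcsinh}$ integral, matches the paper. The gap is the step you flag yourself: passing from an arbitrary separable decomposition $\sum_k p_k\,\tfrac14(\mathbb I+\mathbf a_k\cdot\boldsymbol\sigma)\otimes(\mathbb I+\mathbf b_k\cdot\boldsymbol\sigma)$ to a quantum--classical one. Pushing each $\mathbf b_k$ to the unit sphere by convexity is valid. The claim that aligning all $\mathbf b_k$ along one axis never lowers the Haar average is not a routine rearrangement, though. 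It carries essentially the whole separable-versus-entangled content of the theorem, and for many terms with arbitrary directions $\mathbf b_k$ you give no argument for it. Your fallback cannot replace it. The triangle inequality only gives $\|\mathbf r+T\mathbf n\|+\|\mathbf r-T\mathbf n\|\le 2\sum_k p_k\|\mathbf a_k\|\le 2$, i.e. $\Delta E_D\le\tfrac12(1-s)$. That is just the absolute ceiling of Lemma~\ref{lemma:absmax}, which the entangled $\Psi^\alpha$ attains; at $s=0$ it gives $1/2$ instead of $1/4$.

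The paper avoids any many-term alignment argument in two stages.

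\emph{Reduction to two terms.} It maps $\rho^{SA}\mapsto[(\rho^S)^{-1/2}\otimes\mathbb I]\,\rho^{SA}\,[(\rho^S)^{-1/2}\otimes\mathbb I]$. This identifies separable states with marginal $\rho^S$ with Choi operators of qubit entanglement-breaking channels. Ruskai's result then says these channels are convex combinations of classical--quantum channels. So the extreme points of the feasible set are two-term states $p\,|\psi_1\rangle\langle\psi_1|\otimes|\phi_1\rangle\langle\phi_1|+(1-p)\,|\psi_2\rangle\langle\psi_2|\otimes|\phi_2\rangle\langle\phi_2|$, with pure but not necessarily orthogonal $\phi_i$.

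\emph{Alignment for two terms.} Only then is alignment proved, and only for two auxiliary vectors. The paper writes $T=\mathbf r\mathbf b^{\mathsf T}+\mathbf d\mathbf c^{\mathsf T}$ with $\mathbf b,\mathbf c=(\mathbf a_1\pm\mathbf a_2)/2$. A $\gamma\to-\gamma$ pairing shows $p=\tfrac12$ is optimal. Jensen's inequality for $f(q)=\sqrt{s^2+(1-s^2)q}$ then gives $F(\mathbf n)\le 2\sqrt{s^2+(1-s^2)x^2}$, with equality only if $\mathbf b=0$, i.e. orthogonal auxiliary states.

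\emph{A smaller issue in your last step.} Grouping a qc state gives $\mathbf r\pm\mathbf m=2q\mathbf s_+,\,2(1-q)\mathbf s_-$. So the admissible set is the spheroid $\|\mathbf r+\mathbf m\|+\|\mathbf r-\mathbf m\|\le 2$, not the slice $\|\mathbf r\pm\mathbf m\|\le 1$. For example, $\mathbf m=\hat{\mathbf r}$ is admissible with $\|\mathbf m\|=1$. Your parallelogram argument therefore only covers $q=\tfrac12$. The conclusion survives on the whole spheroid: on its boundary, $\|\mathbf r+u\mathbf m\|+\|\mathbf r-u\mathbf m\|$ is maximized at $\mathbf m\perp\mathbf r$, because $s^2+u^2(1-s^2)\le 1$. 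But this is the analogue of the paper's $p=\tfrac12$ step, and it has to be argued rather than assumed.
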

\noindent The proof is provided in Appendix \ref{app:methods}. The endpoints of $s=0,1$ are obtained by taking limits $0^+$ or $1^-$ of this expression.
The above theorem leads to the following observation:
\begin{observation}
Since the optimal separable state is quantum-classical, see Eq. \eqref{eq:optimalqcstate}, the bound obtained in Theorem~\ref{th:maxergqc} also holds for $2\times2$ separable states. Therefore, the maximal randomized daemonic gain of a separable state $\rho^{SA}$ with system marginal $\rho^S$ is bounded by $\Delta E_D^{\mathrm{sep},\max}(\rho^S) \leq \frac{1}{2} \Delta E_D^{\max}(\rho_S).$
    \label{ref:obs1}
\end{observation}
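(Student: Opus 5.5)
The observation follows by chaining Theorem~\ref{th:separable} with Theorem~\ref{th:maxergqc}. The only real work is checking that the maximizer $\rho^{SA}_*$ lies in the class covered by Theorem~\ref{th:maxergqc}.

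First, we note that $\rho^{SA}_*$ in Eq.~\eqref{eq:optimalqcstate} has exactly the qc form of Eq.~\eqref{eq:qcform}. Take $d_A=2$, the classical basis $\{\ket{0},\ket{1}\}$, and the unnormalised positive operators $\sigma_0=\frac12\ketbra{\psi_+}{\psi_+}$ and $\sigma_1=\frac12\ketbra{\psi_-}{\psi_-}$. These satisfy $\sigma_0+\sigma_1=\rho^S$. Hence $\rho^{SA}_*$ is a qubit--qubit qc state whose system marginal is $\rho^S$, so it is a special case of the qudit--qubit qc states treated in Theorem~\ref{th:maxergqc} (with $d_S=2$). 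Applying that theorem with $\xi^{SA}=\rho^{SA}_*$ and $\xi^S=\rho^S$ gives
\begin{eqnarray}
\Delta E_D(\rho^{SA}_*) \leq \frac12\big[\Tr(\rho^S H)-{\tt Erg}(\rho^S)\big]=\frac12\,\Delta E_D^{\max}(\rho^S). \nonumber
\end{eqnarray}

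Second, Theorem~\ref{th:separable} states that $\Delta E_D^{\mathrm{sep},\max}(\rho^S)=\Delta E_D(\rho^{SA}_*)$. Substituting this into the inequality above gives the claimed bound. Since the maximum over separable states with marginal $\rho^S$ is itself bounded, every separable $\rho^{SA}$ with that marginal obeys $\Delta E_D(\rho^{SA})\le\frac12\Delta E_D^{\max}(\rho^S)$.

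There is no genuine obstacle here; the substance sits entirely in Theorem~\ref{th:separable}, namely the extreme-point argument that reduces the convex maximization over {\tt SEP} to a qc state. The one step needing care is that Theorem~\ref{th:maxergqc} is stated for an arbitrary system Hamiltonian $H$, whereas Theorem~\ref{th:separable} uses the paper's qubit Hamiltonian $H=\ketbra{1}{1}$. We should therefore apply Theorem~\ref{th:maxergqc} with that same $H$, so that both sides of the bound refer to the same energy and ergotropy. As a sanity check, we would verify that the closed form in Eq.~\eqref{eq:mrde} indeed lies below $\frac12[\Tr(\rho^SH)-{\tt Erg}(\rho^S)]$. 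At $s=0$ this ceiling is $\frac14$, and the limit $s\to0^+$ of Eq.~\eqref{eq:mrde} is $\frac14$, so the bound is saturated there, consistent with Example~\ref{ex:1}.
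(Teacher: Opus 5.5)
Your proposal is correct and matches the paper's argument. Theorem~\ref{th:separable} identifies the separable maximizer as the qc state $\rho^{SA}_*$ of Eq.~\eqref{eq:optimalqcstate}, and applying Theorem~\ref{th:maxergqc} to that state, with $p=1/2$ and the same $H=|1\rangle\langle 1|$, gives $\Delta E_D^{\mathrm{sep},\max}(\rho^S)\leq\frac12\Delta E_D^{\max}(\rho^S)$; your sanity check extends from $s=0$ to all $s$ via the pointwise inequality $\sqrt{s^2+(1-s^2)x^2}\leq s+(1-s)x$ on $[0,1]$, which after integration gives $\Delta E_D^{\mathrm{sep},\max}(\rho^S)\leq(1-s)/4=\frac12\Delta E_D^{\max}(\rho^S)$.
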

\noindent Next, we present a sufficient condition for certification of entanglement using RDE with the aid of Theorem~\ref{th:separable} and Obs.~\ref{ref:obs1}.
\begin{theorem}[Randomized daemonic gain as an entanglement detector]
For any $2\otimes 2$ state $\chi^{SA}$ with system marginal $\chi_S$, a sufficient condition for certifying entanglement: \\
If $\Delta E_D^{\chi^{SA}} > \Delta E_D^{\mathrm{sep},\max}(\chi^S) \implies$ $\chi^{SA}$ is entangled. 
\label{th:rde-detector}
\end{theorem}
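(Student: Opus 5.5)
The plan is a contrapositive argument built directly on Theorem~\ref{th:separable}. Suppose $\chi^{SA}$ is a separable $2\otimes 2$ state with system marginal $\chi^S=\Tr_A[\chi^{SA}]$. Then $\chi^{SA}$ is a feasible point of the constrained optimization in Eq.~\eqref{eq:sepmax} with $\rho^S=\chi^S$: it lies in ${\tt SEP}$ and has the prescribed marginal. By the definition of the maximum,
\begin{equation}
\Delta E_D^{\chi^{SA}} \leq \Delta E_D^{\mathrm{sep},\max}(\chi^S).
\end{equation}
Taking the contrapositive, strict violation of this inequality rules out separability, so $\chi^{SA}$ must be entangled. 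This is exactly the claimed sufficient condition.

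The steps, in order, are as follows. First, note that the Haar-averaged gain $\Delta E_D$ of Definition~\ref{def:rde} is well defined for every two-qubit state. It is an integral of a bounded measurable function of $(\theta,\phi)$, so the comparison with the maximum is meaningful. Second, observe that the supremum in Eq.~\eqref{eq:sepmax} is attained and equals $\Delta E_D(\rho^{SA}_*)$, which Theorem~\ref{th:separable} provides together with the closed form in Eq.~\eqref{eq:mrde}. This makes the threshold an explicit, computable function of $s=\|\mathbf r\|$ for $\chi^S$. Third, apply the feasibility inequality above and take the contrapositive.

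The only step with real content is the second, and it is entirely supplied by Theorem~\ref{th:separable}. That theorem identifies the maximizer among extreme points of the convex feasible set and then evaluates the Haar integral. The present statement does not need to reprove it; it only needs the fact that $\Delta E_D^{\mathrm{sep},\max}(\chi^S)$ is a genuine upper bound over all separable states sharing the marginal $\chi^S$.

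As a complementary remark, Observation~\ref{ref:obs1} gives a coarser but simpler detector. Since $\Delta E_D^{\mathrm{sep},\max}(\chi^S)\le\frac12\Delta E_D^{\max}(\chi^S)$, any state with $\Delta E_D^{\chi^{SA}}>\frac12[\Tr(\chi^S H)-{\tt Erg}(\chi^S)]$ is also certified entangled. By Lemma~\ref{lemma:psialpha}, the states $\Psi^\alpha$ with $0<\alpha<1$ are detected by this condition, showing that the criterion is non-trivial.
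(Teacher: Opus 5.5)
Your argument is correct and is essentially the paper's. The paper gives no separate proof, since the criterion follows immediately from defining $\Delta E_D^{\mathrm{sep},\max}$ in Eq.~\eqref{eq:sepmax} as a maximum over separable states with the given marginal: any separable $\chi^{SA}$ is feasible and so cannot exceed it, and Theorem~\ref{th:separable} and Observation~\ref{ref:obs1} only supply the explicit threshold. Your remark that $\Psi^\alpha$ with $0<\alpha<1$ is detected even by the coarser $\frac12\Delta E_D^{\max}$ bound is also correct.
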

To give a concrete visualization of our investigation, 
we characterize the maximal randomized daemonic gain achievable by separable states for a fixed passive system marginal, $ \rho^S
    =
    p\,|0\rangle\!\langle 0|
    +
    (1-p)\,|1\rangle\!\langle 1|,$ where $
     p\in\left[\frac12,1\right],$ see Fig. \ref{fig:sep-passive}. 
\begin{figure}[ht]
    \centering
    \includegraphics[width=\linewidth]{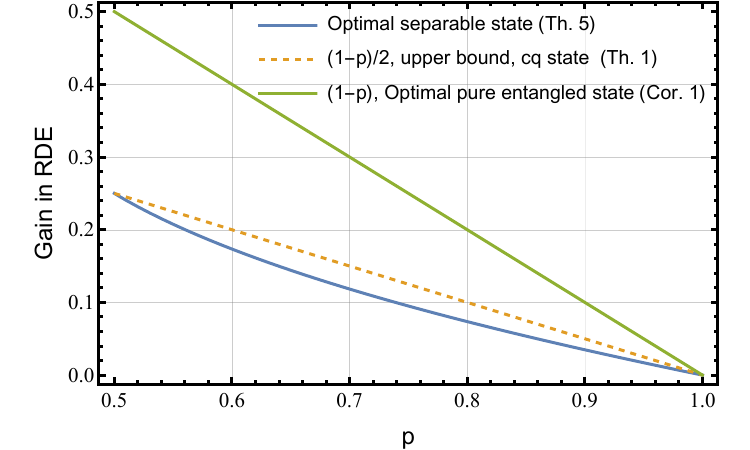}
    \caption{\textbf{Randomized daemonic gain with fixed system marginal $\rho^S=p\ketbra{0}{0}+(1-p)\ketbra{1}{1}$}. The maximal value of gain in RDE, obtained from general separable states via Theorem~\ref{th:separable} is plotted against $p$ (solid blue line). The linear bound for the cq states obtained in Theorem~\ref{th:maxergqc} is plotted via dashed orange lines. The green solid line is the maximal achievable gain for optimal pure entangled state (Corollary~\ref{lemma:absmax}). All axes are dimensionless.}
    \label{fig:sep-passive}
\end{figure}
Notably, Theorem \ref{th:rde-detector} provides a stronger criterion compared to the $\frac12 \Delta E_D^{\max}(\rho_S)$ bound reported in  Observation \ref{ref:obs1}. The bounds match for the case of maximal mixed marginal $\rho^S = \mathbb{I}/2$, $\rho^{SA}_{*}$ takes the form of the state in Example \ref{ex:1}. This immediately implies that, for the Werner states (see Eq. \eqref{eq:wernerstate}), entanglement can be certified for $p > \frac12$.

We end this section with an important observation. Although we have established the randomized demonic game to detect entanglement of Werner states for $p>\frac12$, does it imply that for $p=\frac12$, the corresponding Werner state does not possess any other identifiable signature owing to its non-vanishing entanglement? From our analysis in the previous sections, we know that the gains offered by Werner states are fluctuation-free. As a result, even for $p$ equal to half, for which the gains offered by the Werner state match the bound for the separable states, we can filter out the Werner states from their property of vanishing fluctuations. This again highlights the importance of investigating work extraction through the lens of fluctuations.

\section{Partially randomized measurements}
\label{sec:partial_rde}
Until this point of our manuscript, we have considered the measurements to be completely random, that is, they are drawn from a uniform distribution over the Bloch sphere. We now turn to a more structured scenario in which measurements are not fully random but are instead restricted to lie close to the optimal measurement configuration that yields the maximal daemonic gain. 
We model this feature by considering measurements drawn from a polar cap centered around the optimal measurement direction. A polar cap of angular (latitudinal) extent $\theta_0$ is described by the probability density function
\begin{equation}
p(\theta,\phi) =
\begin{cases}
\dfrac{1}{2\pi \left(1 - \cos \theta_0 \right)}, & 0 \le \theta \le \theta_0,\; 0 \le \phi < 2\pi, \\
0, & \text{otherwise},
\end{cases}
\label{eq:polar}
\end{equation}
where $\theta$ is the polar angle measured from $z$-axis, since we consider the optimal measurements to be the $\{\ket0, \ket 1\}$ basis, and $\phi$ is the azimuthal angle.
This distribution is uniform over the spherical cap and normalized with respect to the standard measure on the Bloch sphere, i.e., $\int_0^{2\pi} d\phi \int_0^\pi d\theta \, \sin\theta \, p(\theta,\phi) = 1.$ Such distributions have been investigated in other spheres of quantum information, especially in quantum teleportation \cite{Roy2022}.

\subsection{The quantum classical benchmark}
The central issue of investigation in this section is to establish a benchmark for the daemonic gain and the corresponding fluctuations achievable by quantum-classical states when measurements are drawn from the above polar-cap distribution. In particular, this benchmark quantifies the maximal advantage attainable in the absence of quantum correlations beyond classical correlations, under restricted measurement resources. This will allow us to meaningfully compare the performance of general quantum states against this classical baseline when measurements are confined to a neighbourhood of the optimal configuration.

\begin{theorem}[Bounds on gain and fluctuation for qc states]
    The randamized daemonic gain of a qudit-qubit quantum classical state $\xi^{SA}$ with marginal state $\xi^S$ when the measurements are chosen from a polar cap of latitudinal extent of $\theta_0$ is bounded by
    \begin{eqnarray}
        \Delta E_D^{\xi^{SA}}[\theta_0] \leq \cos^2 \frac{\theta_0}{2} \big[~\emph{Tr}(\xi^SH) - {\tt Erg}(\xi^S)\big].
        \label{eq:gaintheta}
    \end{eqnarray}
    The corresponding fluctuation in work extraction is lower bounded as follows
    \begin{eqnarray}
   \mathcal F_D^{\xi^{SA}}[\theta_0]
   \geq
    \frac{\tan^2 \frac{\theta_0}{2}}{\sqrt{3}} \Delta E_D^{\xi^{SA}}[\theta_0].
\end{eqnarray}    
\label{th:boundpolarcap}
\end{theorem}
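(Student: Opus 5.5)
The plan is to reuse two ingredients from the fully randomized case: the pointwise convexity bound in the proof of Theorem~\ref{th:maxergqc}, and the convexity/boundary structure of the gain function from the proof of Theorem~\ref{th:fluctuationbound}. Only the averaging measure changes, from the full hemisphere to the cap \eqref{eq:polar}. Write the qc state as in Eq.~\eqref{eq:defxi}, with its classical basis $\{\ket0,\ket1\}$ along the cap axis; this is the sense in which the cap is centred on the optimal measurement. As before, the post-measurement branches depend only on $\theta$. I will change variables to $u=\cos\theta$. Under the density $p(\theta,\phi)$, $u$ is uniformly distributed on $[c,1]$ with $c:=\cos\theta_0$, since $\sin\theta\, d\theta = -du$ and the normalisation $2\pi(1-\cos\theta_0)$ becomes $1/(1-c)$ after the $\phi$ integral.

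\emph{Gain bound.} The key step of Theorem~\ref{th:maxergqc}, which I take as given, is
\[
E_D^{\xi^{SA}}(\theta)\le 2\sin^2\tfrac{\theta}{2}\,{\tt Erg}(\xi^S)+\cos\theta\,\Tr(\xi^SH).
\]
Using $2\sin^2\frac\theta2 = 1-u$, subtracting ${\tt Erg}(\xi^S)$ gives the pointwise bound
\[
g(u):=\Delta E_D^{\xi^{SA}}(u)\le u\,\big[\Tr(\xi^SH)-{\tt Erg}(\xi^S)\big].
\]
Averaging $u$ uniformly over $[c,1]$ gives $\frac{1+c}{2}=\cos^2\frac{\theta_0}{2}$, which is exactly Eq.~\eqref{eq:gaintheta}. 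This part should be routine.

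\emph{Fluctuation bound.} From the proof of Theorem~\ref{th:fluctuationbound} I take three facts. First, $g$, defined on all of $[0,1]$, is convex, because ergotropy is convex and the branches are mixtures. Second, $g(0)=0$. Third, $g\ge 0$. Together these imply that $h(u):=g(u)/u$ is nonnegative and nondecreasing on $(0,1]$. Write $\mathbb E$ for the uniform average on $[c,1]$. The target inequality is
\[
\frac{\mathbb E[g^2]}{\mathbb E[g]^2}\ \ge\ \frac{\mathbb E[u^2]}{\mathbb E[u]^2},
\]
since the right side minus one equals $\mathrm{Var}(u)/\mathbb E[u]^2 = \frac{(1-c)^2/12}{(1+c)^2/4}=\frac{\tan^4(\theta_0/2)}{3}$, which is the claimed ratio squared.

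To prove it, introduce the tilted probability measure $d\mu = u\,du/\!\int u\,du$ on $[c,1]$. Then
\[
\frac{\mathbb E[u^2h^2]}{\mathbb E[u]} = \mathbb E_\mu[u\,h^2] \ge \mathbb E_\mu[u]\,\mathbb E_\mu[h^2] \ge \mathbb E_\mu[u]\,\mathbb E_\mu[h]^2 .
\]
The first inequality is Chebyshev's sum inequality, valid because $u$ and $h^2$ are both nondecreasing. The second is Jensen. Substituting $\mathbb E_\mu[u]=\mathbb E[u^2]/\mathbb E[u]$ and $\mathbb E_\mu[h]=\mathbb E[g]/\mathbb E[u]$ yields the target. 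Equality holds for linear $g=ku$, and as a check, $\theta_0=\pi/2$ reproduces Theorem~\ref{th:fluctuationbound}.

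The main obstacle is justifying that $h=g/u$ is monotone, and in particular that $g$ extends convexly down to $u=0$ with $g(0)=0$ even though the cap only samples $u\ge c$. The plan is to import this directly from the structure established for Theorem~\ref{th:fluctuationbound}, since $g$ is a property of the state and not of the sampling measure. If the monotonicity is delicate, I would instead bound the second moment directly by comparing $g$ with the chord from $(0,0)$, which has the same mean on $[c,1]$.
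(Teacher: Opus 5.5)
Your gain bound follows the paper's argument: the pointwise convexity bound from Theorem~\ref{th:maxergqc} averaged over the cap, with $\mathbb{E}[u]=\cos^2\frac{\theta_0}{2}$. For the fluctuation bound you take a genuinely different route to the same inequality, and it is correct.

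\emph{The paper's route.} It compares $g$ with the line through the origin that has the same cap-mean, $\ell(u)=2mu/(1+c)$ with $m=\Delta E_D^{\xi^{SA}}[\theta_0]$. It uses monotonicity of $g(u)/u$ only to get $g(c)\le\ell(c)$. It then argues that the convex, zero-mean function $g-\ell$ changes sign once, from negative to positive. Pairing it with the nondecreasing weight $g+\ell$ gives $\int_c^1(g-\ell)(g+\ell)\,du\ge 0$, hence $\mathbb{E}[g^2]\ge\mathbb{E}[\ell^2]$, and it finishes by evaluating $\mathbb{E}[\ell^2]$ explicitly.

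\emph{Your route.} You factor $g=u\,h$ with $h\ge 0$ nondecreasing. You then obtain $\mathbb{E}[g^2]/\mathbb{E}[g]^2\ge\mathbb{E}[u^2]/\mathbb{E}[u]^2$ directly from Chebyshev's integral inequality under the $u$-tilted measure, followed by Jensen. The identities $\mathbb{E}_\mu[uh^2]=\mathbb{E}[g^2]/\mathbb{E}[u]$ and $\mathbb{E}_\mu[h]=\mathbb{E}[g]/\mathbb{E}[u]$ and the variance ratio $(1-c)^2/[3(1+c)^2]$ all check out.

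\emph{Comparison.} Both proofs rest on the same input: $g$ is convex on all of $[0,1]$ with $g(0)=0$. This is a property of the state, not of the sampling measure; you flag it explicitly, while the paper uses it silently.

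Your version avoids the sign-change bookkeeping and makes the equality case transparent: $h$ constant, i.e.\ $g$ linear through the origin, as for the state of Example~\ref{ex:1}. It is also measure-agnostic. It gives $\mathcal{F}_D\ge\bigl(\sqrt{\mathrm{Var}(u)}/\mathbb{E}[u]\bigr)\,\Delta E_D$ for any distribution of $u$ on $[0,1]$ with $\mathbb{E}[u]>0$.

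The paper's version instead exhibits the extremal comparison function $\ell$ explicitly. Your stated fallback (comparing with the chord through the origin of equal mean) is exactly that argument.

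\emph{Shared implicit assumption.} Both arguments, like the statement itself, need $\theta_0\le\pi/2$ so that $c\ge 0$. For larger caps the branches are no longer convex mixtures, and the bound fails. At $\theta_0=\pi$ it would force $\Delta E_D\le 0$, whereas the full-sphere average equals the hemisphere average.
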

\noindent The details of the proof are provided in the Appendix \ref{app:prooft6}. The flow of logic bears a lot of resemblance to the proofs in Theorems \ref{th:maxergqc} and \ref{th:fluctuationbound}. Importantly, Theorem \ref{th:boundpolarcap} recovers the results for the case when completely randomized measurements are employed for work extraction when we substitute $\theta_0 = \pi/2$.

The above bounds are satisfied for maximally mixed marignal by the same quantum-classical state as in Example \ref{ex:1}. The corresponding values of randomized daemonic gain and fluctuation are given by $E_D^{\xi^{SA}_*}[\theta_0]=\frac12 \cos^2\frac{\theta_0}{2}$ and $\mathcal F_D^{\xi^{SA}_*}[\theta_0]= \frac{1}{2\sqrt{3}}\sin^2 \frac{\theta_0}{2}= \frac{\tan^2 \frac{\theta_0}{2}}{\sqrt{3}}E_D^{\xi^{SA}_*}[\theta_0].$

\subsection{Advantages of entanglement}
In this section, we recalibrate the advantages offered by entanglement over qc states when the measurements are chosen from a polar cap, as in Eq. \eqref{eq:polar}. Like in the previous section, the advantages of entanglement are benchmarked against non-maximally entangled pure states state $  \ket{\Psi^\alpha}=\sqrt{\alpha}\ket{01}-\sqrt{1-\alpha}\ket{10},$ where $0\leq \alpha\leq 1.$

\begin{observation}[Enhancement in RDE]
\label{obs:enhancementpolar}
When the measurement employed for work extraction is restricted to the polar cap of latitudinal extent $\theta_0$, the randomized daemonic gain $\Delta E_D^{\Psi^\alpha}$ for an entangled pure state $\rho^{SA} =\Psi^\alpha$ with system marginal $\rho^S_\alpha$ is strictly greater than the gain offered by any qc-state with the identical system marginal, and satisfies the following inequality
    \begin{eqnarray}
        \Delta E_D^{\Psi^\alpha}{[\theta_0]} \geq \frac{2}{1+\cos \theta_0}  ~\Delta E_D^{{\xi^{SA}}}[\theta_0].
    \end{eqnarray}
\end{observation}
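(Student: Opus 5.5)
The plan is to compute the polar-cap gain of $\Psi^\alpha$ exactly and then compare it with the qc bound of Theorem~\ref{th:boundpolarcap}.

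\textbf{Gain of $\Psi^\alpha$ on the cap.} By Eq.~\eqref{eq:ergpsiapi} in the proof of Lemma~\ref{lemma:psialpha}, $E_D^{\Psi^\alpha}(\Pi)=1-\alpha$ for \emph{every} basis $\Pi$. Since the polar-cap density in Eq.~\eqref{eq:polar} is normalized, averaging a constant leaves it unchanged, so $E_D^{\Psi^\alpha}[\theta_0]=1-\alpha$. Subtracting ${\tt Erg}(\rho^S_\alpha)$ from Eq.~\eqref{eq:Epsialpha} then gives
\begin{equation}
\Delta E_D^{\Psi^\alpha}[\theta_0]=\min\{\alpha,1-\alpha\}
\end{equation}
for all $\theta_0$. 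The fluctuation of this gain is zero.

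\textbf{QC bound for the same marginal.} Apply Eq.~\eqref{eq:gaintheta} with $\xi^S=\rho^S_\alpha$. As in the corollary following Lemma~\ref{lemma:psialpha}, $\mathrm{Tr}(\rho^S_\alpha H)-{\tt Erg}(\rho^S_\alpha)=\min\{\alpha,1-\alpha\}$. Hence
\begin{equation}
\Delta E_D^{\xi^{SA}}[\theta_0]\le \cos^2\frac{\theta_0}{2}\,\min\{\alpha,1-\alpha\}=\frac{1+\cos\theta_0}{2}\,\Delta E_D^{\Psi^\alpha}[\theta_0].
\end{equation}
Rearranging gives the claimed factor $2/(1+\cos\theta_0)$.

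\textbf{Strictness.} For $0<\theta_0\le\pi/2$ we have $\cos^2(\theta_0/2)<1$. So whenever $\min\{\alpha,1-\alpha\}>0$, the inequality above is strict. The degenerate cases need to be set aside: $\alpha\in\{0,1\}$, where both gains vanish, and $\theta_0=0$, where the factor equals one.

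\textbf{Main obstacle.} All the real content lies in the qc bound of Theorem~\ref{th:boundpolarcap}, which is assumed here. Its proof would mirror that of Theorem~\ref{th:maxergqc}. One uses the convexity bound $E_D^{\xi}(\theta)\le 2\sin^2\frac{\theta}{2}{\tt Erg}(\xi^S)+\cos\theta\,\mathrm{Tr}(\xi^S H)$ and averages it with the cap weight $\sin\theta/(1-\cos\theta_0)$ over $[0,\theta_0]$. The average of $\cos\theta$ over the cap is $\frac{1+\cos\theta_0}{2}=\cos^2\frac{\theta_0}{2}$, which yields the factor above.
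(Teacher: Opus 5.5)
Your proposal is correct and follows the paper's argument: the paper likewise combines the measurement-independence of $E_D^{\Psi^\alpha}(\Pi)=1-\alpha$ from Lemma~\ref{lemma:psialpha}, which gives $\Delta E_D^{\Psi^\alpha}[\theta_0]=\min\{\alpha,1-\alpha\}$ for any normalized cap distribution, with the qc bound $\cos^2\frac{\theta_0}{2}\,\min\{\alpha,1-\alpha\}$ from Eq.~\eqref{eq:gaintheta}, and then uses $\cos^2\frac{\theta_0}{2}=\frac{1+\cos\theta_0}{2}$. Your explicit conditions for strictness, namely $\theta_0>0$ and $0<\alpha<1$, make precise what the paper leaves implicit.
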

\noindent This follows directly from Lemma \ref{lemma:psialpha} and Eq. \eqref{eq:gaintheta} of Theorem \ref{th:boundpolarcap}. Interestingly, the advantages of entanglement persist for any finite extent of the polar cap, $\theta_0>0.$ Incidentally, the maximal gap is achieved for $\theta_0 = \frac{\pi}{2},$ i.e., the uniform case. Moreover, since Theorem~\ref{th:classification} directly holds independent of the probability density function, the general classification of fluctuation-free finite gain states holds for any distribution of the measurements. What changes is the difference between the gains offered by states possessing quantum correlations compared to the quantum-classical benchmark. For pure entangled states, the advantage in gain for the polar cap distribution is reported in Observation \ref{obs:enhancementpolar}.




\section{Conclusion}
\label{sec:con}
The role of quantum correlations between the system and the auxiliary in work extraction demands critical assessment, not only to unlock potential technological applications but also because of their fundamental importance in understanding quantum thermodynamics. Measurement on the correlated auxiliary can induce a higher value of ergotropy, known as daemonic ergotropy \cite{Francica2017}. However, in this realm of daemonic work extraction, performing state-dependent optimal measurement scheme banishes the role of bipartite entanglement \cite{Bernards2019}, as separable quantum-classical (qc) states are sufficient to yield maximum achievable ergotropy. In contrast, this work has analyzed the daemonic ergotropic gain via randomized projective measurements, a framework we term as \textit{randomized daemonic ergotropy (RDE)}. 

In this setting, we benchmark the gains and corresponding measurement-induced fluctuations by the qc states via analytical bounds. In particular, we establish a no-go theorem for qc states forbidding them to have any ``fluctuation-free gains".  
Interestingly, our investigation asserts the superiority of entanglement over classical correlations in this randomized regime, both in terms of gain and fluctuations. In particular, the best gains provided by entangled states are guaranteed to be at least twice those of any qc states with identical system marginal. At the same time, the gain obtained is also demonstrated to be fluctuation-free.
For $2 \times 2$ systems, we also rigorously show that the superiority of entanglement continues over the entire class of separable states and the same bound obtained for the qc states still holds. 
Consequently, our results provide an operational characterization of RDE via a sufficient criterion to detect bipartite entangled states.
Furthermore, we also provide a complete characterization $2 \times 2$ states that support fluctuation-free gains. For instance, there exist parameter regimes of the Werner state that yield a greater (or equal) work gain while simultaneously maintaining fluctuation-free work than the optimal qc state with identical marginal.

Finally, from a realistic perspective of the inability to perform fine-tuned measurements, we have analyzed the daemonic ergotropic gain with \textit{partially} randomized measurements which are performed uniformly around the optimal direction. Remarkably, both the no-go theorem prohibiting fluctuation-free work extraction from qc states and the advantages conferred by entanglement in enhancing ergotropic gain or enabling zero-fluctuation work extraction remain valid under arbitrary measurement randomizations. What changes is the scaling factors that, in turn, depend on the extent of randomization.

Our work thus paves the way to explore and reassess the role of various quantum correlations, specifically entanglement, in the premise of daemonic work extraction. It provides a sufficient condition for detection of entanglement and highlights the qualitative change of physics when one relaxes the rather artificial configuration of optimal measurements and works in the framework of RDE. It would be an interesting area of future study to address the impact of higher-dimensional auxiliary systems and more generalized random measurements.

\section*{Acknowledgements}
The authors thank Paranjoy Chaki for insightful discussions. This work was supported by the National Research Foundation of Korea (NRF) grants funded by the Korea government (MSIT) (No.~RS-2024-00413957 and No. RS-2024-00438415), and the Institute of Information \& Communications Technology Planning \& Evaluation (IITP) grant funded by the Korea government (MSIT) (IITP-2025-RS-2020-II201606 and IITP-2025-RS-2024-00437191). TG is supported by the Slovak Research and Development Agency through Grant no. APVV-22-0570, the Scientific Grant Agency of the Ministry of Education, Slovak Republic through Grant no. VEGA 2/0128/24 and the Štefan Schwarz Support Fund 2025/OV1/046 by the Slovak Academy of Sciences. Parts of a couple of proofs were verified using GPT-5.6 Sol.

\bibliography{ref.bib}

@book{Binder2018,
  editor    = {Binder, Felix and Correa, Luis A. and Gogolin, Christian and Anders, Janet and Adesso, Gerardo},
  title     = {Thermodynamics in the Quantum Regime: Fundamental Aspects and New Directions},
  series    = {Fundamental Theories of Physics},
  volume    = {195},
  publisher = {Springer},
  address   = {Cham},
  year      = {2018},
  doi       = {10.1007/978-3-319-99046-0},
  url       = {https://doi.org/10.1007/978-3-319-99046-0}
}

@article{Goold2016,
  author  = {Goold, John and Huber, Marcus and Riera, Arnau and del Rio, L{\'i}dia and Skrzypczyk, Paul},
  title   = {The role of quantum information in thermodynamics---a topical review},
  journal = {Journal of Physics A: Mathematical and Theoretical},
  volume  = {49},
  number  = {14},
  pages   = {143001},
  year    = {2016},
  doi     = {10.1088/1751-8113/49/14/143001},
  url     = {https://doi.org/10.1088/1751-8113/49/14/143001}
}

@article{Allahverdyan2004,
  author  = {Allahverdyan, Armen E. and Balian, Roger and Nieuwenhuizen, Theo M.},
  title   = {Maximal work extraction from finite quantum systems},
  journal = {Europhysics Letters},
  volume  = {67},
  number  = {4},
  pages   = {565--571},
  year    = {2004},
  doi     = {10.1209/epl/i2004-10101-2},
  url     = {https://doi.org/10.1209/epl/i2004-10101-2}
}

@article{PerarnauLlobet2015,
  author  = {Perarnau-Llobet, Mart{\'i} and Hovhannisyan, Karen V. and Huber, Marcus and Skrzypczyk, Paul and Brunner, Nicolas and Ac{\'i}n, Antonio},
  title   = {Extractable Work from Correlations},
  journal = {Physical Review X},
  volume  = {5},
  number  = {4},
  pages   = {041011},
  year    = {2015},
  doi     = {10.1103/PhysRevX.5.041011},
  url     = {https://doi.org/10.1103/PhysRevX.5.041011}
}

@article{Huber2015,
  author  = {Huber, Marcus and Perarnau-Llobet, Mart{\'i} and Hovhannisyan, Karen V. and Skrzypczyk, Paul and Kl{\"o}ckl, Claude and Brunner, Nicolas and Ac{\'i}n, Antonio},
  title   = {Thermodynamic cost of creating correlations},
  journal = {New Journal of Physics},
  volume  = {17},
  pages   = {065008},
  year    = {2015},
  doi     = {10.1088/1367-2630/17/6/065008},
  url     = {https://doi.org/10.1088/1367-2630/17/6/065008}
}

@article{Friis2018,
  author  = {Friis, Nicolai and Huber, Marcus},
  title   = {Precision and Work Fluctuations in Gaussian Battery Charging},
  journal = {Quantum},
  volume  = {2},
  pages   = {61},
  year    = {2018},
  doi     = {10.22331/q-2018-04-23-61},
  url     = {https://doi.org/10.22331/q-2018-04-23-61}
}

@article{Francica2017,
  author  = {Francica, Gianluca and Goold, John and Plastina, Francesco and Paternostro, Mauro},
  title   = {Daemonic ergotropy: enhanced work extraction from quantum correlations},
  journal = {npj Quantum Information},
  volume  = {3},
  pages   = {12},
  year    = {2017},
  doi     = {10.1038/s41534-017-0012-8},
  url     = {https://doi.org/10.1038/s41534-017-0012-8}
}

@article{Manzano2018,
  author  = {Manzano, Gonzalo and Plastina, Francesco and Zambrini, Roberta},
  title   = {Optimal Work Extraction and Thermodynamics of Quantum Measurements and Correlations},
  journal = {Physical Review Letters},
  volume  = {121},
  number  = {12},
  pages   = {120602},
  year    = {2018},
  doi     = {10.1103/PhysRevLett.121.120602},
  url     = {https://doi.org/10.1103/PhysRevLett.121.120602}
}

@article{Elben2020,
  author  = {Elben, Andreas and Vermersch, Beno{\^i}t and Brydges, Tim and Gong, Zhe-Xuan and Linke, Norbert M. and Jurcevic, Petar and Lanyon, Ben P. and Zoller, Peter and Blatt, Rainer and Roos, Christian F.},
  title   = {Mixed-State Entanglement from Local Randomized Measurements},
  journal = {Physical Review Letters},
  volume  = {125},
  number  = {20},
  pages   = {200501},
  year    = {2020},
  doi     = {10.1103/PhysRevLett.125.200501},
  url     = {https://doi.org/10.1103/PhysRevLett.125.200501}
}

@article{Emerson2005,
  author  = {Emerson, Joseph and Alicki, Robert and {\.Z}yczkowski, Karol},
  title   = {Scalable Noise Estimation with Random Unitary Operators},
  journal = {Journal of Optics B: Quantum and Semiclassical Optics},
  volume  = {7},
  number  = {10},
  pages   = {S347--S352},
  year    = {2005},
  doi     = {10.1088/1464-4266/7/10/021},
  url     = {https://doi.org/10.1088/1464-4266/7/10/021}
}

@article{Knill2008,
  author  = {Knill, E. and Leibfried, D. and Reichle, R. and Britton, J. and Blakestad, R. B. and Jost, J. D. and Langer, C. and Ozeri, R. and Seidelin, S. and Wineland, D. J.},
  title   = {Randomized Benchmarking of Quantum Gates},
  journal = {Physical Review A},
  volume  = {77},
  number  = {1},
  pages   = {012307},
  year    = {2008},
  doi     = {10.1103/PhysRevA.77.012307},
  url     = {https://doi.org/10.1103/PhysRevA.77.012307}
}

@article{Huang2020,
  author  = {Huang, Hsin-Yuan and Kueng, Richard and Preskill, John},
  title   = {Predicting many properties of a quantum system from very few measurements},
  journal = {Nature Physics},
  volume  = {16},
  pages   = {1050--1057},
  year    = {2020},
  doi     = {10.1038/s41567-020-0932-7},
  url     = {https://doi.org/10.1038/s41567-020-0932-7}
}

@article{Elben2019,
  author  = {Elben, Andreas and Vermersch, Beno{\^i}t and Dalmonte, Marcello and Cirac, J. Ignacio and Zoller, Peter},
  title   = {Statistical correlations between locally randomized measurements: A toolbox for probing entanglement in many-body quantum states},
  journal = {Physical Review A},
  volume  = {99},
  number  = {5},
  pages   = {052323},
  year    = {2019},
  doi     = {10.1103/PhysRevA.99.052323},
  url     = {https://doi.org/10.1103/PhysRevA.99.052323}
}

@article{CieliskiReview_2024,
   title={Analysing quantum systems with randomised measurements},
   volume={1095},
   ISSN={0370-1573},
   url={http://dx.doi.org/10.1016/j.physrep.2024.09.009},
   DOI={10.1016/j.physrep.2024.09.009},
   journal={Physics Reports},
   publisher={Elsevier BV},
   author={Cieśliński, Paweł and Imai, Satoya and Dziewior, Jan and Gühne, Otfried and Knips, Lukas and Laskowski, Wiesław and Meinecke, Jasmin and Paterek, Tomasz and Vértesi, Tamás},
   year={2024},
   month=Dec, pages={1–48} }

@article{Sapienza2019,
   author = {Facundo Sapienza and Federico Cerisola and Augusto J. Roncaglia},
   doi = {10.1038/s41467-019-10572-8},
   issn = {2041-1723},
   issue = {1},
   journal = {Nature Communications},
   month = {6},
   pages = {2492},
   title = {Correlations as a resource in quantum thermodynamics},
   volume = {10},
   year = {2019}
}

@article{Acin2015,
  title = {Extractable Work from Correlations},
  author = {Perarnau-Llobet, Mart\'{\i} and Hovhannisyan, Karen V. and Huber, Marcus and Skrzypczyk, Paul and Brunner, Nicolas and Ac\'{\i}n, Antonio},
  journal = {Phys. Rev. X},
  volume = {5},
  issue = {4},
  pages = {041011},
  numpages = {14},
  year = {2015},
  month = {Oct},
  publisher = {American Physical Society},
  doi = {10.1103/PhysRevX.5.041011},
  url = {https://link.aps.org/doi/10.1103/PhysRevX.5.041011}
}

@article{Rath_2021,
  title = {Quantum Fisher Information from Randomized Measurements},
  author = {Rath, Aniket and Branciard, Cyril and Minguzzi, Anna and Vermersch, Beno\^{\i}t},
  journal = {Phys. Rev. Lett.},
  volume = {127},
  issue = {26},
  pages = {260501},
  numpages = {6},
  year = {2021},
  month = {Dec},
  publisher = {American Physical Society},
  doi = {10.1103/PhysRevLett.127.260501},
  url = {https://link.aps.org/doi/10.1103/PhysRevLett.127.260501}
}

@article{Bernards2019,
  title = {Daemonic Ergotropy: Generalised Measurements and Multipartite Settings},
  volume = {21},
  ISSN = {1099-4300},
  url = {http://dx.doi.org/10.3390/e21080771},
  DOI = {10.3390/e21080771},
  number = {8},
  journal = {Entropy},
  publisher = {MDPI AG},
  author = {Bernards,  Fabian and Kleinmann,  Matthias and G\"{u}hne,  Otfried and Paternostro,  Mauro},
  year = {2019},
  month = Aug,
  pages = {771}
}

@article{Mir2022,
  title = {Thermodynamic Signatures of Genuinely Multipartite Entanglement},
  author = {Puliyil, Samgeeth and Banik, Manik and Alimuddin, Mir},
  journal = {Phys. Rev. Lett.},
  volume = {129},
  issue = {7},
  pages = {070601},
  numpages = {7},
  year = {2022},
  month = {Aug},
  publisher = {American Physical Society},
  doi = {10.1103/PhysRevLett.129.070601},
  url = {https://link.aps.org/doi/10.1103/PhysRevLett.129.070601}
}

@article{Mir2026,
  title = {Ergotropic Characterization of Continuous-Variable Entanglement},
  author = {Polo-Rodr\'{\i}guez, Beatriz and Centrone, Federico and Adesso, Gerardo and Alimuddin, Mir},
  journal = {Phys. Rev. Lett.},
  volume = {136},
  issue = {5},
  pages = {050201},
  numpages = {7},
  year = {2026},
  month = {Feb},
  publisher = {American Physical Society},
  doi = {10.1103/43jm-qkz2},
  url = {https://link.aps.org/doi/10.1103/43jm-qkz2}
}

@misc{mrinmoy2026,
      title={Hierarchies of Gaussian multimode entanglement from thermodynamic quantifiers}, 
      author={Mrinmoy Samanta and Sudipta Mondal and Ayan Patra and Saptarshi Roy and Aditi Sen De},
      year={2026},
      eprint={2602.18816},
      archivePrefix={arXiv},
      primaryClass={quant-ph},
      url={https://arxiv.org/abs/2602.18816}, 
}

@article{Alimuddin2019,
  title = {Bound on ergotropic gap for bipartite separable states},
  author = {Alimuddin, Mir and Guha, Tamal and Parashar, Preeti},
  journal = {Phys. Rev. A},
  volume = {99},
  issue = {5},
  pages = {052320},
  numpages = {17},
  year = {2019},
  month = {May},
  publisher = {American Physical Society},
  doi = {10.1103/PhysRevA.99.052320},
  url = {https://link.aps.org/doi/10.1103/PhysRevA.99.052320}
}

@article{nv1,
  title = {Measure of nonlocality which is maximal for maximally entangled qutrits},
  author = {Fonseca, E. A. and Parisio, Fernando},
  journal = {Phys. Rev. A},
  volume = {92},
  issue = {3},
  pages = {030101},
  numpages = {5},
  year = {2015},
  month = {Sep},
  publisher = {American Physical Society},
  doi = {10.1103/PhysRevA.92.030101},
  url = {https://link.aps.org/doi/10.1103/PhysRevA.92.030101}
}

@article{nv2,
  title = {Volume of violation of Bell-type inequalities as a measure of nonlocality},
  author = {Barasi\ifmmode \acute{n}\else \'{n}\fi{}ski, Artur and Nowotarski, Mateusz},
  journal = {Phys. Rev. A},
  volume = {98},
  issue = {2},
  pages = {022132},
  numpages = {9},
  year = {2018},
  month = {Aug},
  publisher = {American Physical Society},
  doi = {10.1103/PhysRevA.98.022132},
  url = {https://link.aps.org/doi/10.1103/PhysRevA.98.022132}
}

@article{nv3,
  title = {Towards an equivalence between maximal entanglement and maximal quantum nonlocality},
  volume = {20},
  ISSN = {1367-2630},
  url = {http://dx.doi.org/10.1088/1367-2630/aaca22},
  DOI = {10.1088/1367-2630/aaca22},
  number = {6},
  journal = {New Journal of Physics},
  publisher = {IOP Publishing},
  author = {Lipinska,  Victoria and Curchod,  Florian J. and Máttar,  Alejandro and Acín,  Antonio},
  year = {2018},
  month = June,
  pages = {063043}
}

@article{Roy2022,
  title = {Gain in performance of teleportation with uniformity-breaking distributions},
  author = {Roy, Saptarshi and Mal, Shiladitya and Sen(De), Aditi},
  journal = {Phys. Rev. A},
  volume = {105},
  issue = {2},
  pages = {022610},
  numpages = {12},
  year = {2022},
  month = {Feb},
  publisher = {American Physical Society},
  doi = {10.1103/PhysRevA.105.022610},
  url = {https://link.aps.org/doi/10.1103/PhysRevA.105.022610}
}

@article{Laing_2010,
  title = {Reference-frame-independent quantum key distribution},
  author = {Laing, Anthony and Scarani, Valerio and Rarity, John G. and O'Brien, Jeremy L.},
  journal = {Phys. Rev. A},
  volume = {82},
  issue = {1},
  pages = {012304},
  numpages = {5},
  year = {2010},
  month = {Jul},
  publisher = {American Physical Society},
  doi = {10.1103/PhysRevA.82.012304},
  url = {https://link.aps.org/doi/10.1103/PhysRevA.82.012304}
}

@article{Liu_2019,
  title = {Reference-Frame-Independent Quantum Key Distribution Using Fewer States},
  author = {Liu, Hongwei and Wang, Jipeng and Ma, Haiqiang and Sun, Shihai},
  journal = {Phys. Rev. Appl.},
  volume = {12},
  issue = {3},
  pages = {034039},
  numpages = {12},
  year = {2019},
  month = {Sep},
  publisher = {American Physical Society},
  doi = {10.1103/PhysRevApplied.12.034039},
  url = {https://link.aps.org/doi/10.1103/PhysRevApplied.12.034039}
}

@article{Imai_2021,
  title = {Bound Entanglement from Randomized Measurements},
  author = {Imai, Satoya and Wyderka, Nikolai and Ketterer, Andreas and G\"uhne, Otfried},
  journal = {Phys. Rev. Lett.},
  volume = {126},
  issue = {15},
  pages = {150501},
  numpages = {6},
  year = {2021},
  month = {Apr},
  publisher = {American Physical Society},
  doi = {10.1103/PhysRevLett.126.150501},
  url = {https://link.aps.org/doi/10.1103/PhysRevLett.126.150501}
}

@article{Kiukas_2017,
  title = {Continuous-variable steering and incompatibility via state-channel duality},
  author = {Kiukas, Jukka and Budroni, Costantino and Uola, Roope and Pellonp\"a\"a, Juha-Pekka},
  journal = {Phys. Rev. A},
  volume = {96},
  issue = {4},
  pages = {042331},
  numpages = {12},
  year = {2017},
  month = {Oct},
  publisher = {American Physical Society},
  doi = {10.1103/PhysRevA.96.042331},
  url = {https://link.aps.org/doi/10.1103/PhysRevA.96.042331}
}

@article{Werner_1989,
  title = {Quantum states with Einstein-Podolsky-Rosen correlations admitting a hidden-variable model},
  author = {Werner, Reinhard F.},
  journal = {Phys. Rev. A},
  volume = {40},
  issue = {8},
  pages = {4277--4281},
  numpages = {0},
  year = {1989},
  month = {Oct},
  publisher = {American Physical Society},
  doi = {10.1103/PhysRevA.40.4277},
  url = {https://link.aps.org/doi/10.1103/PhysRevA.40.4277}
}

@article{Sherman_1950,
   author = {Jack Sherman and Winifred J. Morrison},
   doi = {10.1214/aoms/1177729893},
   issn = {0003-4851},
   issue = {1},
   journal = {The Annals of Mathematical Statistics},
   month = {3},
   pages = {124-127},
   title = {Adjustment of an Inverse Matrix Corresponding to a Change in One Element of a Given Matrix},
   volume = {21},
   year = {1950}
}

@article{Tran_2015,
  title = {Quantum entanglement from random measurements},
  author = {Tran, Minh Cong and Daki\ifmmode \acute{c}\else \'{c}\fi{}, Borivoje and Arnault, Fran\ifmmode \mbox{\c{c}}\else \c{c}\fi{}ois and Laskowski, Wies\l{}aw and Paterek, Tomasz},
  journal = {Phys. Rev. A},
  volume = {92},
  issue = {5},
  pages = {050301(R)},
  numpages = {7},
  year = {2015},
  month = {Nov},
  publisher = {American Physical Society},
  doi = {10.1103/PhysRevA.92.050301},
  url = {https://link.aps.org/doi/10.1103/PhysRevA.92.050301}
}

@misc{Eisfeld_2026,
      title={Entanglement quantification with randomized measurements is maximally difficult}, 
      author={Julian Eisfeld and Nikolai Wyderka},
      year={2026},
      eprint={2604.15029},
      archivePrefix={arXiv},
      primaryClass={quant-ph},
      url={https://arxiv.org/abs/2604.15029}, 
}

@article{Bakhshinezhad_2024,
  title = {Trade-offs between precision and fluctuations in charging finite-dimensional quantum batteries},
  author = {Bakhshinezhad, Pharnam and Jablonski, Beniamin R. and Binder, Felix C. and Friis, Nicolai},
  journal = {Phys. Rev. E},
  volume = {109},
  issue = {1},
  pages = {014131},
  numpages = {17},
  year = {2024},
  month = {Jan},
  publisher = {American Physical Society},
  doi = {10.1103/PhysRevE.109.014131},
  url = {https://link.aps.org/doi/10.1103/PhysRevE.109.014131}
}

@article{Segal_2018,
  title = {Current fluctuations in quantum absorption refrigerators},
  author = {Segal, Dvira},
  journal = {Phys. Rev. E},
  volume = {97},
  issue = {5},
  pages = {052145},
  numpages = {9},
  year = {2018},
  month = {May},
  publisher = {American Physical Society},
  doi = {10.1103/PhysRevE.97.052145},
  url = {https://link.aps.org/doi/10.1103/PhysRevE.97.052145}
}

@article{Campisi2015,
   author = {Michele Campisi and Jukka Pekola and Rosario Fazio},
   doi = {10.1088/1367-2630/17/3/035012},
   issn = {1367-2630},
   issue = {3},
   journal = {New Journal of Physics},
   month = {3},
   pages = {035012},
   title = {Nonequilibrium fluctuations in quantum heat engines: theory, example, and possible solid state experiments},
   volume = {17},
   year = {2015}
}

@article{Das_2026,
   author = {Samir Das and Shishira Mahunta and Nikhil Gupt and Victor Mukherjee and Arnab Ghosh},
   doi = {10.1088/1402-4896/ae92ba},
   issn = {0031-8949},
   journal = {Physica Scripta},
   month = {7},
   title = {Fluctuations and optimal control in a floquet quantum thermal transistor},
   year = {2026}
}

@article{Horowitz_2020,
   author = {Jordan M. Horowitz and Todd R. Gingrich},
   doi = {10.1038/s41567-019-0702-6},
   issn = {1745-2473},
   issue = {1},
   journal = {Nature Physics},
   month = {1},
   pages = {15-20},
   title = {Thermodynamic uncertainty relations constrain non-equilibrium fluctuations},
   volume = {16},
   year = {2020}
}

@article{Barato_2015,
  title = {Thermodynamic Uncertainty Relation for Biomolecular Processes},
  author = {Barato, Andre C. and Seifert, Udo},
  journal = {Phys. Rev. Lett.},
  volume = {114},
  issue = {15},
  pages = {158101},
  numpages = {5},
  year = {2015},
  month = {Apr},
  publisher = {American Physical Society},
  doi = {10.1103/PhysRevLett.114.158101},
  url = {https://link.aps.org/doi/10.1103/PhysRevLett.114.158101}
}

@article{Jarzynski_2004,
  title = {Classical and Quantum Fluctuation Theorems for Heat Exchange},
  author = {Jarzynski, Christopher and W\'ojcik, Daniel K.},
  journal = {Phys. Rev. Lett.},
  volume = {92},
  issue = {23},
  pages = {230602},
  numpages = {4},
  year = {2004},
  month = {Jun},
  publisher = {American Physical Society},
  doi = {10.1103/PhysRevLett.92.230602},
  url = {https://link.aps.org/doi/10.1103/PhysRevLett.92.230602}
}

@article{Campisi_2011,
  title = {Colloquium: Quantum fluctuation relations: Foundations and applications},
  author = {Campisi, Michele and H\"anggi, Peter and Talkner, Peter},
  journal = {Rev. Mod. Phys.},
  volume = {83},
  issue = {3},
  pages = {771--791},
  numpages = {0},
  year = {2011},
  month = {Jul},
  publisher = {American Physical Society},
  doi = {10.1103/RevModPhys.83.771},
  url = {https://link.aps.org/doi/10.1103/RevModPhys.83.771}
}

@article{Campisi_2014,
   author = {Michele Campisi},
   doi = {10.1088/1751-8113/47/24/245001},
   issn = {1751-8113},
   issue = {24},
   journal = {Journal of Physics A: Mathematical and Theoretical},
   month = {6},
   pages = {245001},
   title = {Fluctuation relation for quantum heat engines and refrigerators},
   volume = {47},
   year = {2014}
}

@misc{Sarkar_2025,
      title={Fluctuation in energy extraction from quantum batteries: How open should the system be to control it?}, 
      author={Anindita Sarkar and Paranjoy Chaki and Priya Ghosh and Ujjwal Sen},
      year={2025},
      eprint={2505.16851},
      archivePrefix={arXiv},
      primaryClass={quant-ph},
      url={https://arxiv.org/abs/2505.16851}, 
}

@article{Imai_2023,
  title = {Work fluctuations and entanglement in quantum batteries},
  author = {Imai, Satoya and G\"uhne, Otfried and Nimmrichter, Stefan},
  journal = {Phys. Rev. A},
  volume = {107},
  issue = {2},
  pages = {022215},
  numpages = {12},
  year = {2023},
  month = {Feb},
  publisher = {American Physical Society},
  doi = {10.1103/PhysRevA.107.022215},
  url = {https://link.aps.org/doi/10.1103/PhysRevA.107.022215}
}

@inproceedings{leifer2007conditional,
  title     = {Conditional Density Operators and the Subjectivity of Quantum Operations},
  author    = {Leifer, M. S.},
  booktitle = {Foundations of Probability and Physics---4},
  editor    = {Adenier, Guillaume and Fuchs, Christopher A. and Khrennikov, Andrei Yu.},
  series    = {AIP Conference Proceedings},
  volume    = {889},
  pages     = {172--186},
  year      = {2007},
  doi       = {10.1063/1.2713456}
}

@article{leifer2013formulation,
  title   = {Towards a formulation of quantum theory as a causally neutral theory of {Bayesian} inference},
  author  = {Leifer, M. S. and Spekkens, Robert W.},
  journal = {Physical Review A},
  volume  = {88},
  number  = {5},
  pages   = {052130},
  year    = {2013},
  doi     = {10.1103/PhysRevA.88.052130}
}

@article{haapasalo2021quantum,
  title   = {Quantum marginal problem and incompatibility},
  author  = {Haapasalo, Erkka and Kraft, Tristan and Miklin, Nikolai and Uola, Roope},
  journal = {Quantum},
  volume  = {5},
  pages   = {476},
  year    = {2021},
  doi     = {10.22331/q-2021-06-15-476}
}

@article{verstraete2003normal,
  title   = {Normal forms and entanglement measures for multipartite quantum states},
  author  = {Verstraete, Frank and Dehaene, Jeroen and De Moor, Bart},
  journal = {Physical Review A},
  volume  = {68},
  number  = {1},
  pages   = {012103},
  year    = {2003},
  doi     = {10.1103/PhysRevA.68.012103}
}

@article{ruskai2003qubit,
  title   = {Qubit Entanglement Breaking Channels},
  author  = {Ruskai, Mary Beth},
  journal = {Reviews in Mathematical Physics},
  volume  = {15},
  number  = {6},
  pages   = {643--662},
  year    = {2003},
  doi     = {10.1142/S0129055X03001710}
}

@article{simonov2025activation,
  title   = {Activation of thermal states by coherently controlled thermalization processes},
  author  = {Simonov, Kyrylo and Roy, Saptarshi and Guha, Tamal and Zimbor{\'a}s, Zolt{\'a}n and Chiribella, Giulio},
  journal = {New Journal of Physics},
  volume  = {27},
  number  = {7},
  pages   = {074502},
  year    = {2025},
  doi     = {10.1088/1367-2630/ade5c4}
}

\widetext

\appendix

\section{proof of Theorem 1}
\label{app:prooft1}
        The most general two-qubit quantum-classical (qc) state $\xi^{SA}$ with Tr$_A (\xi^{SA}) = \xi^S$ is given by
       \begin{eqnarray}
        \xi^{SA} = p\rho \otimes \ketbra{\eta}{\eta} + (1-p) \sigma &\otimes& \ketbra{\eta^\perp}{\eta^\perp}, \nonumber \\
      \text{where } ~ p\rho + (1-p)\sigma &=& \xi^S.
      \label{eq:defxii}
    \end{eqnarray}
 Here $\ket \eta, \ket{\eta^\perp}$ is some orthonormal states.
    Let the auxiliary system be measured in the basis $\{\ket\psi,\ket{\psi^\perp}\}$ parameterized by $\theta$ and $\phi$ with $\{\ket 0,\ket 1\} \to \{ \ket \eta, \ket{\eta^\perp}\}$. Then the un-normalized post measurement states $\xi_{\psi (\psi^\perp)}^{SA} =$ $\Tr_A\big( (\mathbb{I}\otimes \ketbra{\psi (\psi^\perp)}{\psi (\psi^\perp)}) \xi^{SA} \big)$ when $\ket\psi (\ket{\psi^\perp})$ clicks are given by
    \begin{eqnarray}
\xi_\psi^{SA} &=& p\rho \cos^2 \frac{\theta}{2} + (1-p) \sigma \sin^2 \frac{\theta}{2}, \nonumber \\
&=& \sin^2 \frac{\theta}{2}\xi^S + \cos \theta ~p \rho \nonumber \\
\xi_{\psi^\perp}^{SA} &=& p\rho \sin^2 \frac{\theta}{2} + (1-p) \sigma \cos^2 \frac{\theta}{2}, \nonumber \\
&=&\sin^2 \frac{\theta}{2}\xi^S + \cos \theta ~(1-p)\sigma.
\label{eq:pms1}
    \end{eqnarray}
    where $\cos^2 \frac{\theta}{2} = |\langle\eta|\psi\rangle|^2$. 
    The net ergotropy that can be extracted from the post-measurement states is 
    \begin{eqnarray}
        E_D^{\xi^{SA}}(\theta) = {\tt Erg}(\xi_\psi^{SA}) + {\tt Erg}(\xi_{\xi_{\psi^\perp}}^{SA}).
    \end{eqnarray}
    Now by using Eq. \eqref{eq:pms1} and the convexity of ergotropy \cite{Bernards2019}, we have
\begin{eqnarray}
    E^{\xi^{SA}}_D(\theta) \leq 2 \sin^2 \frac{\theta}{2} {\tt Erg}(\xi^S) + \cos \theta ~[p {\tt Erg}(\rho) &+& (1-p) {\tt Erg}(\sigma)] \leq 2 \sin^2 \frac{\theta}{2} {\tt Erg}(\xi^S) + \cos \theta ~\text{Tr}(\xi^SH) . 
  \label{eq:ergtheta}
\end{eqnarray}
Now, the randomized daemonic ergotropy from Eq. \eqref{eq:rdequbits}  is given by

\begin{eqnarray}
    E^{\xi^{SA}}_D &=& \int_{0}^{\frac \pi2} d\theta ~\sin \theta E^{\xi^{SA}}_D(\theta) \nonumber \\
    &\leq& 2 {\tt Erg}(\xi^S) \Bigg( \int_{0}^{\frac{\pi}{2}} d\theta \sin \theta \sin^2 \frac{\theta}{2} \Bigg) + \text{Tr}(\xi^SH) \Bigg( \int_{0}^{\frac \pi2} d\theta ~\sin \theta \cos \theta \Bigg)  \nonumber \\
         &=& \frac12 \big[ \text{Tr} (\xi^SH) + {\tt Erg}(\xi^S) \big].
\end{eqnarray}
Now, $\Delta E^{\xi^{SA}}_D$ is simply obtained by subtracting ${\tt Erg}(\xi^S)$ from $E^{\xi^{SA}}_D$ which completes the proof.

\section{Proof of Theorem 2}
\label{app:proofoft2}
In the proof of Theorem~\ref{th:maxergqc}, we showed that the net ergotropy that can be extracted from the post-measurement states is 
    \begin{eqnarray}
        E_D^{\xi^{SA}}(\theta) = {\tt Erg}(\xi_\psi^{SA}) + {\tt Erg}(\xi_{\xi_{\psi^\perp}}^{SA}) = {\tt Erg}\Big(p\rho \cos^2 \frac{\theta}{2} + (1-p) \sigma \sin^2 \frac{\theta}{2}\Big)+{\tt Erg}\Big(p\rho \sin^2 \frac{\theta}{2} + (1-p) \sigma \cos^2 \frac{\theta}{2} \Big).
    \end{eqnarray}
    Identifying $u := \cos \theta$ and subtracting ${\tt Erg}(\xi^S)$, we have 
    \small
    \begin{eqnarray}
     \Delta E_D^{\xi^{SA}}(u) =    E_D^{\xi^{SA}}(u) - {\tt Erg}(\xi^S) =\frac12 \Big[{\tt Erg}\Big(p (1+u) \rho  + (1-p) (1-u) \sigma \Big)+{\tt Erg}\Big(p (1-u)\rho  + (1-p) (1+u) \sigma \Big)\Big] - {\tt Erg}(\xi^S).
    \end{eqnarray}
    \normalsize
   Now our quantity of interest is the function $\Delta E_D^{\xi^{SA}}(u)$ defined on the interval $u \in [0,1]$. This follows from  $\theta \in [0,\frac\pi2].$
   We now focus on some properties of $\Delta E_D^{\xi^{SA}}(u).$ 
   \begin{lemma}
       $\Delta E_D^{\xi^{SA}}(u)$ satisfies the following properties
       \begin{enumerate}
           \item $\Delta E_D^{\xi^{SA}}(u) \geq 0 ~\forall u \in [0,1].$

           \item $\Delta E_D^{\xi^{SA}}(u=0) = 0.$
           
           \item $\Delta E_D^{\xi^{SA}}(u)$ is a convex function of $u.$
       \end{enumerate}
       \label{lemma:gainprop}
   \end{lemma}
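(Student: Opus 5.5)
The plan is to write the gain as
\[
g(u):=\Delta E_D^{\xi^{SA}}(u)=\tfrac12\big[{\tt Erg}(A(u))+{\tt Erg}(B(u))\big]-{\tt Erg}(\xi^S),
\]
where $A(u):=p(1+u)\rho+(1-p)(1-u)\sigma$ and $B(u):=p(1-u)\rho+(1-p)(1+u)\sigma$. I will derive all three properties from two structural facts about ergotropy, extended to unnormalised positive operators $X$ via ${\tt Erg}(X)=\Tr(XH)-\min_U\Tr(UXU^\dagger H)$.

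\textbf{Structural facts.} First, I will rewrite ergotropy as ${\tt Erg}(X)=\max_U \Tr\big[(H-U^\dagger HU)X\big]$. This is a pointwise supremum of functionals that are linear in $X$. Hence ${\tt Erg}$ is convex, which is the convexity already invoked from Ref.~\cite{Bernards2019}. It is also positively homogeneous, ${\tt Erg}(\lambda X)=\lambda\,{\tt Erg}(X)$ for $\lambda\ge 0$. It is also nonnegative, by taking $U=\mathbb I$. The main point needing care is that the factor $\tfrac12$ and the unnormalised branches are handled correctly. Writing ergotropy as a supremum of linear maps is what makes both convexity and homogeneity immediate, without any normalisation issues.

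\textbf{Property 2.} At $u=0$, both $A(0)$ and $B(0)$ reduce to $p\rho+(1-p)\sigma=\xi^S$. The bracket is therefore $2\,{\tt Erg}(\xi^S)$, so $g(0)=0$.

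\textbf{Property 1.} For every $u$ we have $A(u)+B(u)=2p\rho+2(1-p)\sigma=2\xi^S$, so $\xi^S=\tfrac12 A(u)+\tfrac12 B(u)$. Convexity then gives ${\tt Erg}(\xi^S)\le\tfrac12{\tt Erg}(A(u))+\tfrac12{\tt Erg}(B(u))$, that is, $g(u)\ge0$.

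\textbf{Property 3.} The maps $u\mapsto A(u)$ and $u\mapsto B(u)$ are affine, and they remain positive semidefinite on $[0,1]$ because all coefficients $1\pm u$ are nonnegative there. A convex function composed with an affine map is convex, so $u\mapsto{\tt Erg}(A(u))$ and $u\mapsto{\tt Erg}(B(u))$ are each convex. A nonnegative combination of convex functions minus a constant is convex, so $g$ is convex on $[0,1]$. Explicitly, for $\lambda\in[0,1]$ one has $A(\lambda u_1+(1-\lambda)u_2)=\lambda A(u_1)+(1-\lambda)A(u_2)$, and convexity of ${\tt Erg}$ applies directly; the same holds for $B$.
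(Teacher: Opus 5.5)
Your proof is correct and follows the paper's route. Property 2 comes from the branches reducing to $\xi^S$ at $u=0$, and properties 1 and 3 come from convexity of ergotropy applied to branches that are affine in $u$. Your version is more complete than the paper's: the paper asserts $E_D^{\xi^{SA}}(u)\geq {\tt Erg}(\xi^S)$ citing only non-negativity, which does not imply it, whereas your identity $A(u)+B(u)=2\xi^S$ combined with convexity supplies the missing step, and the supremum-of-linear-functionals form makes homogeneity for the unnormalised branches explicit.
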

   \begin{proof}
       Since ${\tt Erg}(.)$ is a non-negative function, and we have $E_D^{\xi^{SA}}(u) \geq {\tt Erg}(\xi^S) ~\forall u \in [0,1],$ we have $\Delta E_D^{\xi^{SA}}(u) \geq 0 ~\forall u \in [0,1].$ Also, note that $E_D^{\xi^{SA}}(u=0) = 0.$ It is a direct consequence of the fact that for $u=0 ~(\theta = \pi/2),$ both the post-measurement state is identical to the marginal state of the system $\xi^S$.  Lastly, the convexity of Ergotropy on mixture of states directly translates to $\Delta E_D^{\xi^{SA}}(u)$ being a convex function of $u$. Subtraction of a constant term ${\tt Erg}(\xi^S)$, does not affect this property.
   \end{proof}

\begin{lemma}
Let \(g\) be a nonnegative function defined on \(u\in[0,1]\) satisfying the following properties: 
\begin{enumerate}
    \item \(g(u)\) is convex in \(u\),
    \item \(g(0)=0\).
\end{enumerate}
Then the following inequality holds: $\int_0^1 [g(u)]^2\,du
    \;\ge\;
    \frac{4}{3}
    \left[\int_0^1 g(u)\,du\right]^2.$
\end{lemma}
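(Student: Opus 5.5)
The key observation is that a convex function vanishing at the origin has the property that $g(u)/u$ is nondecreasing on $(0,1]$. Indeed, for $0<u<v\le 1$, convexity applied to $u=\tfrac{u}{v}\,v+\bigl(1-\tfrac{u}{v}\bigr)\,0$ gives $g(u)\le \tfrac{u}{v}g(v)$. Consequently $g$ is \emph{star-shaped}: it lies below its chords through the origin. Intuitively, such a $g$ concentrates its mass near $u=1$, and the extremal case should be the linear function $g(u)=cu$. For that function, $\int_0^1 g=c/2$ and $\int_0^1 g^2=c^2/3=\tfrac43 (c/2)^2$, so the constant $\tfrac43$ is attained exactly. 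This matches the saturating state of Example~\ref{ex:1}, where $g(u)=u/2$.

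To turn this into a proof, I would use a Chebyshev-type (rearrangement) inequality. Write $g(u)=u\,h(u)$ with $h(u):=g(u)/u$, which is nonnegative and nondecreasing on $(0,1]$. Then
\[
\int_0^1 g^2\,du=\int_0^1 u^2h(u)^2\,du,\qquad \int_0^1 g\,du=\int_0^1 u\,h(u)\,du .
\]
Introduce the probability measure $d\mu=3u^2\,du$ on $[0,1]$. The two quantities become
\[
\int_0^1 g^2\,du=\tfrac13\,\mathbb E_\mu[h^2],\qquad \int_0^1 g\,du=\tfrac13\,\mathbb E_\mu[h/u].
\]
The claim is therefore equivalent to $\mathbb E_\mu[h^2]\ge \tfrac43\bigl(\mathbb E_\mu[h/u]\bigr)^2$. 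This is a little awkward, so I would instead aim directly at the Cauchy--Schwarz form
\[
\Bigl(\int_0^1 u\,h\,du\Bigr)^2\le \tfrac34\int_0^1 u^2h^2\,du .
\]

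Plain Cauchy--Schwarz gives $\bigl(\int u h\bigr)^2\le \int u^2h^2$, which only yields the trivial constant $1$. The monotonicity of $h$ is exactly what must upgrade this to $\tfrac34$, and I expect this to be the main step. My first approach is the layer-cake decomposition. Since $h$ is nondecreasing, write
\[
h(u)=\int_0^\infty \mathbf 1[u>t(\lambda)]\,d\lambda
\]
with thresholds $t(\lambda)\in[0,1]$ (up to an atom handled by $h(0^+)\ge 0$, i.e.\ $t=0$). Then $g=\int \lambda$-mixtures of the ramps $g_t(u)=u\,\mathbf 1[u>t]$. Define
\[
A=\int_0^1 g\,du=\int_0^\infty a(t_\lambda)\,d\lambda,\qquad a(t)=\frac{1-t^2}{2},
\]
\[
B=\int_0^1 g^2\,du=\iint b\bigl(\max(t_\lambda,t_\mu)\bigr)\,d\lambda\,d\mu,\qquad b(s)=\frac{1-s^3}{3}.
\]
It then suffices to check the pointwise kernel inequality
\[
b(\max(s,t))\ge \tfrac43\,a(s)\,a(t),\quad\text{i.e.}\quad \frac{1-s^3}{3}\ge \frac{(1-s^2)(1-t^2)}{3}\ \text{ for } s\ge t,
\]
and integrate it against $d\lambda\,d\mu$.

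This reduces to $1-s^3\ge (1-s^2)(1-t^2)$. Since $t\ge 0$, the right side is at most $1-s^2$, and $1-s^2\le 1-s^3$ for $s\in[0,1]$, so the kernel inequality holds. Integrating gives $B\ge \tfrac43 A^2$, which proves the lemma.

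The remaining care points are technical. First, justify the layer-cake representation: $h$ is monotone and nonnegative, and $g$ is bounded on $[0,1]$ because it is convex and finite at the endpoints. Second, note that the measurability and Fubini interchanges are routine. Finally, observe that nonnegativity of $g$ is used only to ensure $h\ge0$.
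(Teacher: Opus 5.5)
Your proof is correct, and it takes a different route from the paper.

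\textbf{The paper's route.} It compares $g$ directly with the linear function $h(u)=2mu$, which has the same mean $m=\int_0^1 g\,du$. The difference $g-h$ is convex, vanishes at $u=0$ and integrates to zero, so it changes sign once, from nonpositive to positive. Pairing it with the nondecreasing weight $g+h$ in a Chebyshev-type single-crossing argument gives $\int_0^1 (g-h)(g+h)\,du\ge 0$, i.e.\ $\int_0^1 g^2\,du\ge\int_0^1 h^2\,du=\tfrac43 m^2$.

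\textbf{Your route.} You write $g(u)=u\,h(u)$ and decompose the nondecreasing, nonnegative $h$ by layer-cake into ramps $u\,\mathbf 1[u>t]$. This reduces the claim to the elementary kernel inequality $1-\max(s,t)^3\ge(1-s^2)(1-t^2)$.

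\textbf{Comparison.} The paper's argument is shorter. It avoids the layer-cake and Tonelli bookkeeping, and it exhibits the extremal comparison function explicitly. Your argument makes transparent exactly what is used: only $g\ge 0$ and the star-shaped property that $g(u)/u$ is nondecreasing, not full convexity. The paper's single-crossing step and its monotonicity of $g$ also rest only on this property, since $(g-h)(u)/u=g(u)/u-2m$, so the two proofs share the same structural input. Your version also gives the equality case directly. The kernel inequality is strict as soon as $\max(s,t)\in(0,1)$, so equality forces $h$ to be a.e.\ constant, i.e.\ $g$ linear, matching the state of Example~\ref{ex:1}.

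Your method also transfers to the interval $[c,1]$ with $c=\cos\theta_0$ used in Theorem~\ref{th:boundpolarcap}. Restricting the thresholds to $[c,1]$, the worst case of the kernel ratio occurs at $s=t=c$. This reproduces the paper's constant $\tfrac{4(1+c+c^2)}{3(1+c)^2}$.
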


\begin{proof}
Let $ \int_0^1 g(u)\,du = m > 0.$
Now consider the function
\begin{equation}
    h(u)=2mu.
\end{equation}
Note that $ \int_0^1 h(u)\,du = m,$
so \(h(u)\) has the same mean as \(g(u)\).
Next, consider another convex function
\begin{equation}
    \ell(u)=g(u)-h(u),
\end{equation}
where by construction, we have $\int_0^1 \ell(u)\,du = 0.$
Then the 
properties $(1)$ and $(2)$ imply that there exists \(a\in(0,1)\) such that 
{\[
\ell(u)
\begin{cases}
\le 0, & u\in[0,a],\\[2mm]
>0, & u\in(a,1],
\end{cases}
\]
with $\int_0^a\ell(u)du=-\int_a^1\ell(u)du$}. Consequently, for any non-decreasing function \(k(u)\ge 0\), we get
\begin{equation}
    \int_0^1 du\, \ell(u)\,k(u)\ge 0.
\end{equation}
Since \(g(u)\ge 0\) is convex and \(g(0)=0\), \(g(u)\) is non-decreasing on \(u\in[0,1]\).
Moreover, as $h(u)$ by construction is a monotonically increasing function in $u$, $g(u)+h(u)$ is a non-decreasing function in $u$.
Let $k(u)=g(u)+h(u).$
Then we have
\begin{equation}
    \int_0^1 du\, \big(g(u)-h(u)  \big) \big(g(u)+h(u) \big) \ge 0.
\end{equation}
Therefore,
\begin{equation}
    \int_0^1  [g(u)]^2 ~du
    \ge
    \int_0^1  [h(u)]^2 ~du
    =
    4m^2 \int_0^1 u^2\,du
    =
    \frac{4m^2}{3}.
\end{equation}
Using \(m=\int_0^1 g(u)\,du\), we obtain
\begin{equation}
    \int_0^1 [g(u)]^2\,du
    \ge
    \frac{4}{3}
    \left[\int_0^1 g(u)\,du\right]^2.
\end{equation}
This completes the proof.
\end{proof}

\section{Proof of Theorem 4}
\label{app:proof t4}
Let
   $ \Pi_\pm(\mathbf n)
    =
    \frac{1}{2}
    \left(
        \mathbb I \pm \mathbf n\cdot\boldsymbol\sigma
    \right),
    ~
    \|\mathbf n\|=1,$
be a rank-one projective measurement on \(A\) performed on an arbitrary two-qubit state $\rho^{SA} = \frac{1}{4}
\left(
\mathbb I\otimes \mathbb I
+
\sum_{i=1}^{3} r_i \,\sigma_i \otimes \mathbb I
+
\sum_{j=1}^{3} a_j \, \mathbb I \otimes \sigma_j
+
\sum_{i,j=1}^{3} T_{ij}\,\sigma_i \otimes \sigma_j
\right),$ where $r_i = \operatorname{Tr}\!\left[\rho_{SA}\,(\sigma_i\otimes \mathbb I_A)\right], ~a_j = \operatorname{Tr}\!\left[\rho_{SA}\,(\mathbb I_S\otimes \sigma_j)\right],$ and $T_{ij}
=
\operatorname{Tr}\!\left[\rho_{SA}\,(\sigma_i\otimes\sigma_j)\right].$
The unnormalized post-measurement states on \(S\) corresponding to $\Pi_\pm(\mathbf n)$ is given by
\begin{equation}
    \rho^S_\pm(\mathbf n)
    =
    \frac{1}{4}
    \left[
        \bigl(1\pm \mathbf a\cdot \mathbf n\bigr)\mathbb I
        +
        \bigl(\mathbf r\pm T\mathbf n\bigr)\cdot\boldsymbol\sigma
    \right].
\end{equation}
For a qubit positive operator $ \omega
    =
    \frac{1}{2}
    \left(
        q\mathbb I+\mathbf s\cdot\boldsymbol\sigma
    \right),$
we have $ {\tt Erg}(\omega)
    =
    \frac{1}{2}
    \left(
        \|\mathbf s\|-s_z
    \right).$
Therefore,
\begin{eqnarray}
    E_D^{\rho^{SA}}(\mathbf n)
    =
    {\tt Erg}\bigl(\rho^S_+(\mathbf n)\bigr)
    +
    {\tt Erg}\bigl(\rho^S_-(\mathbf n)\bigr)
    =
    \frac{1}{4}
    \left(
        \|\mathbf r+T\mathbf n\|
        +
        \|\mathbf r-T\mathbf n\|
        -
        2r_z
    \right).
    \label{eq:EDn-classification}
\end{eqnarray}
In particular, the ancilla Bloch vector \(\mathbf a\) drops out.
Now, we define
\begin{equation}
    F(\mathbf n)
    :=
    \|\mathbf r+T\mathbf n\|
    +
    \|\mathbf r-T\mathbf n\|.
\end{equation}
Since \(E_D^{\rho^{SA}}(\mathbf n)\) is continuous in \(\mathbf n\), the condition
\(\mathcal F_D^{\rho^{SA}}=0\) is equivalent to \(E_D^{\rho^{SA}}(\mathbf n)\) being constant on the Bloch sphere, which in turn is equivalent to $F(\mathbf n)=2c$,
for some constant \(c\ge 0\) and for every \(\|\mathbf n\|=1\).

\medskip
\noindent\textit{Necessity.}
Assume \(\mathcal F_D^{\rho^{SA}}=0\) and \(\Delta E_D^{\rho^{SA}}>0\). Let
  $  F(\mathbf n)=2c$
for some constant \(c\), and by Eq.~\eqref{eq:EDn-classification}, 
   $ E_D^{\rho^{SA}}(\mathbf n)
    =
    \frac{1}{2}(c-r_z).$
On the other hand,
    ${\tt Erg}(\rho^S)
    =
    \frac{1}{2}
    \left(
        \|\mathbf r\|-r_z
    \right).$
Hence $\Delta E_D^{\rho^{SA}}
    =
    \frac{1}{2}
    \left(
        c-\|\mathbf r\|
    \right).$
Since the gain is positive, we must have $c > \|\mathbf r\|.$
Now we set $ \mathbf x:=T\mathbf n.$
As \(\mathbf n\) runs over the unit sphere, 
every such \(\mathbf x\) satisfies $ \|\mathbf x-\mathbf r\|+\|\mathbf x+\mathbf r\|=2c.$
Squaring it twice gives
\begin{equation}
    c^2\|\mathbf x\|^2
    -
    (\mathbf r\cdot \mathbf x)^2
    =
    c^2\bigl(c^2-\|\mathbf r\|^2\bigr).
    \label{eq:spheroid-quad}
\end{equation}
Substituting \(\mathbf x=T\mathbf n\), we obtain $ \mathbf n^{\mathsf T}
    T^{\mathsf T}
    \bigl(
        c^2 I_3-\mathbf r\mathbf r^{\mathsf T}
    \bigr)
    T
    \mathbf n
    =
    c^2\bigl(c^2-\|\mathbf r\|^2\bigr)$
for every unit vector \(\mathbf n\). Therefore, we arrive at the following identity
\begin{equation}
    T^{\mathsf T}
    \bigl(
        c^2 I_3-\mathbf r\mathbf r^{\mathsf T}
    \bigr)
    T
    =
    c^2\bigl(c^2-\|\mathbf r\|^2\bigr)I_3.
    \label{eq:key-matrix-id}
\end{equation}
By the positivity of gain condition $(c> \|\mathbf{r \|})$, the matrix $ c^2 I_3-\mathbf r\mathbf r^{\mathsf T}$
is positive definite, and the right-hand side of Eq.~\eqref{eq:key-matrix-id} is strictly positive. Hence \(T\) is invertible. We may therefore rewrite Eq.~\eqref{eq:key-matrix-id} as
\begin{equation}
    c^2 I_3-\mathbf r\mathbf r^{\mathsf T}
    =
    c^2\bigl(c^2-\|\mathbf r\|^2\bigr)
    (TT^{\mathsf T})^{-1}.
\end{equation}
Inverting both sides yields $    TT^{\mathsf T}
    =
    c^2\bigl(c^2-\|\mathbf r\|^2\bigr)
    \bigl(
        c^2 I_3-\mathbf r\mathbf r^{\mathsf T}
    \bigr)^{-1}.$
Now, by using the Sherman--Morrison formula \cite{Sherman_1950},
\begin{equation}
    \bigl(
        c^2 I_3-\mathbf r\mathbf r^{\mathsf T}
    \bigr)^{-1}
    =
    \frac{1}{c^2}I_3
    +
    \frac{1}{c^2\bigl(c^2-\|\mathbf r\|^2\bigr)}
    \mathbf r\mathbf r^{\mathsf T},
\end{equation}
we obtain $   TT^{\mathsf T}
    =
    \bigl(c^2-\|\mathbf r\|^2\bigr)I_3
    +
    \mathbf r\mathbf r^{\mathsf T}.$
By defining
   $ \beta^2:=c^2-\|\mathbf r\|^2>0,$
we arrive at $ TT^{\mathsf T}
    =
    \beta^2 I_3+\mathbf r\mathbf r^{\mathsf T}.$ This completes the necessary part of the proof.

\medskip
\noindent\textit{Sufficiency.}
Conversely, assume that $ TT^{\mathsf T}
    =
    \beta^2 I_3+\mathbf r\mathbf r^{\mathsf T}$
for some \(\beta>0\) where
    $c=\sqrt{\beta^2+\|\mathbf r\|^2}.$
Now, {the polar decomposition of the correlation matrix $T$ is given by $ T=(TT^{\mathsf T})^{1/2}O,$}
where \(O=(TT^{\mathsf T})^{-1/2}T\) is an orthogonal matrix. Since \(\mathbf n\mapsto O\mathbf n\) is merely a relabeling of the measurement direction on the sphere, the function \(F(\mathbf n)\) depends only on \((TT^{\mathsf T})^{1/2}\). 
Thus, for the purpose of evaluating \(F(\mathbf n)\), we may work in a frame in which \(\mathbf r=\|\mathbf r\|\hat z\). In that frame, $  (TT^{\mathsf T})^{1/2}
    =
    \mathrm{diag}(\beta,\beta,c)$.
So, up to the above reparametrization of measurement directions, we may take $ T
    =
    \mathrm{diag}(\beta,\beta,c).$
Then for \(\mathbf n=(n_x,n_y,n_z)\), we have
\begin{eqnarray}
    \|\mathbf r\pm T\mathbf n\|^2
    =
    \beta^2(n_x^2+n_y^2)
    +
    (\|\mathbf r\|\pm c n_z)^2
    =
    \beta^2(1-n_z^2)
    +
    \|\mathbf r\|^2
    +
    c^2n_z^2
    \pm
    2c\|\mathbf r\|n_z.
\end{eqnarray}
Using \(c^2-\beta^2=\|\mathbf r\|^2\), we get $  \|\mathbf r\pm T\mathbf n\|^2
    =
    (c\pm \|\mathbf r\|n_z)^2.$
Hence $  \|\mathbf r+T\mathbf n\|
    +
    \|\mathbf r-T\mathbf n\|
    =
    2c$
for every \(\mathbf n\). Therefore $ E_D^{\rho^{SA}}(\mathbf n)
    =
    \frac{1}{2}(c-r_z)$
is independent of the measurement direction, and consequently $\mathcal F_D^{\rho^{SA}}=0.$
Finally, we get $$ \Delta E_D^{\rho^{SA}}
    =
    \frac{1}{2}(c-r_z)
    -
    \frac{1}{2}(\|\mathbf r\|-r_z)
    =
    \frac{1}{2}(c-\|\mathbf r\|).$$
Since \(\beta>0\), we have \(c>\|\mathbf r\|\), and therefore
$\Delta E_D^{\rho^{SA}}>0.$
This completes the proof.

\section{Proof of Theorem 5}
\label{app:methods}
The proof is carried out in the following steps
\begin{enumerate}
   \item A lemma to prove $\Delta E_D$ is convex

    \item Evaluation of the extreme points of the separable states with fixed system marginal

    \item Compute and maximize RDE over the extreme points. 
\end{enumerate}

\subsection{Step 1: Convexity of randomized daemonic ergotropy and gain}

\begin{lemma}[Convexity of randomized daemonic ergotropy]
\label{lemma:convixityRDE}
Let $\rho_1^{SA}$ and $\rho_2^{SA}$ be two bipartite states, and let
\begin{align}
    \rho_\lambda^{SA}
    =
    \lambda \rho_1^{SA}
    +(1-\lambda)\rho_2^{SA},
    \qquad 0\leq\lambda\leq 1.
\end{align}
For a projective measurement
$\Pi=\{\Pi_i^A\}_i$ on the ancillary system $A$, define the
unnormalized conditional state of $S$ corresponding to outcome $i$ as
\begin{align}
    \rho^S(\Pi_i)
    :=
    \Tr_A\!\left[
    (\mathbb{I}_S\otimes\Pi_i^A)
    \rho^{SA}
    (\mathbb{I}_S\otimes\Pi_i^A)
    \right].
\end{align}
The daemonic ergotropy associated with the measurement $\Pi$ is
\begin{align}
    E_D^{\rho^{SA}}(\Pi)
    :=
    \sum_i
    \operatorname{Erg}\!\left[\rho^S(\Pi_i)\right],
\end{align}
where ergotropy is understood to be extended to subnormalized positive
operators by positive homogeneity. The randomized daemonic ergotropy,
obtained by averaging over measurements according to a normalized
measure $\mu(\Pi)$, is
\begin{align}
    E_D^{\rho^{SA}}
    :=
    \int d\Pi\,\mu(\Pi)\,
    E_D^{\rho^{SA}}(\Pi).
\end{align}
Then the randomized daemonic ergotropy is convex in $\rho^{SA}$, i.e.,
\begin{align}
    E_D^{\rho_\lambda^{SA}}
    \leq
    \lambda E_D^{\rho_1^{SA}}
    +(1-\lambda)E_D^{\rho_2^{SA}}.
\end{align}
\end{lemma}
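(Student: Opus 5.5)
The plan is to prove convexity pointwise in the measurement and then integrate, so the argument reduces to two elementary properties of ergotropy on unnormalized positive operators. Throughout, ergotropy is extended to positive operators $X$ by ${\tt Erg}(X) = \Tr(XH) - \min_U \Tr(UXU^\dagger H)$. With this definition it is positively homogeneous, ${\tt Erg}(tX) = t\,{\tt Erg}(X)$ for $t\ge 0$. It is also subadditive, ${\tt Erg}(X+Y)\le {\tt Erg}(X)+{\tt Erg}(Y)$. Subadditivity holds because the first term is linear, while the minimum of a sum is at least the sum of the minima:
\begin{align}
\min_U \Tr\big(U(X+Y)U^\dagger H\big)\ge \min_U \Tr(UXU^\dagger H)+\min_V \Tr(VYV^\dagger H).
\end{align}
Together, homogeneity and subadditivity give convexity of ${\tt Erg}$ on the cone of positive operators. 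This is the convexity property already used from Ref.~\cite{Bernards2019} in the proof of Theorem~\ref{th:maxergqc}.

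First, fix a projective measurement $\Pi=\{\Pi_i\}$ on $A$. The map $\rho^{SA}\mapsto \rho^S(\Pi_i)=\Tr_A[(\mathbb I\otimes\Pi_i)\rho^{SA}(\mathbb I\otimes\Pi_i)]$ is linear. Hence
\begin{align}
\rho_\lambda^S(\Pi_i)=\lambda\rho_1^S(\Pi_i)+(1-\lambda)\rho_2^S(\Pi_i).
\end{align}
Applying subadditivity and then homogeneity to each outcome $i$ gives
\begin{align}
{\tt Erg}\big[\rho_\lambda^S(\Pi_i)\big]\le \lambda\,{\tt Erg}\big[\rho_1^S(\Pi_i)\big]+(1-\lambda)\,{\tt Erg}\big[\rho_2^S(\Pi_i)\big].
\end{align}
Summing over $i$ shows $E_D^{\rho_\lambda^{SA}}(\Pi)\le \lambda E_D^{\rho_1^{SA}}(\Pi)+(1-\lambda)E_D^{\rho_2^{SA}}(\Pi)$ for every $\Pi$.

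Second, integrate this pointwise inequality against the normalized, nonnegative measure $\mu(\Pi)$. Integration preserves inequalities and is linear, which yields the claim. Measurability and integrability are harmless here. For qubit $A$ the map $\Pi\mapsto E_D(\Pi)$ is continuous on the compact Bloch sphere, by the explicit formula in Eq.~\eqref{eq:EDn-classification}; in general it is bounded by $\Tr(\rho^S H)$, as in Lemma~\ref{lemma:absmax}.

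The only step with real content is the subadditivity of ${\tt Erg}$ for unnormalized operators, and it follows from the one-line minimum argument above. A remark for later use: $\Delta E_D$ subtracts ${\tt Erg}(\rho^S)$, which is itself convex, so convexity of the gain does not follow for arbitrary mixtures. It does follow once the system marginal is fixed, as in Eq.~\eqref{eq:sepmax}, because then ${\tt Erg}(\rho^S)$ is a constant over the feasible set. That is the sense needed in Theorem~\ref{th:separable}.
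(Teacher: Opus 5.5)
Your proof is correct and follows the paper's route. You use linearity of the post-measurement map, apply convexity of ergotropy outcome by outcome, sum over outcomes, and average against $\mu$; the only difference is that you derive convexity of ergotropy on unnormalized operators from subadditivity and positive homogeneity instead of citing it. Your closing remark is also well taken: the paper's final step, subtracting ${\tt Erg}(\rho^S)$ to conclude convexity of $\Delta E_D$, is valid only for mixtures sharing the same system marginal, which is exactly the setting used in Theorem~\ref{th:separable}.
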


\begin{proof}
We use the convexity of ergotropy,
\begin{align}
    \operatorname{Erg}
    \!\left[
    \lambda \sigma_1+(1-\lambda)\sigma_2
    \right]
    \leq
    \lambda \operatorname{Erg}[\sigma_1]
    +(1-\lambda)\operatorname{Erg}[\sigma_2].
\end{align}
For any fixed measurement $\Pi=\{\Pi_i^A\}_i$, linearity of the
post-measurement map gives
\begin{align}
    \rho_\lambda^S(\Pi_i)
    &=
    \Tr_A\!\left[
    (\mathbb{I}_S\otimes\Pi_i^A)
    \rho_\lambda^{SA}
    (\mathbb{I}_S\otimes\Pi_i^A)
    \right]
    \nonumber\\
    &=
    \lambda \rho_1^S(\Pi_i)
    +(1-\lambda)\rho_2^S(\Pi_i).
\end{align}
Hence, by convexity of ergotropy,
\begin{align}
    E_D^{\rho_\lambda^{SA}}(\Pi)
    &=
    \sum_i
    \operatorname{Erg}
    \!\left[
    \rho_\lambda^S(\Pi_i)
    \right]
    \nonumber\\
    &=
    \sum_i
    \operatorname{Erg}
    \!\left[
    \lambda\rho_1^S(\Pi_i)
    +(1-\lambda)\rho_2^S(\Pi_i)
    \right]
    \nonumber\\
    &\leq
    \sum_i
    \left[
    \lambda\operatorname{Erg}
    \!\left[\rho_1^S(\Pi_i)\right]
    +(1-\lambda)\operatorname{Erg}
    \!\left[\rho_2^S(\Pi_i)\right]
    \right]
    \nonumber\\
    &=
    \lambda E_D^{\rho_1^{SA}}(\Pi)
    +(1-\lambda)E_D^{\rho_2^{SA}}(\Pi).
\end{align}
Thus, the daemonic ergotropy is convex for every fixed measurement
$\Pi$. Averaging this inequality over the measurement ensemble gives
\begin{align}
    E_D^{\rho_\lambda^{SA}}
    &=
    \int d\Pi\,\mu(\Pi)\,
    E_D^{\rho_\lambda^{SA}}(\Pi)
    \nonumber\\
    &\leq
    \lambda
    \int d\Pi\,\mu(\Pi)\,
    E_D^{\rho_1^{SA}}(\Pi)
    +(1-\lambda)
    \int d\Pi\,\mu(\Pi)\,
    E_D^{\rho_2^{SA}}(\Pi)
    \nonumber\\
    &=
    \lambda E_D^{\rho_1^{SA}}
    +(1-\lambda)E_D^{\rho_2^{SA}}.
\end{align}
Now, by subtracting ${\tt Erg}(\rho^{S})$ from both sides, we get $\Delta E_D^{\rho_\lambda^{SA}}
\leq 
\lambda\Delta E_D^{\rho_1^{SA}}
+
(1-\lambda)\Delta E_D^{\rho_2^{SA}}.$
\end{proof}

\subsection{Step 2: Extreme points of separabble states with fixed marginals}

\begin{lemma}
\label{lemma:extreme}
    The extreme points of the set of separable states with fixed system marginal $\rho^S$ can be expressed as 
    \begin{eqnarray}
        p\,|\psi_1\rangle\langle\psi_1|
 \otimes|\phi_1\rangle\langle\phi_1|
+
(1-p)\,|\psi_2\rangle\langle\psi_2|
 \otimes|\phi_2\rangle\langle\phi_2|, ~~0 \leq p \leq 1,
    \end{eqnarray}
    such that $p|\psi_1\rangle\langle\psi_1|
+
(1-p)|\psi_2\rangle\langle\psi_2|
=
\rho^S$.
\end{lemma}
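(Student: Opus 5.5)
The plan is to view the feasible set $\mathcal S(\rho^S):=\{\rho^{SA}\in{\tt SEP}:\Tr_A\rho^{SA}=\rho^S\}$ as the image of a moment problem, and to prune the length of separable decompositions of an extreme point. First, every separable two-qubit state can be written as $\rho^{SA}=\sum_k \ket{\psi_k}\!\bra{\psi_k}\otimes A_k$ with pure $\ket{\psi_k}$ on $S$, pairwise distinct, and $A_k\geq 0$. The marginal constraint then reads $\sum_k \Tr(A_k)\,\ket{\psi_k}\!\bra{\psi_k}=\rho^S$. With $\rho^{SA}$ fixed, I will choose a decomposition of minimal length and show that if $\rho^{SA}$ is extreme in $\mathcal S(\rho^S)$, this length can be reduced to two with all $A_k$ rank one. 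In each step, any perturbation compatible with the constraint must be trivial, because otherwise $\rho^{SA}=\tfrac12(\rho_+ +\rho_-)$ with $\rho_\pm\in\mathcal S(\rho^S)$ distinct.

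\emph{Step 1 (pure auxiliary parts).} If some $A_k$ has rank two, perturb $A_k\to A_k\pm\epsilon B$ with $B$ traceless Hermitian and $\epsilon$ small. This keeps $A_k\geq0$ and leaves $\Tr(A_k)$, hence the marginal, unchanged, and it produces two distinct separable states averaging to $\rho^{SA}$. So for an extreme point every $A_k=p_k\ket{\phi_k}\!\bra{\phi_k}$, giving $\rho^{SA}=\sum_k p_k\,\psi_k\otimes\phi_k$ with $\sum_k p_k\psi_k=\rho^S$.

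\emph{Step 2 (weight perturbations).} Next perturb only the weights, $p_k\to p_k\pm\epsilon v_k$. The marginal is preserved iff $\sum_k v_k\psi_k=0$, i.e.\ $\sum_k v_k=0$ and $\sum_k v_k\mathbf s_k=0$, where $\mathbf s_k$ are the Bloch vectors of $\psi_k$. A nonzero such $v$ exists whenever the $\mathbf s_k$ are affinely dependent, so extremality forces them to be affinely independent. The resulting two states $\sum_k(p_k\pm\epsilon v_k)\psi_k\otimes\phi_k$ differ because distinct product pure states are linearly independent enough once duplicates are merged. This already bounds the number of terms by four, which is the Dubins/Carathéodory count for an affine section of codimension three.

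\emph{Step 3 (main obstacle: from four to two terms).} Four non-coplanar points on the Bloch sphere are affinely independent, so Step 2 alone does not give two terms. Here I would exploit that the constraint only sees the $S$-parts while the $A$-parts are free. Grouping terms, $\rho^{SA}=\sum_k p_k\psi_k\otimes\phi_k$ can be rewritten as $\sum_j \omega_j\otimes\ket{\chi_j}\!\bra{\chi_j}$, with the auxiliary in an orthonormal basis $\{\ket{\chi_j}\}$ plus coherences. The aim is to show that an extreme point must be a mixture of two product terms whose $S$-parts average to $\rho^S$. Concretely, I would try to decompose a three- or four-term state as a nontrivial convex combination of two states in $\mathcal S(\rho^S)$, each again separable with marginal $\rho^S$. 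To do this, split the weights $p_k$ along two sub-decompositions of $\rho^S$, using that $\rho^S$ lies in the interior of the simplex spanned by the $\mathbf s_k$ and that any point of that simplex lies on a chord of the sphere through it. If this splitting fails for general $\phi_k$, the fallback is to prove the weaker statement that the maximizer of the convex function $\Delta E_D$ (Lemma~\ref{lemma:convixityRDE}) can be taken among two-term states. This uses the explicit form of Eq.~\eqref{eq:EDn-classification}, whose value depends on $\rho^{SA}$ only through $(\mathbf r,T)$, together with a direct comparison over four-term versus two-term decompositions. That weaker statement is all that Theorem~\ref{th:separable} needs. I expect this reduction to be the delicate part of the proof.
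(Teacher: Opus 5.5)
\textbf{There is a genuine gap: Step 3 is not carried out, and it is the whole content of the lemma.}

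Your Steps 1 and 2 are sound as dimension counting. One repair is needed: the two perturbed states in Step 2 are distinct because of minimality, not because product states are ``linearly independent enough''. A relation $\sum_k w_k\,\psi_k\otimes\phi_k=0$ forces $\sum_k w_k=0$ (take the trace), which would let you delete a term. So in a minimal decomposition the product terms are linearly independent.

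Steps 1 and 2 only get you to three or four product terms with affinely independent $S$-Bloch vectors. The weight-splitting you sketch cannot finish the job. By your own Step 2, once the $\mathbf s_k$ are affinely independent there is no nonzero $v$ with $\sum_k v_k\psi_k=0$. Hence the only state in $\mathcal S(\rho^S)$ of the form $\sum_k q_k\,\psi_k\otimes\phi_k$ is $\rho^{SA}$ itself. Any nontrivial decomposition must therefore introduce new product vectors on both $S$ and $A$. Choosing a chord of the Bloch sphere through $\mathbf r$ only fixes the $S$-marginal; it says nothing about how the correlations with $A$ are reproduced. Your fallback, that $\Delta E_D$ is maximized on two-term states, is a different and weaker claim, and it is not proved either.

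\textbf{The missing idea is the one the paper uses.} For mixed, hence full-rank, $\rho^S$, consider the map
$\rho^{SA}\mapsto J=[(\rho^S)^{-1/2}\otimes\mathbb I]\,\rho^{SA}\,[(\rho^S)^{-1/2}\otimes\mathbb I]$.
It is an affine, separability-preserving bijection from $\mathcal S(\rho^S)$ onto the separable operators with $\mathrm{Tr}_A J=\mathbb I_S$. These are exactly the Choi operators of entanglement-breaking qubit channels, and extreme points correspond under the map.

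Your reduced states become $J=\sum_k M_k\otimes\phi_k$, with rank-one effects $M_k=p_k(\rho^S)^{-1/2}\psi_k(\rho^S)^{-1/2}$ summing to $\mathbb I$. The two-term form of the lemma is exactly the projective case, since two rank-one effects summing to $\mathbb I$ are orthogonal projectors.

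What you need is therefore Ruskai's qubit-specific theorem: every entanglement-breaking qubit channel is a convex combination of classical--quantum channels $X\mapsto\sum_i\langle e_i|X|e_i\rangle\tau_i$ with $\{|e_i\rangle\}$ orthonormal. Split each $\tau_i$ into pure states and pull back through the map. This gives the two-term form with $p_i=\langle\bar e_i|\rho^S|\bar e_i\rangle$ and $|\psi_i\rangle\propto(\rho^S)^{1/2}|\bar e_i\rangle$.

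Steps 1--2 are Carath\'eodory-type counting and cannot supply this input on their own; you must either invoke that result or reprove it. The pure-$\rho^S$ case is trivial: every extension is $\rho^S\otimes\tau$, and extremality forces $\tau$ to be pure.
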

\begin{proof}
Assume first that \(\rho^S\) is mixed. Since \(S\) is a qubit,
\(\rho^S\) is then full rank. For every separable state
\(\rho^{SA}\) with system marginal $\rho^S$, consider the following bijective mapping
\begin{eqnarray}
    J_{\rho^{SA}}
:=
\left[(\rho^S)^{-1/2}\otimes I_A\right]
\rho^{SA}
\left[(\rho^S)^{-1/2}\otimes I_A\right].
\end{eqnarray}
The operator $J_{\rho^{SA}}$ satisfies
\[
\operatorname{Tr}_A J_{\rho^{SA}}
=
(\rho^S)^{-1/2}
\operatorname{Tr}_A(\rho^{SA})
(\rho^S)^{-1/2}
=
\mathbb I_S.
\]
The map \(\rho^{SA}\mapsto J_{\rho^{SA}}\) is an affine bijection, with
inverse
\begin{eqnarray}
    \rho^{SA}
=
\left[(\rho^S)^{1/2}\otimes  I_A\right]
J_{\rho^{SA}}
\left[(\rho^S)^{1/2}\otimes I_A\right].
\label{eq:inverse}
\end{eqnarray}
Such a map has been extensively considered in the literature in different contexts ranging from quantum marginal problems \cite{verstraete2003normal,Kiukas_2017, haapasalo2021quantum} to conditional states \cite{leifer2007conditional, leifer2013formulation}.

Since this is an invertible local deterministic transformation,
\(\rho^{SA}\) is separable if and only if \(J_{\rho^{SA}}\) is separable.
Thus \(J_{\rho^{SA}}\) can be identified as the Choi operator of an entanglement-breaking
qubit channel. Note that every entanglement-breaking qubit channel belongs to
the convex hull of classical--quantum (CQ) channels
\cite{ruskai2003qubit}. A CQ channel has
the form
\[
\Phi(X)
=
\langle e_1|X|e_1\rangle\,\tau_1^A
+
\langle e_2|X|e_2\rangle\,\tau_2^A,
\]
where \(\{|e_1\rangle,|e_2\rangle\}\) is an orthonormal basis of the
input qubit and $\tau_{1(2)}^A$ being density matrices.
By decomposing each \(\tau_i^A\) into pure states $\ket{\phi_i}$, every such channel
is itself a convex combination of CQ channels whose
two output states are pure. It is therefore sufficient to consider
Choi operators of the form
\[
J
= \Phi \otimes I(\ket{00}+\ket{11}) = 
|\overline{e_1}\rangle\langle\overline{e_1}|
 \otimes|\phi_1\rangle\langle\phi_1|
+
|\overline{e_2}\rangle\langle\overline{e_2}|
 \otimes|\phi_2\rangle\langle\phi_2|,
\]
where the bar denotes complex conjugation in the basis used to define
the Choi isomorphism.
Now, by using the inverse transformation in Eq. \eqref{eq:inverse}, we arrive at the form of the extreme points of every separable state with system marginal $\rho^S$
\[
\rho^{SA}
=
\sum_{i=1}^{2}
(\rho^S)^{1/2}
|\overline{e_i}\rangle\langle\overline{e_i}|
(\rho^S)^{1/2}
\otimes
|\phi_i\rangle\langle\phi_i|.
\]
Next, we define and make the following identifications
\[
p_i
:=
\langle\overline{e_i}|\rho^S|\overline{e_i}\rangle,
\qquad
|\psi_i\rangle
:=
\frac{(\rho^S)^{1/2}|\overline{e_i}\rangle}{\sqrt{p_i}}.
\]
Then \(p_1+p_2=1\), and $\rho^{SA}
=
p_1|\psi_1\rangle\langle\psi_1|
 \otimes|\phi_1\rangle\langle\phi_1|
+
p_2|\psi_2\rangle\langle\psi_2|
 \otimes|\phi_2\rangle\langle\phi_2|,$
 with $p_1|\psi_1\rangle\langle\psi_1|
+
p_2|\psi_2\rangle\langle\psi_2|
=
\rho^S.$

If \(\rho^S\) is pure, every extension of \(\rho^S\) is of the form
\(\rho^S\otimes\tau^A\). Since \(F\) is convex, a maximizer over
\(\tau^A\) can be chosen pure, which corresponds to the above form
with \(p=1\).
\end{proof}

\subsection{Step 3: Maximization of daemonic gain over the extreme points}

Since the daemonic gain is convex, as demonstrated in Lemma \ref{lemma:convixityRDE}, and the feasible set of separable states with fixed system marginal is convex, at least one maximizer of the daemonic gain is an extreme point of the feasible set. Therefore, from Lemma \ref{lemma:extreme}, it is sufficient to only consider extreme points of the set of separable convex states,
\begin{align}
\rho^{SA}
=
p |\psi_1\rangle\langle\psi_1| \otimes |\phi_1\rangle\langle\phi_1|
+
(1-p)|\psi_2\rangle\langle\psi_2| \otimes |\phi_2\rangle\langle\phi_2|,
\label{eq: D_extreme_state}
\end{align}
with a fixed system marginal
\begin{align}
p|\psi_1\rangle\langle\psi_1|
+
(1-p)|\psi_2\rangle\langle\psi_2|
=
\rho^S.
\end{align}
We consider $0<s<1$ in the following, where
\begin{align}
\rho^S
=
\frac{1}{2}
\left(
\mathbb I_S+\mathbf{r}\cdot\boldsymbol{\sigma}
\right),
\qquad
s:=\|\mathbf{r}\|.
\end{align}
The cases $s=0,1$ follow by taking the corresponding limits.

Let the states $\vert{}\psi_i\rangle$ and $\vert{}\phi_i\rangle$, for $i \in \{1, 2\}$, be characterized by the Bloch vectors $\mathbf{x}_i$ and $\mathbf{a}_i$, respectively. Since all these states are pure,
$\|\mathbf{x}_1\|
=
\|\mathbf{x}_2\|
=
\|\mathbf{a}_1\|
=
\|\mathbf{a}_2\|
=
1$. Therefore, the fixed marginal condition gives
\begin{align}
\mathbf{r}
=
p\mathbf{x}_1+(1-p)\mathbf{x}_2.
\end{align}
For convenience, we define
\begin{align}
\mathbf{d}:=
p\mathbf{x}_1-(1-p)\mathbf{x}_2.
\label{eq: D_eta_d}
\end{align}
Then $\mathbf{r}+\mathbf{d}
=
(1+\eta)\mathbf{x}_1$, and $\mathbf{r}-\mathbf{d}
=
(1-\eta)\mathbf{x}_2,$ where $\eta=2p-1$. Using $\|\mathbf{x}_1\|=\|\mathbf{x}_2\|=1$, adding and subtracting the squared norms of the above relations gives
\begin{align}
\mathbf{r}\cdot\mathbf{d}
&=
\eta,
&
\|\mathbf{d}\|^2+s^2
&=
1+\eta^2.
\label{eq: D_rd_relations}
\end{align}
Moreover, the fixed marginal condition implies
\begin{align}
s^2-\eta^2
=
2p(1-p)
\left(
1+\mathbf{x}_1\cdot\mathbf{x}_2
\right)
\geq 0,
\end{align}
and therefore $|\eta|\leq s$.

We next parametrize the two auxiliary Bloch vectors as
\begin{align}
\mathbf{b}
:=
\frac{\mathbf{a}_1+\mathbf{a}_2}{2},
\qquad
\mathbf{c}
:=
\frac{\mathbf{a}_1-\mathbf{a}_2}{2}.
\label{eq: D_bc_def}
\end{align}
Since $\|\mathbf{a}_1\|=\|\mathbf{a}_2\|=1$, these vectors satisfy
\begin{align}
\mathbf{b}\cdot\mathbf{c}
&=
0,
&
\|\mathbf{b}\|^2+\|\mathbf{c}\|^2
&=
1.
\label{eq: D_bc_relations}
\end{align}
The correlation matrix of the state in Eq.~\eqref{eq: D_extreme_state} is then
\begin{align}
T
&=
p\mathbf{x}_1\mathbf{a}_1^T
+
(1-p)\mathbf{x}_2\mathbf{a}_2^T
\nonumber\\
&=
\mathbf{r}\mathbf{b}^T
+
\mathbf{d}\mathbf{c}^T.
\label{eq: D_correlation_matrix}
\end{align}

Following Appendix~\ref{app:proof t4}, consider the projective measurement
\begin{align}
\Pi_{\pm}(\mathbf{n})
=
\frac{1}{2}
\left(
I\pm\mathbf{n}\cdot\boldsymbol{\sigma}
\right),
\qquad
\|\mathbf{n}\|=1,
\end{align}
on the auxiliary. For the Hamiltonian $H=|1\rangle\langle 1|$, the total daemonic ergotropy obtained from the two post-measurement outcomes is given by (see Eq.~\eqref{eq:EDn-classification}) 
\begin{align}
E_D^{\rho^{SA}}(\mathbf{n})
=
\frac{1}{4}
\left(
\|\mathbf{r}+T\mathbf{n}\|
+
\|\mathbf{r}-T\mathbf{n}\|
-
2r_z
\right).
\end{align}
Since
\begin{align}
\operatorname{Erg}(\rho^S)
=
\frac{1}{2}(s-r_z),
\end{align}
the corresponding daemonic gain is
\begin{align}
\Delta E_D^{\rho^{SA}}(\mathbf{n})
=
\frac{1}{4}
\left[
F(\mathbf{n})-2s
\right],
\qquad
F(\mathbf{n})
:=
\|\mathbf{r}+T\mathbf{n}\|
+
\|\mathbf{r}-T\mathbf{n}\|.
\label{eq: D_gain_F}
\end{align}
Importantly, the two terms in $F(\mathbf{n})$ correspond to the two measurement outcomes $\Pi_+$ and $\Pi_-$, respectively. Since the system marginal is fixed, maximizing the randomized daemonic gain is therefore equivalent to maximizing the Haar average value of $F(\mathbf{n})$.

Using Eq.~\eqref{eq: D_correlation_matrix}, let
$\beta:=\mathbf{b}\cdot\mathbf{n},
\gamma:=\mathbf{c}\cdot\mathbf{n}$. We then have
\begin{align}
T\mathbf{n}
=
\beta\mathbf{r}+\gamma\mathbf{d},
\end{align}
and consequently
\begin{align}
\|\mathbf{r}+T\mathbf{n}\|^2
&=
s^2(1+\beta)^2
+
2\eta\gamma(1+\beta)
+
(1+\eta^2-s^2)\gamma^2,
\label{eq: D_Rplus}
\\
\|\mathbf{r}-T\mathbf{n}\|^2
&=
s^2(1-\beta)^2
-
2\eta\gamma(1-\beta)
+
(1+\eta^2-s^2)\gamma^2.
\label{eq: D_Rminus}
\end{align}

\subsubsection{Optimality of the mixing probability $p=1/2$}
Let us first optimize over the mixing probability $p$. Since $\mathbf{b}\perp\mathbf{c}$, without loss of generality we can choose
\begin{align}
\mathbf{b}=B\hat{\mathbf{e}}_z,
\qquad
\mathbf{c}=C\hat{\mathbf{e}}_x,
\qquad
B^2+C^2=1,
\end{align}
and write a generic measurement direction as
\begin{align}
\mathbf{n}
=
x\hat{\mathbf{e}}_x
+
y\hat{\mathbf{e}}_y
+
z\hat{\mathbf{e}}_z,
\qquad
x^2+y^2+z^2=1.
\end{align}
Hence, we have
$\beta=Bz, \gamma=Cx$. The Haar measure is invariant under the reflection $x\mapsto-x$, which leaves $\beta$ unchanged and sends $\gamma\mapsto-\gamma$. We may therefore pair every measurement direction $(\beta,\gamma)$ with $(\beta,-\gamma)$.

To see the consequence of this pairing, let
\begin{align}
L_{\pm}:=1\pm\beta,
\qquad
A_{\pm}
:=
s^2L_{\pm}^2+(1-s^2)\gamma^2.
\end{align}
For either measurement outcome, Eqs.~\eqref{eq: D_Rplus} and \eqref{eq: D_Rminus} contain a pair of the form
\begin{align}
\sqrt{A_{\pm}+\eta^2\gamma^2+2L_{\pm}\eta\gamma}
+
\sqrt{A_{\pm}+\eta^2\gamma^2-2L_{\pm}\eta\gamma}.
\label{eq: D_pair}
\end{align}
{Setting $t=\eta\gamma$, we can write
\begin{align}
&
\left[
\sqrt{A_{\pm}+t^2+2L_{\pm}t}
+
\sqrt{A_{\pm}+t^2-2L_{\pm}t}
\right]^2
\nonumber\\
&=
2(A_{\pm}+t^2)
+
2\sqrt{(A_{\pm}+t^2)^2-4L_{\pm}^2t^2}.
\label{eq: D_pair_bound}
\end{align}
The second term in the above equation can be upper bounded by
\begin{align}
&\left[
(A_{\pm}+t^2)^2-4L_{\pm}^2t^2
\right]\leq (A_{\pm}-t^2)^2,
\end{align}
which follows from 
\begin{align}
&(A_{\pm}-t^2)^2
-
\left[
(A_{\pm}+t^2)^2-4L_{\pm}^2t^2
\right]
\nonumber\\
&=
4t^2(L_{\pm}^2-A_{\pm})\nonumber\\&=(1-s^2)(L_\pm^2-\gamma^2)\nonumber\\&=(1-s^2)[(B\pm z)^2+C^2y^2]\geq 0.
\end{align}
Finally, since, $(A_\pm-t^2)=s^2(L_{\pm}^2-\gamma^2)
+
(1-\eta^2)\gamma^2
\geq 0$, we have 
\begin{align}
\sqrt{A_{\pm}+\eta^2\gamma^2+2L_{\pm}\eta\gamma}
+
\sqrt{A_{\pm}+\eta^2\gamma^2-2L_{\pm}\eta\gamma}
\leq
2\sqrt{A_{\pm}}.
\label{eq: D_eta_bound}
\end{align}
By virtue of Eq.~\eqref{eq: D_eta_bound} and then averaging each pair $(\beta,\gamma)$ and $(\beta,-\gamma)$ over the Haar measure gives
\begin{align}
\overline{F}_{\eta}
\leq
\overline{F}_{\eta=0},
\end{align}
where $\overline{F}_\eta$ is the Haar averaged value of $F(\mathbf n)$ for a fixed value of $\eta$. Hence an optimal extreme point can always be chosen with
\begin{align}
\eta=0,
\qquad\text{or equivalently}\qquad
p=\frac{1}{2}.
\label{eq: D_p_half}
\end{align}

\subsubsection{Finding optimal Bloch vectors of auxiliary system}
For $\eta=0$, Eq.~\eqref{eq: D_rd_relations} reduces to
\begin{align}
\mathbf{r}\cdot\mathbf{d}
&=
0,
&
\|\mathbf{d}\|^2
&=
1-s^2,
\label{eq: D_d_opt}
\end{align}
which provide 
\begin{align}
F(\mathbf{n})
=
&
\sqrt{s^2(1+\beta)^2+(1-s^2)\gamma^2}+
\sqrt{s^2(1-\beta)^2+(1-s^2)\gamma^2}.
\label{eq: D_F_eta_zero}
\end{align}
Defining the concave function $f(q):=\sqrt{s^2+(1-s^2)q}$, we can rewrite Eq.~\eqref{eq: D_F_eta_zero} as 
\begin{align}
\frac{F(\mathbf{n})}{2}
&=
\frac{1+\beta}{2}
f\left(
\frac{\gamma^2}{(1+\beta)^2}
\right)
+
\frac{1-\beta}{2}
f\left(
\frac{\gamma^2}{(1-\beta)^2}
\right),\nonumber\\&\leq
f\left[
\frac{1+\beta}{2}
\frac{\gamma^2}{(1+\beta)^2}
+
\frac{1-\beta}{2}
\frac{\gamma^2}{(1-\beta)^2}
\right],
\nonumber\\
&=
\sqrt{
s^2
+
(1-s^2)
\frac{\gamma^2}{1-\beta^2}
},\nonumber\\&\leq\sqrt{s^2+(1-s^2)x^2}.
\label{eq: D_Jensen_bound}
\end{align}
The inequality in the last line follows from $\gamma^2/(1-\beta^2)\leq x^2$ since $1-B^2z^2-C^2\geq 0$. The saturation of the upper bound of $F(\mathbf n)$ requires $B=\|\mathbf b \|=0$ which from Eq.~\eqref{eq: D_bc_def} implies $\mathbf{a}_2=-\mathbf a_1$. We may therefore choose 
\begin{align}
|\phi_1\rangle=|0\rangle,
\qquad
|\phi_2\rangle=|1\rangle.
\end{align}
Combining this with $p=1/2$ the maximizing extreme point in the set of separable states can be written as
\begin{align}
\rho_*^{SA}
=
\frac{1}{2}
|\psi_+\rangle\langle\psi_+|
\otimes
|0\rangle\langle 0|
+
\frac{1}{2}
|\psi_-\rangle\langle\psi_-|
\otimes
|1\rangle\langle 1|,
\label{eq: D_opt_state}
\end{align}
where $\frac{1}{2}
|\psi_+\rangle\langle\psi_+|
+
\frac{1}{2}
|\psi_-\rangle\langle\psi_-|
=
\rho^S.$

\subsubsection{Maximum value of gain in RDE framework}
For Haar random qubit projective measurement, $x$ is uniformly distributed over $[-1,1]$. Following  Eq.~\eqref{eq: D_gain_F}, for the state in Eq.~\eqref{eq: D_opt_state} we calculate
\begin{align}
\Delta E_D^{\mathrm{sep,max}}(\rho^S)
&=
\frac{1}{2}
\left[
\int_0^1
dx\,
\sqrt{s^2+(1-s^2)x^2}
-s
\right]
\nonumber\\
&=
\frac{1}{2}
\left[
\frac{1}{2}
+
\frac{s^2}{2\sqrt{1-s^2}}
\operatorname{arcsinh}
\left(
\frac{\sqrt{1-s^2}}{s}
\right)
-s
\right].
\label{eq: D_sep_max_final}
\end{align}
This is precisely the value stated in Eq.~\eqref{eq:mrde}. The cases $s=0$ and $s=1$ are obtained by taking limits $0^+$ or $1^-$ of this expression. Therefore, the maximal randomized daemonic gain among all two-qubit separable states with fixed system marginal $\rho^S$ is attained by the quantum-classical state in Eq.~\eqref{eq: D_opt_state}. This completes the proof.
}

\section{Proof of Theorem 7}
\label{app:prooft6}
The proof of the gain part follows directly from the proof of Theorem 1.  For the polar cap of latitudinal extent $\theta_0$, from Eq. \eqref{eq:ergtheta}, the randomized daemonic ergotropy can be computed as
\begin{eqnarray}
    E^{\xi^{SA}}_D[\theta_0] &=& \frac{1}{(1-\cos \theta_0)}\int_{0}^{\theta_0} d\theta ~\sin \theta E^{\xi^{SA}}_D(\theta) \nonumber \\
    &\leq& \frac{1}{(1-\cos \theta_0)} \Bigg[ 2 {\tt Erg}(\xi^S) \Bigg( \int_{0}^{\theta_0} d\theta \sin \theta \sin^2 \frac{\theta}{2} \Bigg) + \text{Tr}(\xi^SH) \Bigg( \int_{0}^{\theta_0} d\theta ~\sin \theta \cos \theta \Bigg) \Bigg]  \nonumber \\
         &=&   \cos^2 \frac{\theta_0}{2} \text{Tr} (\xi^SH) + \sin^2 \frac{\theta_0}{2} {\tt Erg}(\xi^S).
\end{eqnarray}
Now, $\Delta E^{\xi^{SA}}_D[\theta_0]$ is simply obtained by subtracting ${\tt Erg}(\xi^S)$ from $E^{\xi^{SA}}_D$ which completes the proof.

The bound on the fluctuations again follows from the proof of Theorem 2 with the identification that the polar cap induces a smaller interval for consideration $u \in [\cos \theta_0,1].$ 
Motivated by the proof of Theorem 2, we define
\begin{eqnarray}
    \ell(u) = \frac{2mu}{1+\cos\theta_0}, ~\text{where} ~m =\frac{1}{1-\cos \theta_0}\int_{\cos \theta_0}^{1} ~du ~\ell(u) = \Delta E_D^{\xi^{SA}}[\theta_0].
\end{eqnarray}
It follows from Lemma \ref{lemma:gainprop} that \(g\) is convex and \(g(0)=0\), the ratio \(g(u)/u\) is nondecreasing for
\(u>0\). Therefore $ g(c)
    \le
    \frac{2cm}{1+c}
    =
    \ell(c).$ Now, define \(h(u):=g(u)-\ell(u)\). Then \(h\) is convex on \([c,1]\), has zero mean on
\([c,1]\), and satisfies \(h(c)\le 0\). Hence \(h\) changes sign at most once, from
negative to positive. Since both \(g\) and \(\ell\) are nondecreasing, so is
\(g+\ell\), and therefore
\begin{equation}
    \int_{\cos \theta_0}^1 h(u)\bigl(g(u)+\ell(u)\bigr)\,du \ge 0 ~\implies  \int_{\cos \theta_0}^1 g(u)^2\,du \ge \int_{\cos \theta_0}^1 \ell(u)^2\,du.
\end{equation}
Now, we have
\begin{eqnarray}
    \frac{1}{1-\cos \theta_0}\int_{\cos \theta_0}^1 \ell(u)^2\,du
    &=&
    \frac{1}{1-\cos \theta_0}\left(\frac{2m}{1+\cos \theta_0}\right)^2 \int_{\cos \theta_0}^1 u^2\,du =
    \frac{4m^2}{3}\cdot \frac{1+\cos \theta_0+\cos^2 \theta_0}{(1+\cos \theta_0)^2}.
\end{eqnarray}
Finally, combining everything and performing elementary trigonometric simplifications, we get
\begin{eqnarray}
    \bigl(\mathcal F_{D,\theta_0}^{\xi^{SA}}\bigr)^2
    &\ge& 
    m^2
    \left[
        \frac{4(1+\cos \theta_0+\cos^2 \theta_0)}{3(1+\cos \theta_0)^2}-1
    \right] =
    \frac{(1-\cos \theta_0)^2}{3(1+\cos \theta_0)^2}\,m^2.
\end{eqnarray}
Substituting \(m=\Delta E_{D}^{\xi^{SA}}[\theta_0]\) completes the proof.

\end{document}